\documentclass[openbib,12pt,notitlepage]{article}
\usepackage{eurosym}
\usepackage{amsfonts}
\usepackage{amsmath}
\usepackage{amssymb}
\usepackage{graphicx}
\usepackage{color}
\usepackage{natbib}
\usepackage{newtxtext}
\usepackage{newtxmath}
\usepackage[cal=cm]{mathalfa}
\usepackage{footmisc}
\usepackage{caption}
\usepackage{comment}
\usepackage[colorlinks,citecolor=blue]{hyperref}

\newtheorem{theo}{Theorem}
\newtheorem{prop}{Proposition}
\newtheorem{assum}{Assumption}
\newtheorem{define}{Definition}
\newtheorem{remark}{Remark}

\newtheorem{lemma}{Lemma}
\newtheorem{eg}{Example}

\newtheorem{cor}{Corollary}

\newenvironment{proof}[1][Proof]{\noindent \textbf{#1\ } }{\  \rule{0.5em}{0.5em}}

\providecommand{\func}[1]{\operatorname{#1}}

\makeindex
\input epsf

\begin{document}

\title{\textbf{Contracting with Imperfect Commitment: Minimal Canonical
Contracts\thanks{%
Han gratefully acknowledges financial support from Social Sciences and
Humanities Research Council of Canada.}}}
\author{Seungjin Han\thanks{
Department of Economics, McMaster University, Canada, hansj@mcmaster.ca}
\and Siyang Xiong\thanks{
Department of Economics, University of California, Riverside, United States,
siyang.xiong@ucr.edu}}
\maketitle

\begin{abstract}
Contract theory typically assumes full commitment by the principal, but many
contracts fix some payoff-relevant decisions while leaving others
discretionary. We ask when imperfect commitment is equivalent to full
commitment. For contracts in which a committed baseline is followed by a
bounded discretionary adjustment, as in commercial-insurance schedule rating
or civil penalties, bounded discretion is allocation-neutral. When
contractible and non-contractible decisions are distinct instruments, the
equivalence fails. We characterize optimal single-principal contracts and
show that simple-offer equilibria are robust under competing principals. The
methodological contribution is an extended taxation principle that makes
these analyses more tractable.
\end{abstract}


\section{Introduction}

Contract theory typically assumes full commitment by the principal: every
payoff-relevant decision is fixed in advance, often as a function of the
agent's report. In many real contracts, commitment is imperfect. Parties fix
some payoff-relevant decisions in advance but retain discretion over others.
A firm may commit to a base salary, but not to every dimension of the job:
workload, pace, task assignment, scheduling demands, or the care with which
work is performed. A buyer may specify price and delivery date, while
leaving quality, effort, or adaptation to be determined later. A regulator
may set a formal standard while retaining discretion over monitoring or
enforcement. In each case, some decisions are fixed ex ante, while other
payoff-relevant choices are made ex post, often after private information
has been revealed.

This paper asks when imperfect commitment is equivalent to full commitment.
The full-commitment benchmark may be a legitimate shortcut, but only if the
corresponding limited-commitment environment supports the same allocations.

A common contractual form fixes a baseline in advance while giving the
principal discretion to choose a bounded adjustment later. A compensation
contract specifies a base salary in advance and gives the firm discretion to
choose an end-of-year bonus within a contractually bounded range. A
commercial insurer files a manual premium computed from codifiable rating
factors such as class, location, coverage, and exposure, leaving an
underwriter with discretion to apply a bounded schedule credit or debit for
risk characteristics that are harder to write down.\footnote{%
For example, \href{https://app.leg.wa.gov/wac/default.aspx?cite=284-24-100}{%
Washington Administrative Code 284-24-100} allows schedule rating for
several commercial lines, limits the schedule credit or debit to 25 percent,
and requires the adjustment to be based on specific information supporting
the rating decision. \href{https://codes.ohio.gov/ohio-administrative-code/rule-3901-1-22%
}{Ohio Administrative Code 3901-1-22} similarly describes schedule rating as
judgment credits or debits for risk characteristics not otherwise reflected
in the basic premium, subject to a 25 percent maximum debit and credit.} An
inspector at the Occupational Safety and Health Administration (OSHA)
calculates a gravity-based penalty from relatively observable features of a
workplace safety and health violation and may then apply bounded adjustments
for employer size, history, good-faith safety efforts, and prompt abatement.%
\footnote{%
OSHA's Field Operations Manual, Chapter 6, describes gravity as the primary
basis for calculating the basic penalty. It then permits bounded
adjustments, including a history adjustment, a good-faith reduction, a size
reduction, and a quick-fix reduction, with documentation and consistency
requirements for the use of discretion. See Occupational Safety and Health
Administration, Field Operations Manual, Chapter 6, \href{https://www.osha.gov/fom/chapter-6%
}{https://www.osha.gov/fom/chapter-6}.}

The first result is that, in this class of contracts, bounded discretion is
allocation-neutral: every final-action allocation that can be implemented
under full commitment can also be implemented when the principal retains
bounded discretion.

The intuition is that the principal need not commit directly to the later
adjustment in order to reproduce the full-commitment outcome. The baseline
can position the range of feasible final actions around the intended
outcome. If the target lies below the principal's posterior ideal, the
baseline places the target at the upper edge of the adjustment range; if it
lies above the posterior ideal, the baseline places the target at the lower
edge. The principal's later best response then selects the same final action
that full commitment would have selected. Because final actions following
messages are unchanged, the agent's incentives are unchanged as well.

Is the equivalence universal? It is not. The argument behind the first
result relies on a specific structural feature of bounded-adjustment
contracts: the contractible baseline and the discretionary adjustment
combine into one final action, and both parties' payoffs depend on that
final action rather than on its decomposition. The adjustment may be
additive, as in a fixed dollar credit or debit, or proportional, as in a
percentage adjustment to a baseline; in either case, the baseline can be
chosen so that the principal's later best response implements the intended
final action. This feature fails in many economic relationships. A worker's
later effort is not a revision of the wage. Enforcement intensity is not a
revision of a formal standard. Product quality is not a revision of price.
When the contractible and discretionary choices are economically distinct
instruments, full-commitment analysis need not remain a valid guide, and
optimal contracts must be characterized directly.

Both halves of this analysis - the equivalence in bounded-adjustment
contracts and the failure of equivalence in distinct-instrument contracts -
rest on the same representation. Imperfect commitment changes the role of
communication. Under full commitment, a message can be treated as selecting
a final action. Under imperfect commitment, the same message also shapes the
beliefs that govern the principal's later discretionary choice. Two messages
that lead to the same contractible action may therefore induce different
continuation behavior. Ordinary menu representations are too coarse to
capture this. The paper develops an extended taxation principle: mechanisms
can be represented as menus of contractible baselines together with
non-binding recommendations for later discretionary choices. The
recommendations are not enforceable; they record the continuation behavior,
and the beliefs, induced by the original mechanism. This representation
preserves the tractability of menu analysis while keeping the
belief-mediated effects of communication that are central under imperfect
commitment.

For the distinct-instrument case, the paper analyzes two applications. In a
single-principal model, it characterizes optimal contracts when some terms
are fixed in advance but later productive decisions are not, and the agent
retains an exit option after observing the discretionary choice. In a
competing-principals model, equilibria supported by simple contract offers
remain robust to arbitrary mechanism deviations, in the sense that no
deviation is profitable across all continuation equilibria following it.
Limited commitment thus need not destroy tractability, but it changes which
contracts are optimal and which are not.

\paragraph{Related Literature}

This paper contributes to the literature on contracting under imperfect
commitment. Under imperfect commitment, a mechanism does not merely select
contractible actions; it also shapes the beliefs that govern later
discretionary choices. Standard revelation-principle and menu-representation
arguments need not apply.

Early work derived optimal contracts in specific environments, including the
durable-goods and regulation models of \cite{ht1988} and \cite{lt1990}. \cite%
{bs2001} provide the key breakthrough by recovering a canonical
direct-mechanism representation under finite type spaces. Subsequent work
extends this analysis: \cite{vs2006} to durable-goods models with continuous
types and short-term contracts, \cite{km2008} to \cite{vcjs1982} settings
with report-contingent transfers, and \cite{bk2012,bk2017} to exit options
and mediation.

The present paper shifts from the revelation-principle route to a
taxation-principle route. It develops an extended taxation principle:
without loss of generality, the principal can restrict attention to menus of
contractible actions augmented with non-binding recommendations over
non-contractible actions. The recommendations are not enforceable, but they
record the continuation behavior and beliefs that each message induces,
which a standard menu of contractible actions alone discards. This
methodological shift broadens the class of environments that can be handled.
Existing tractable approaches rely on different kinds of structure: finite
type spaces in \cite{bs2001}, quasilinearity in \cite{vs2006}, or additional
commitment through information-design devices in \cite{ds2021}. The extended
taxation-principle approach instead accommodates continuous type spaces,
non-quasilinear preferences, and settings in which the principal lacks the
additional commitment power assumed in those approaches. A related but
distinct literature studies limited commitment in dynamic mechanism design,
where the principal commits to the current contract but not to future ones (%
\citealp{liu2019,ds2024}). That strand addresses commitment across time; the
present paper addresses commitment across contractible and non-contractible
dimensions within a single interaction.

The baseline-and-revision interpretation also relates the paper to
delegation. In \cite{mmts1991} and \cite{am2008}, the principal controls the
feasible set from which an informed party chooses under full commitment; in
our paper, the principal controls a baseline, and the later choice is a
bounded revision. The equivalence result gives a limited-commitment reading
of these classical delegation results.

Applied work shows that limited commitment is economically important across
a range of applications: \cite{gim2013} on procurement renegotiation, \cite%
{am2016} on public utility provision, \cite{bm2007} on renegotiation-proof
commitment contracts, and \cite{gi2021} on unemployment-insurance design
when governments cannot fully commit. The contribution here is
complementary: the paper provides a general tool rather than focusing on a
particular application.

Finally, the paper contributes to a broader methodological question: how can
one establish canonical contract spaces? Two major approaches are the
revelation principle and the taxation principle. We relate to the latter,
developed by \cite{guesnerie1981,guesnerie1995} and \cite{rochet1986}. \cite%
{mp2001} and \cite{ms2002} identify one important advantage of the taxation
principle over the revelation principle: in common-agency problems, the
taxation principle applies even when the revelation principle generally does
not. We identify another advantage. Under imperfect commitment, a
taxation-principle approach substantially expands tractability relative to
revelation-principle methods, both in single-principal and in common-agency
environments. In particular, to the best of our knowledge, this paper
provides the first tractable and general canonical-contract treatment of
imperfect commitment in common-agency problems.

The paper proceeds as follows. Section \ref{sec:examples} presents the
motivating institutional examples. Section \ref{sec:revisable} studies the
clean revisable-action case and establishes the equivalence with full
commitment. Section \ref{sec:model} introduces the general model that
encompasses this case and allows contractible and discretionary decisions to
be distinct instruments. Section \ref{sec:canonical} develops the extended
taxation principle, and Section \ref{sec:labor_apps} applies the framework
to distinct discretionary actions and exit.

\section{Examples\label{sec:examples}}

The introduction emphasized that imperfect commitment matters because many
contracts fix only part of the relevant decision while leaving another part
to be chosen later. The following examples illustrate the scope of the
framework. The first is the clean revisable-action case emphasized in the
introduction: a baseline is fixed from hard information and later revised
within a bounded range using softer information. The second studies optimal
contracting when a contractible employment term can be fixed in advance but
later productive decisions are not. The third extends the logic to
common-agency environments, where a single agent interacts with multiple
principals. In each case, the central question is whether full-commitment
outcomes can still be implemented, and, when they cannot, what form optimal
or equilibrium contracts take.

\subsection{Example 1: Schedule Rating, Civil Penalties, and Revisable
Policy Choices\label{sec:Example1}}

Commercial insurance schedule rating is a direct instance of the
revisable-action structure. The filed manual premium is a committed
baseline, determined from codifiable rating factors, while the underwriter's
schedule credit or debit is a bounded discretionary revision for risk
characteristics not fully priced in the manual rate. If the adjustment is a
fixed amount, the final premium is $z=x+y$; if it is a percentage of the
manual premium, the final premium is $z=x\eta$, or equivalently $z=x(1+y)$,
with the adjustment factor restricted to a bounded interval. OSHA
civil-penalty policy has the same form: the gravity-based penalty is the
baseline, and later adjustments for size, history, good faith, and prompt
abatement are bounded revisions, often proportional reductions from that
baseline.

In this mapping, the agent's message is the underwriting submission,
inspection record, or employer evidence that contains both codifiable items
and judgmental information. The hard part of the message fixes the baseline;
the soft part shapes the posterior used for the bounded revision. The
revision is therefore not a response to a new demand shock, but a later use
of information that was too judgmental to hardwire fully into the baseline.

Section \ref{sec:revisable} states the model additively for notational
economy. The proportional case uses the same endpoint logic: with positive
baselines and $0<\underline{\eta }\leq \eta \leq \bar{\eta}$, the feasible
final-action interval is $[\underline{\eta }x,\bar{\eta}x]$.

Policy delegation provides a textbook version of the same idea. In \cite%
{am2008}, a legislature, the principal, and a committee, the agent, choose a
policy $z\in \mathbb{R}$. Their payoffs are
\begin{equation*}
v(z,\theta )=-\left( \frac{z}{\theta }-1\right) ^{2}\quad \text{and}\quad
u(z,\theta )=-\left( \frac{z}{\theta }-1-b\right) ^{2},
\end{equation*}%
where $b\in \mathbb{R}$ is the committee's bias and $\theta $ is privately
observed by the committee. The legislature has a prior belief over $\theta $%
, for example a truncated normal distribution on a compact support.

The standard formulation assumes commitment to the policy instrument. With
revisable actions, the legislature instead commits to a baseline directive $%
x $ after the committee's report and later chooses a bounded implementation
revision $y$, so
\begin{equation*}
z=x+y,\qquad y\in[-\alpha,\alpha].
\end{equation*}

The question is whether the legislature can attain the full-commitment
allocation when only $x$ is contractible and $y$ remains discretionary. The
equivalence result in Section \ref{sec:revisable} identifies conditions
under which the contractible component can organize communication and induce
continuation behavior that replicates the full-commitment outcome.

\subsection{Example 2: Labor Contracting with Non-Contractible Work Intensity%
}

\label{sec:motivating_example}

The second example illustrates optimal contracting when some employment
terms can be fixed in advance but later productive decisions cannot. A firm,
the principal, considers hiring a worker, the agent. The firm can commit to
a contractible employment term $x \in \mathbb{R}$, while work intensity or
speed $y \in \mathbb{R}_{+}$ is non-contractible and chosen after the
contract is accepted.

If the worker is employed by the firm, payoffs are
\begin{equation*}
v(x,y,\theta)=y\theta-x^2 \quad \text{and} \quad u(x,y,\theta)=\frac{%
x\theta-y^2}{\sqrt{\theta}},
\end{equation*}
where $\theta$ is the worker's privately observed type. The firm has a prior
belief over $\theta$, and both parties receive zero from their outside
options.

This example captures a common feature of employment relationships. A firm
can typically commit to a compensation term, but it often cannot fully
commit to the eventual intensity of the job. Workload, pace, task
assignment, scheduling demands, travel requirements, and other job
attributes are frequently determined only after the worker has been hired.
As in \cite{bk2012,bk2017}, the worker has an exit option: after accepting a
contract and observing the firm's choice of the non-contractible component,
she may exit and obtain her reservation payoff.

The key question is how the firm should design the contract when it cannot
commit to the later productive decision. Unlike in quasilinear environments
where the contractible component is interpreted purely as a monetary
transfer, here $x$ represents a broader contractible employment term, and
the worker's utility need not be quasilinear. The framework in Subsection %
\ref{sec:appI} characterizes the optimal limited-commitment contract and
shows how the contractible component, the discretionary action, and the
worker's exit option jointly shape the solution.

\subsection{Example 3: Labor Contracting with a Common Agent}

\label{sec:Example3}

The third example illustrates imperfect commitment in a common-agency
environment. A worker, such as a consultant, may contract with two firms.
Firm $j \in \{1,2\}$ chooses a contractible support or service term $x_j$
and a non-contractible work-intensity component $y_j$. The firms' payoffs
are
\begin{equation*}
v_j=(1+\beta x_{-j})y_j\theta-x_j^2,\qquad j=1,2,
\end{equation*}
where $\beta>0$. The worker's payoff is
\begin{equation*}
u=u_1+u_2 = \frac{x_1\theta-y_1^2}{\sqrt{\theta}} + \frac{x_2\theta-y_2^2}{%
\sqrt{\theta}},
\end{equation*}
where $\theta$ is privately observed by the worker and follows a commonly
known prior distribution.

This example captures settings in which a common agent provides services to
multiple principals and the principals' payoffs are interdependent. A
consultant, contractor, platform, or expert may allocate attention across
several clients, and one principal's contractible support may affect the
productivity of work performed for another.

The question is which equilibrium allocations are robust when each principal
may offer arbitrary mechanisms rather than just simple contract offers.
Under full commitment, the taxation principle provides a canonical menu
representation in common-agency environments. With imperfect commitment,
ordinary menus are too coarse because non-contractible actions are chosen
after communication and depend on the beliefs induced by the mechanism. The
extended taxation principle that we will develop provides the appropriate
representation by adding non-binding recommendations over discretionary
actions. As shown in Subsection \ref{sec:appII}, this allows us to identify
conditions under which equilibria supported by simple contract offers
survive deviations to more complex mechanisms.

\section{The Equivalence with Revisable Actions}

\label{sec:revisable}

We begin with the paper's cleanest positive result: in a class of
revisable-action environments, limited commitment changes implementation but
not the equilibrium allocations over final actions. The principal commits to
a baseline action and later revises it within a bounded range. Because the
baseline can be chosen to position the feasible set of final actions around
the intended outcome, the full-commitment allocation set remains valid. This
equivalence gives a limited-commitment interpretation of classical
delegation results such as \cite{mmts1991} and \cite{am2008}.

\paragraph{Motivating examples.}

The commercial-insurance schedule rating and OSHA examples in Section \ref%
{sec:Example1} have this structure. A manual premium or gravity-based
penalty is the committed baseline, and a later credit, debit, or penalty
adjustment is a discretionary revision within a bounded range. Public-policy
delegation provides another example: a legislature commits to a baseline
directive, while a committee or regulator later revises implementation
within a bounded range.

\paragraph{Model.}

There are two players: a sender (agent) and a receiver (principal). The
sender privately observes a type $\theta \in \Theta $, where $\Theta $ is an
interval. The receiver can commit to a contractible action $x\in \mathbb{R}$
conditional on the sender's message. After observing the message, the
receiver may revise this action by choosing
\begin{equation*}
y\in \lbrack -\alpha ,\alpha ],
\end{equation*}%
where $\alpha \geq 0$ is exogenous. The final action is therefore
\begin{equation*}
z=x+y.
\end{equation*}

The sender's and receiver's payoffs are
\begin{equation*}
u(x,y,\theta)=u(z,\theta), \qquad v(x,y,\theta)=v(z,\theta).
\end{equation*}
Thus both players care only about the final action $z$, not about how it is
decomposed into the contractible baseline $x$ and the later revision $y$.

The parameter $\alpha$ measures the degree of discretion. When $\alpha=0$,
the receiver cannot revise the contractible action, so the model reduces to
a full-commitment delegation problem. Larger values of $\alpha$ give the
receiver more ex post discretion. If the contractible baseline were absent,
this would move the problem toward cheap talk. The point of the result below
is that a contractible baseline, combined with communication, can preserve
the full-commitment allocation set even when the receiver retains bounded
discretion.

\paragraph{Equilibrium and final-action allocations.}

To state the result, it is enough to describe equilibrium informally. After
observing his type, the sender sends a message. The receiver then chooses a
baseline as a function of that message and, after observing the same
message, chooses a bounded revision. An equilibrium therefore specifies
messaging behavior, posterior beliefs after messages, and baseline and
revision choices that are mutually optimal. These equilibrium objects
induce, for each type $\theta$, a distribution over final actions $z=x+y$.

Let $\Gamma (\alpha )$ denote the set of type-contingent distributions over
final actions that can arise in this way when the revision is constrained to
lie in $[-\alpha ,\alpha ]$. Deterministic allocations are the special case
in which each type induces a degenerate distribution on a single final
action. Sections \ref{sec:model} and \ref{sec:canonical} later provide the
fully general formal definitions and show that the same allocations can be
represented in the canonical menu-with-recommendations contract space.

\paragraph{Assumptions and interpretation.}

For every posterior belief $p\in\Delta(\Theta)$, let the receiver's interim
expected payoff be
\begin{equation*}
V(z\mid p):=\int_\Theta v(z,\theta)\,p(d\theta).
\end{equation*}
Assume that $v(\cdot,\theta)$ is strictly concave for every $\theta\in\Theta$%
. Also assume that, for every posterior belief $p\in\Delta(\Theta)$, $%
V(\cdot\mid p)$ admits a maximizer
\begin{equation*}
r(p)\in \arg\max_{z\in\mathbb{R}} V(z\mid p),
\end{equation*}
and that the induced rule $p\mapsto r(p)$ is measurable. Since $V(\cdot\mid
p)$ is strictly concave, $r(p)$ is unique. The number $r(p)$ is the
receiver's posterior ideal final action.

\paragraph{Main result.}

The statement uses the admissible mechanisms of the general model; Appendix %
\ref{sec:proof_appIII} records the minor regularity convention needed for
the transformations in the proof.

\begin{prop}
\label{prop:revisable_equivalence_main}Suppose that $v(\cdot ,\theta )$ is
strictly concave for every $\theta \in \Theta $ and that, for every
posterior belief $p\in \Delta (\Theta )$, $V(\cdot \mid p)$ admits a
maximizer $r(p)$, with $p\mapsto r(p)$ measurable. Then, for every $\alpha
\in \lbrack 0,\infty )$,
\begin{equation*}
\Gamma (\alpha )=\Gamma (0).
\end{equation*}%
Equivalently, a stochastic final-action allocation is an equilibrium
allocation under limited commitment if and only if it is an equilibrium
allocation under full commitment.
\end{prop}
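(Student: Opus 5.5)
We prove the two inclusions $\Gamma(0)\subseteq\Gamma(\alpha)$ and $\Gamma(\alpha)\subseteq\Gamma(0)$ separately. In both directions the agent's messaging strategy and the posteriors it induces are held fixed. Only the decomposition of the final action $z$ into a committed baseline $x$ and a revision $y$ is changed. Because both players' payoffs depend on $z$ alone, the agent's incentive constraints are untouched as long as the conditional distribution of $z$ after each message is the same.

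\emph{Step 1: $\Gamma(0)\subseteq\Gamma(\alpha)$.} Take a full-commitment equilibrium allocation. Its mechanism assigns to each message $m$ a (possibly randomized) final action $z$, and Bayes' rule along the agent's strategy gives a posterior $p_m$. We build a limited-commitment mechanism with the same messages and the same randomization device. For each realized target $z$ after message $m$, commit to the baseline
\begin{equation*}
x(z,m)=\begin{cases} z-\alpha & \text{if } z\le r(p_m),\\ z+\alpha & \text{if } z> r(p_m).\end{cases}
\end{equation*}
If the randomization is realized before the revision, the principal's revision problem is to maximize $V(x+y\mid p_m)$ over $y\in[-\alpha,\alpha]$. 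Suppose the target is randomized and the principal observes only $m$ and the realized $x$. Then the relevant posterior is still $p_m$, because the lottery is independent of $\theta$ given $m$. Since $V(\cdot\mid p_m)$ is strictly concave with unique maximizer $r(p_m)$, its maximum over the interval $[x-\alpha,x+\alpha]$ is attained uniquely at the point of the interval closest to $r(p_m)$. When $z\le r(p_m)$ that point is $z$, the upper endpoint. When $z>r(p_m)$ it is $z$, the lower endpoint. So the principal's unique best revision implements exactly $z$. Message-contingent distributions over $z$ are unchanged, so the agent's strategy remains optimal and the allocation lies in $\Gamma(\alpha)$.

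Two points need care. First, $x$ must be measurable in $(z,m)$. This follows from the assumed measurability of $p\mapsto r(p)$ together with measurability of $m\mapsto p_m$, which is the regularity convention the paper defers to its appendix. Second, the baseline $x$ may reveal information about $z$, but not about $\theta$ beyond $m$, so the posterior argument above goes through.

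\emph{Step 2: $\Gamma(\alpha)\subseteq\Gamma(0)$.} Take a limited-commitment equilibrium with a mechanism, agent strategy, and revision strategy $y(m,x)$. Define a full-commitment mechanism that, after message $m$, draws $x$ as before and commits to the final action $z=x+y(m,x)$, including any randomization in $y$. This is just the composition of the committed rule with the equilibrium revision. Each message then induces the same distribution over $z$, so the agent's best responses are preserved. No sequential-rationality constraint for the principal remains under full commitment, because $\Gamma(0)$ requires only that she commit to some rule; the principal's ex ante optimality is not part of the definition of $\Gamma$. Hence the allocation lies in $\Gamma(0)$.

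\emph{Main obstacle.} The delicate part is Step 1's handling of the principal's information at the revision stage when mechanisms are general, for example when they involve mediated recommendations or randomization. The construction must not let the revealed baseline $x$ change the posterior that governs the revision. I would handle this by conditioning only on $(m,x)$ and checking that $\theta$ is conditionally independent of $x$ given $m$. I would also verify that off-path messages can be assigned beliefs and baselines using the same endpoint rule. These details rely on the formal definitions in Sections~\ref{sec:model} and~\ref{sec:canonical} and on the appendix's regularity convention.
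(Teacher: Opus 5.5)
Your proposal is correct and follows essentially the same route as the paper. Both inclusions are proved with the sender strategy $q$ and belief system $p$ held fixed. The inclusion $\Gamma(\alpha)\subseteq\Gamma(0)$ comes from committing directly to $G(m)+\gamma(m)$. The inclusion $\Gamma(0)\subseteq\Gamma(\alpha)$ comes from placing the target at the upper or lower endpoint of $[x-\alpha,x+\alpha]$, according to whether it lies below or above $r(p_m)$, so that strict concavity makes it the receiver's unique constrained optimum after every message.

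The differences are cosmetic. You fold the case $z_m=r(p_m)$ into the lower-endpoint case, where the paper sets $y=0$ instead. Your handling of randomized mechanisms and mixed revisions is more general than needed, since in the paper's model $G$ is deterministic and $\gamma$ is pure, so the worry about the realized baseline changing the posterior never arises.
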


The proof, given in Appendix \ref{sec:proof_appIII}, first establishes the
equivalence for unrestricted mechanisms and then uses the canonical-contract
reduction developed later in Sections \ref{sec:model} and \ref{sec:canonical}
to express the same final-action allocations in the
menu-with-recommendations contract space. One direction is immediate: any
revisable-action equilibrium can be converted into a full-commitment
equilibrium by replacing each pair $(x,y)$ with the same final action $z=x+y$%
. Conversely, start from a full-commitment equilibrium. For each message,
choose the baseline so that the intended final action lies at the
appropriate endpoint of the feasible revision interval. The bounded revision
then makes that final action the receiver's constrained optimum. Thus
limited commitment changes implementation, but not the attainable
final-action allocations.

The proposition has an immediate optimization consequence.

\begin{cor}
\label{cor:revisable_optimality}Suppose an objective criterion depends only
on the stochastic final-action allocation. Under the conditions of
Proposition \ref{prop:revisable_equivalence_main}, the value and the set of
optimal allocations are the same under limited commitment and full
commitment.
\end{cor}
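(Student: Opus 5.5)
The corollary follows from Proposition \ref{prop:revisable_equivalence_main}, so I will reduce it to that equality of feasible sets. Let the objective be a map $W$ defined on stochastic final-action allocations. Concretely, $W$ is defined on type-contingent distributions over final actions $z$, for example the receiver's ex ante expected payoff $\int\int v(z,\theta)\,\gamma(dz\mid\theta)\,\mu(d\theta)$, or any welfare criterion of that form. The key observation is that the limited-commitment design problem and the full-commitment design problem are both optimization problems of the form $\sup_{\gamma\in\Gamma(\alpha)}W(\gamma)$ and $\sup_{\gamma\in\Gamma(0)}W(\gamma)$. This holds because, by hypothesis, the criterion depends on an equilibrium only through the induced final-action allocation. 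Any two equilibria, possibly in different commitment regimes, that induce the same $\gamma$ therefore receive the same value.

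The steps, in order, are these. First, I will make precise that the value under regime $\alpha$ is $\sup\{W(\gamma):\gamma\in\Gamma(\alpha)\}$, and that an allocation is optimal under regime $\alpha$ exactly when it belongs to $\Gamma(\alpha)$ and attains that supremum. This uses only the definition of $\Gamma(\alpha)$ as the set of allocations induced by equilibria with revisions in $[-\alpha,\alpha]$, together with the assumption that $W$ factors through the allocation. Second, I will invoke Proposition \ref{prop:revisable_equivalence_main}, which gives $\Gamma(\alpha)=\Gamma(0)$ for every $\alpha\ge 0$ under the stated concavity and measurable-maximizer conditions. Third, I will conclude that the two suprema are taken over the same set of the same function, so they coincide, including the cases where the supremum equals $+\infty$ or is not attained. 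The argmax sets also coincide: $\gamma$ is optimal under limited commitment iff $\gamma\in\Gamma(\alpha)=\Gamma(0)$ and $W(\gamma)$ equals the common value, iff $\gamma$ is optimal under full commitment.

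The only point needing care, and the one I expect to be the mild obstacle, is the first step. I must check that ``depends only on the stochastic final-action allocation'' really makes the criterion a function on $\Gamma(\alpha)$, rather than on equilibria. For example, the receiver's payoff $v(x,y,\theta)=v(x+y,\theta)$ depends on $(x,y)$ only through $z$, and the same holds for the sender's payoff $u$. So any payoff-based objective built from $u$ and $v$ and the prior qualifies. I will state this explicitly. I will also note that the optimal \emph{mechanisms} differ across regimes, since baselines are shifted to interval endpoints, while the optimal \emph{allocations} and the value do not.
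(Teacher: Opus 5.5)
Your proposal is correct and is essentially the paper's argument. The paper gives no separate proof and presents the corollary as an immediate consequence of $\Gamma(\alpha)=\Gamma(0)$ in Proposition \ref{prop:revisable_equivalence_main}: a criterion that factors through the final-action allocation is optimized over the same set under both regimes, so both the value (attained or not) and the set of maximizers coincide.
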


Corollary \ref{cor:revisable_optimality} implies that one can solve the
full-commitment problem and then implement the same allocation with a
contractible baseline and a bounded revision. Thus, in schedule-rating,
civil-penalty, and policy-delegation settings, the full-commitment solution
remains valid when the payoff-relevant object is the final premium, penalty,
or policy and the discretionary component is a bounded revision of that same
object.

The same endpoint-placement logic also applies to proportional revisions,
such as percentage credits or debits relative to a baseline. If the final
action is $z=x\eta$ with $\eta\in[\underline{\eta},\bar{\eta}]$ and $0<%
\underline{\eta}\leq\bar{\eta}$, then for a given baseline $x\geq 0$ the
feasible final-action interval is $[\underline{\eta}x,\bar{\eta}x]$. To
implement a target final action below the receiver's posterior ideal, choose
the baseline so that the target is the upper endpoint; to implement a target
above the receiver's posterior ideal, choose the baseline so that it is the
lower endpoint. Thus proportional revision uses the same endpoint argument
as the additive model, provided the implied baseline is admissible.

\paragraph{Implication: Melumad and Shibano (1991).}

We now apply the equivalence result to the quadratic environment of \cite%
{mmts1991}. The sender's and receiver's preferences over the final action $z$
are
\begin{equation*}
u(z,\theta )=-(z-\theta )^{2},\qquad v(z,\theta )=-(z-k-a\theta )^{2},
\end{equation*}%
where $a\in (0,1)$ and $k\in \mathbb{R}$. The receiver's payoff is strictly
concave in $z$ for every type $\theta $, and for any posterior belief $p$
the receiver's expected payoff has the unique ideal point
\begin{equation*}
r(p)=k+a\int_{\Theta }\theta \,p(d\theta ).
\end{equation*}%
Thus Proposition \ref{prop:revisable_equivalence_main} applies directly.
Applying Proposition 3 of \cite{mmts1991}, when
\begin{equation*}
k\in \left( -\frac{a}{2},1-\frac{a}{2}\right) ,
\end{equation*}%
the sender-optimal equilibrium allocation in the full-commitment model takes
the form
\begin{equation*}
z^{\ast \ast }(\theta )=%
\begin{cases}
\theta _{1} & \text{for }\theta \in \lbrack 0,\theta _{1}), \\
\theta & \text{for }\theta \in \lbrack \theta _{1},\theta _{2}], \\
\theta _{2} & \text{for }\theta \in (\theta _{2},1],%
\end{cases}%
\end{equation*}%
where
\begin{equation*}
\theta _{1}=\max \left\{ 0,\frac{2k}{2-a}\right\} ,\qquad \theta _{2}=\min
\left\{ \frac{2k+a}{2-a},1\right\} .
\end{equation*}%
Thus the same allocation remains optimal under limited commitment for every $%
\alpha \geq 0$; only the implementation changes.

\paragraph{Implication: Alonso and Matouschek (2008).}

A similar implication applies to the delegation environment studied by \cite%
{am2008}. In the specification described in Section \ref{sec:Example1}, on a
positive compact type support, the receiver's primitive payoff is strictly
concave in the final policy for every type, and each posterior expected
payoff has a unique ideal point. Corollary \ref{cor:revisable_optimality}
therefore implies that, in the corresponding revisable-action environment,
the optimal limited-commitment allocation is again exactly the same as the
full-commitment allocation.

\section{A More General Model of Imperfect Commitment}

\label{sec:model}

Section \ref{sec:revisable} presented the cleanest implication of the
framework: when a contractible baseline and a later discretionary adjustment
jointly determine a single final action, limited commitment can reproduce
full commitment. We now introduce the general model behind that argument.
The model encompasses the revisable-action environment but also allows the
contractible component $x$ and the later discretionary component $y$ to be
distinct payoff-relevant instruments, and it allows several principals to
contract with the same informed agent.

The key object is a message that does two jobs at once. It selects the
contractible component through the mechanism, and it carries information
that shapes the posterior beliefs used in the later discretionary choice.
Thus two messages can induce the same contractible action but lead to
different continuation choices. This is why ordinary menus of contractible
actions are not rich enough under imperfect commitment. The definitions
below formalize this continuation problem and then use it to study canonical
contract spaces.

All primitive sets and admissible message spaces are standard Borel spaces,
and all contracts, strategies, beliefs, and payoff functions are measurable.
For a measurable set $K$, let $\Delta(K)$ denote the set of probability
distributions on $K$, and let $\mathcal{B}(K)$ denote its Borel $\sigma$%
-algebra. Appendix \ref{app:formal_continuation_equilibrium} records the
formal measurable definition of continuation equilibrium. Existence is
imposed only when a result asserts attainment; otherwise value statements
are understood with suprema.

\subsection{Environment}

There is one agent and a finite set $\mathcal{J}$ of principals. The
one-principal model is the special case $|\mathcal{J}|=1$. The
payoff-relevant state is $\theta \in \Theta$, drawn from a commonly known
prior $\mu \in \Delta(\Theta)$. The agent privately observes $\theta$.

Each principal $j\in \mathcal{J}$ chooses an action $z_{j}=(x_{j},y_{j})\in
X_{j}\times Y_{j},$ where $x_{j}$ is contractible and $y_{j}$ is not. Let $%
X:=\times _{j\in \mathcal{J}}X_{j}$ and $Y:=\times _{j\in \mathcal{J}}Y_{j}$%
. Principal $j$'s feasible non-contractible actions may depend on the
contractible action that principal $j$ has chosen. This dependence is
represented by a non-empty-valued correspondence
\begin{equation*}
F_{j}:X_{j}\rightrightarrows Y_{j},
\end{equation*}%
where $F_{j}(x_{j})\neq \emptyset $ for every $x_{j}\in X_{j}$. Thus, after
choosing $x_{j}$, principal $j$ can choose only some $y_{j}\in F_{j}(x_{j})$.

The agent's Bernoulli utility is
\begin{equation*}
u:X\times Y\times \Theta \rightarrow \mathbb{R},
\end{equation*}
and principal $j$'s Bernoulli utility is
\begin{equation*}
v_{j}:X\times Y\times \Theta \rightarrow \mathbb{R}.
\end{equation*}
The agent's outside option is
\begin{equation*}
U:\Theta \rightarrow \mathbb{R}.
\end{equation*}
The outside option in the baseline model is an interim participation payoff.
Applications with richer exit timing, such as exit after the discretionary
action is observed, specialize the payoff functions so that the relevant
exit decision is already embedded in the continuation payoff, as shown later.

For any $f\in \Delta (K)$, let supp$\left[ f\right] $ denote the support of $%
f$ whenever $K$ is endowed with a topology. We impose a mild non-triviality
condition.

\begin{assum}
\label{assm:non-trivial}%
\begin{equation*}
\left\vert \left\{ (x_{j},y_{j})\in X_{j}\times Y_{j}:y_{j}\in
F_{j}(x_{j})\right\} \right\vert \geq 2,\qquad \forall j\in \mathcal{J}.
\end{equation*}
\end{assum}

If this condition failed for some principal, that principal would have only
one feasible action pair and could be removed from the model. We impose no
restriction on whether $\Theta$, $X_j$, or $Y_j$ is finite or infinite.

\subsection{Contracts and continuation play}

A contract offered by principal $j$ is a mechanism $G_{j}:M_{j}\rightarrow
X_{j},$ where $M_{j}$ is an admissible standard Borel message space and $%
G_{j}$ is measurable. A contract is admissible only if its image $%
L_{j}:=G_{j}(M_{j})$ is Borel, the set $\{(x_{j},y_{j}):x_{j}\in L_{j},\,
y_{j}\in F_{j}(x_{j})\}$ is Borel in $X_{j}\times Y_{j}$, the feasible
correspondence admits a measurable selector on $L_j$, and $G_{j}$ admits a
measurable selector on its image. These regularity requirements are
automatic in finite environments and are maintained in the general standard
Borel case. The mechanism specifies only the contractible action. The
non-contractible action is chosen later.

The timing is as follows. First, principals simultaneously and publicly
choose contracts. Second, after observing the contract profile, the agent
sends messages publicly. Third, after observing those messages, principals
simultaneously choose their non-contractible actions.

Fix a contract profile
\begin{equation*}
G=(G_{j}:M_{j}\rightarrow X_{j})_{j\in \mathcal{J}},\qquad M:=\times _{j\in
\mathcal{J}}M_{j}.
\end{equation*}%
The agent's messaging strategy is $q:\Theta \rightarrow \Delta (M)$.
Principal $j$'s continuation strategy is a map $\gamma _{j}:M\rightarrow
Y_{j}$ such that
\begin{equation*}
\gamma _{j}(m)\in F_{j}(G_{j}(m_{j}))\quad \text{for every }m=(m_{k})_{k\in
\mathcal{J}}\in M.
\end{equation*}%
Let $\gamma =(\gamma _{j})_{j\in \mathcal{J}}$.

In the one-principal case, continuation equilibrium has the following
economic content. Given $G$, the agent chooses messages optimally, taking
into account both the contractible action $G(m)$ and the continuation action
$\gamma(m)$. The principal, after each message, chooses $\gamma(m)$
optimally given the posterior belief induced by the agent's strategy. The
multi-principal definitions below are the same idea with a message profile
and simultaneous continuation choices.

\subsection{Equilibrium for a fixed contract space}

The object of interest is the set of allocations that can arise when
principals are allowed to choose arbitrary mechanisms. As in standard
canonical-mechanism arguments, we do not take a literal universal set of
contracts as a primitive. Instead, we first define equilibrium relative to
an arbitrary contract space and then ask whether a smaller, more tractable
contract space is without loss for equilibrium allocations.

Let $\mathcal{G}_{j}$ be a non-empty set of admissible contracts available
to principal $j$, and let $\mathcal{G}:=\times _{j\in \mathcal{J}}\mathcal{G}%
_{j}.$ Given a contract profile $G=(G_{j}:M_{j}\rightarrow X_{j})_{j\in
\mathcal{J}}\in \mathcal{G}$, let $M:=\times _{j\in \mathcal{J}}M_{j}$. The
agent's messaging strategy is denoted by $q:\Theta \rightarrow \Delta (M)$.
Principal $j$'s continuation strategy is denoted by $\gamma _j$ and must
select a feasible non-contractible action after every message profile.
Principal $j$'s posterior belief after message profile $m$ is denoted by $%
p_j(m)$. A profile $[G,(\gamma,q,p)]$ collects the contract profile,
continuation actions, the agent's messaging strategy, and posterior beliefs.

\begin{define}[$\mathcal{G}$-continuation equilibrium]
\label{def:weak-equilibrium}A profile $[G,(\gamma ,q,p)]$ is a $\mathcal{G}$%
-continuation equilibrium if beliefs are Bayesian consistent with the
agent's messaging strategy, the agent's messages are optimal and satisfy
interim participation, and each principal's discretionary action is optimal
after every message profile given principal $j$'s posterior belief and the
other principals' continuation actions.
\end{define}

The formal version of this definition is recorded in Appendix \ref%
{app:formal_continuation_equilibrium}.

For any $\mathcal{G}$-continuation equilibrium $[G,(\gamma ,q,p)]$,
principal $j$'s ex ante payoff is
\begin{equation*}
V_j[G,(\gamma ,q,p)]:=\int_{\Theta }\left[ \int_{M}v_j\left(
(G_k(m_k))_{k\in \mathcal{J}},(\gamma _k(m))_{k\in \mathcal{J}},\theta
\right) q(\theta )[dm]\right] \mu \lbrack d\theta ].
\end{equation*}

The equilibrium induces an allocation
\begin{equation*}
z:\Theta \rightarrow \Delta (X\times Y)
\end{equation*}%
defined by, for every measurable $E\subseteq X\times Y$,
\begin{equation*}
z(\theta )[E]=q(\theta )\left[ \left\{ m\in M:\left( (G_{k}(m_{k}))_{k\in
\mathcal{J}},(\gamma _{k}(m))_{k\in \mathcal{J}}\right) \in E\right\} \right]
.
\end{equation*}%
When $|\mathcal{J}|=1$, this continuation equilibrium is the standard
solution concept in the limited-commitment literature (see, e.g., \cite%
{bs2001}).

With multiple principals, equilibrium also requires optimal contract choice
at the first stage.

\begin{define}[$\mathcal{G}$-robust equilibrium]
\label{def:equilibrium:robust}A profile $[G,(\gamma ,q,p)]$ is a $\mathcal{G}
$-robust equilibrium if it is a $\mathcal{G}$-continuation equilibrium and,
for every principal $j\in \mathcal{J}$ and every alternative contract $%
G_{j}^{\prime }\in \mathcal{G}_{j}$, there exists a $\mathcal{G}$%
-continuation equilibrium $[(G_{j}^{\prime },G_{-j}),(\gamma ^{\prime
},q^{\prime },p^{\prime })]$ such that
\begin{equation*}
V_{j}[G,(\gamma ,q,p)]\geq V_{j}[(G_{j}^{\prime },G_{-j}),(\gamma ^{\prime
},q^{\prime },p^{\prime })].
\end{equation*}
\end{define}

This is the no-safe-deviation criterion used in competing-mechanism
environments (see, e.g., \cite{mp2001}). A deviating principal does not
choose which continuation equilibrium follows the deviation. Thus a
deviation defeats robustness only if all continuation equilibria following
it give the deviating principal a higher payoff. If one continuation
equilibrium leaves the deviating principal no better off, the original
profile survives that deviation within the contract space $\mathcal{G}$. All
robustness statements below use this equilibrium-selection criterion.

\subsection{Unrestricted equilibrium}

Definitions \ref{def:weak-equilibrium} and \ref{def:equilibrium:robust} are
stated relative to a given contract space. Our main interest, however, is in
equilibrium allocations when principals are not restricted to a particular
class of mechanisms.

To avoid treating a universal set of mechanisms as a primitive, we define
unrestricted equilibrium notions indirectly.

\begin{define}[continuation equilibrium]
\label{def:equilibrium:continuation} A profile $\left[G,(\gamma,q,p)\right]$
is a \emph{continuation equilibrium} if it is a $\mathcal{G}$-\emph{%
continuation equilibrium} for every contract space $\mathcal{G}$ such that $%
G\in\mathcal{G}$.
\end{define}

\begin{define}[robust equilibrium]
\label{def:equilibrium:robust:without} A profile $\left[G,(\gamma,q,p)\right]
$ is a \emph{robust equilibrium} if it is a $\mathcal{G}$-\emph{robust
equilibrium} for every contract space $\mathcal{G}$ such that $G\in\mathcal{G%
}$.
\end{define}

These definitions let principals be evaluated against arbitrary admissible
deviations without requiring an explicit universal contract space. A
continuation equilibrium imposes optimal behavior in the continuation game
after the chosen contract profile. A robust equilibrium additionally
requires that no principal can profit by deviating to any admissible
mechanism. The purpose of the extended taxation principle developed below is
to replace the unrestricted contract space with a tractable class of menus
with recommendations, without changing the set of equilibrium allocations.

\section{The Extended Taxation Principle}

\label{sec:canonical}

The previous section defined the general limited-commitment environment.
This section proves the representation result used above and in the
applications below. The unrestricted problem allows principals to choose
arbitrary contracts. For analysis, however, we would like to reduce
attention to a smaller and more transparent class of contracts.

The central observation is that, under limited commitment, a menu over
contractible actions alone is generally too coarse. Two messages may induce
the same contractible action but support different posterior beliefs and
therefore different continuation choices of the non-contractible action. A
canonical contract must therefore keep track not only of the contractible
choice, but also of the continuation outcome it is meant to sustain. This is
why the right object is a \emph{menu with recommendations} rather than a
standard menu of contractible actions.

\subsection{Canonicality}

Fix the contractual primitives
\begin{equation*}
\left\langle \Theta ,\mathcal{J},(X_{j},Y_{j},F_{j})_{j\in \mathcal{J}%
}\right\rangle .
\end{equation*}%
These primitives describe the type space, the set of principals, and which
parts of each principal's action are contractible or discretionary. A
canonical contract space should depend only on this contractual structure,
not on the particular prior or payoff functions. We summarize the latter by
the payoff environment
\begin{equation*}
\mathcal{E}:=\langle \mu ,U,u,(v_{k})_{k\in \mathcal{J}}\rangle .
\end{equation*}

Given $\mathcal{E}$ and a contract space $\mathcal{G}$, let
\begin{equation*}
\mathcal{Z}^{\text{continuation-}\mathcal{G}\text{-}\mathcal{E}}
\end{equation*}
denote the set of allocations induced by $\mathcal{G}$-continuation
equilibria, and let
\begin{equation*}
\mathcal{Z}^{\text{robust-}\mathcal{G}\text{-}\mathcal{E}}
\end{equation*}
denote the set of allocations induced by $\mathcal{G}$-robust equilibria.
Finally, let
\begin{equation*}
\mathcal{Z}^{\text{robust-}\mathcal{E}}
\end{equation*}
denote the set of allocations induced by unrestricted robust equilibria.

\begin{define}
\label{define:canonical:weak} A contract space $\mathcal{G}$ is \emph{%
canonical for continuation equilibrium} if, for every payoff environment $%
\mathcal{E}$ and every contract space $\mathcal{G}^{\prime}$,
\begin{equation*}
\mathcal{Z}^{\text{continuation-}\mathcal{G}^{\prime}\text{-}\mathcal{E}}
\subseteq \mathcal{Z}^{\text{continuation-}\mathcal{G}\text{-}\mathcal{E}}.
\end{equation*}
\end{define}

\begin{define}
\label{define:canonical:robust} A contract space $\mathcal{G}$ is \emph{%
canonical for robust equilibrium} if, for every payoff environment $\mathcal{%
E}$,
\begin{equation*}
\mathcal{Z}^{\text{robust-}\mathcal{G}\text{-}\mathcal{E}} = \mathcal{Z}^{%
\text{robust-}\mathcal{E}}.
\end{equation*}
\end{define}

The first notion says that restricting attention to $\mathcal{G}$ entails no
loss for continuation-equilibrium allocations. The second says that $%
\mathcal{G}$ captures exactly the allocations that can arise in unrestricted
robust equilibrium.

\begin{remark}
Because equilibrium strategies depend on the contract space, the relevant
object is a canonical contract space for equilibrium \emph{allocations}, not
for equilibrium \emph{profiles}.
\end{remark}

\subsection{Menus with recommendations}

Many canonical contract spaces exist. The goal is to find one that is both
without loss and as small as possible. Minimality here means minimal
cardinality, not minimality under set inclusion.

\begin{define}
A canonical contract space $\mathcal{G}$ is \emph{minimal for continuation
equilibrium} if, for any canonical contract space $\mathcal{G}^{\prime}$ for
continuation equilibrium,
\begin{equation*}
|\mathcal{G}^{\prime}|\geq |\mathcal{G}|.
\end{equation*}
\end{define}

\begin{define}
A canonical contract space $\mathcal{G}$ is \emph{minimal for robust
equilibrium} if, for any canonical contract space $\mathcal{G}^{\prime}$ for
robust equilibrium,
\begin{equation*}
|\mathcal{G}^{\prime}|\geq |\mathcal{G}|.
\end{equation*}
\end{define}

Let $\mathcal{B}(X_{k})$ and $\mathcal{B}(X_{k}\times Y_{k})$ denote the
Borel $\sigma $-algebras on $X_{k}$ and $X_{k}\times Y_{k}$ respectively.
For a non-empty Borel message space $M_{k}\in \mathcal{B}(X_{k}\times
Y_{k}), $ let $\widehat{G}|_{M_{k}}$ denote the function with a domain $%
M_{k} $ such that
\begin{equation*}
\widehat{G}|_{M_{k}}(x_{k},y_{k})=x_{k},\text{ }\forall (x_{k},y_{k})\in
M_{k}
\end{equation*}%
For each non-empty Borel set $L_{k}\subseteq X_{k}$, let
\begin{equation*}
M_{k}(L_{k}):=\{(x_{k},y_{k}):x_{k}\in L_{k},\ y_{k}\in F_{k}(x_{k})\}.
\end{equation*}%
Define the admissible measurable menus of contractible actions by
\begin{align*}
\mathcal{L}_{k}^{\ast }:=\{L_{k}\in \mathcal{B}(X_{k}):& \ L_{k}\neq
\varnothing ,\ M_{k}(L_{k})\in \mathcal{B}(X_{k}\times Y_{k}), \\
& \text{and }M_{k}(L_{k})\text{ admits a measurable selector }%
s_{k}:L_{k}\rightarrow M_{k}(L_{k})\}.
\end{align*}%
We define
\begin{gather*}
\mathcal{M}_{k}^{\ast }:=\{M_{k}(L_{k}):L_{k}\in \mathcal{L}_{k}^{\ast }\},
\\
\mathcal{G}_{k}^{\ast }:=\left\{ \langle G_{k}:M_{k}\longrightarrow
X_{k}\rangle :M_{k}\in \mathcal{M}_{k}^{\ast }\text{ and }G_{k}=\widehat{G}%
|_{M_{k}}\right\} , \\
\mathcal{G}^{\ast }:=\times _{k\in \mathcal{J}}\mathcal{G}_{k}^{\ast }.
\end{gather*}

A contract in $\mathcal{G}_{j}^{\ast }$ offers the agent a non-empty set of
feasible pairs $(x_{j},y_{j})$ such that, whenever one pair with
contractible action $x_{j}$ is available, all feasible recommendations $%
y_{j}\in F_{j}(x_{j})$ are available as well. The agent therefore chooses a
contractible action and, at the same time, recommends a continuation action.
For this reason, contracts in $\mathcal{G}_{j}^{\ast }$ can be interpreted
as \emph{menus with recommendations}. The closure condition means that each
message set is determined by its projection onto contractible actions: when $%
x_{j}$ is made available, all feasible recommendations for $x_{j}$ are
included. Thus one can think of a contract as a menu of contractible
actions. For any $L_{j}\in \mathcal{L}_{j}^{\ast }$, the corresponding
message set is
\begin{equation*}
\left\{ (x_{j},y_{j}):x_{j}\in L_{j},\ y_{j}\in F_{j}(x_{j})\right\} ,
\end{equation*}%
and the mechanism commits only to the selected $x_{j}$. The recommended $%
y_{j}$ is not binding; it records the continuation action that the message
is meant to induce. When $X_{j}$ is finite, $\mathcal{L}_{j}^{\ast
}=2^{X_{j}}\diagdown \{\varnothing \}$.

\setcounter{eg}{3}

\begin{eg}
\label{example1} Suppose $\mathcal{J}=\{j_{1},j_{2}\}$. For each principal $%
j\in \mathcal{J}$, let
\begin{equation*}
X_{j}=\{x_{j},x_{j}^{\prime }\}\qquad \text{and}\qquad
Y_{j}=\{y_{j},y_{j}^{\prime },y_{j}^{\prime \prime },y_{j}^{\prime \prime
\prime }\}
\end{equation*}%
with
\begin{equation*}
F_{j}(x_{j})=\{y_{j},y_{j}^{\prime }\}\qquad \text{and}\qquad
F_{j}(x_{j}^{\prime })=\{y_{j}^{\prime },y_{j}^{\prime \prime
},y_{j}^{\prime \prime \prime }\}.
\end{equation*}%
Then
\begin{eqnarray}
\mathcal{M}_{j}^{\ast } &=&\left\{
\begin{array}{l}
M_{j}^{(1)}:=\{(x_{j},y_{j}),(x_{j},y_{j}^{\prime })\}, \\
M_{j}^{(2)}:=\{(x_{j}^{\prime },y_{j}^{\prime }),(x_{j}^{\prime
},y_{j}^{\prime \prime }),(x_{j}^{\prime },y_{j}^{\prime \prime \prime })\},
\\
M_{j}^{(3)}:=\{(x_{j},y_{j}),(x_{j},y_{j}^{\prime }),(x_{j}^{\prime
},y_{j}^{\prime }),(x_{j}^{\prime },y_{j}^{\prime \prime }),(x_{j}^{\prime
},y_{j}^{\prime \prime \prime })\}%
\end{array}%
\right\} ,  \notag \\
\mathcal{G}_{j}^{\ast } &=&\left\{ \widehat{G}|_{M_{j}^{(1)}},\widehat{G}%
|_{M_{j}^{(2)}},\widehat{G}|_{M_{j}^{(3)}}\right\} .  \label{ktt1}
\end{eqnarray}%
Thus principal $j$ has exactly three contracts in $\mathcal{G}_{j}^{\ast }$:
one for each non-empty subset of contractible actions.
\end{eg}

The next two theorems show that $\mathcal{G}^{\ast }$ is not only canonical,
but minimal. The upper-bound idea is that any arbitrary mechanism can be
replicated by recording, for each message, both the contractible action it
selects and the continuation action it induces. The lower-bound idea is that
when there are enough agent types, every admissible measurable menu of
contractible actions can be made behaviorally necessary in some payoff
environment. A canonical contract space that omitted one of these menus
would fail for that environment. All cardinalities are set-theoretic
cardinalities; when action spaces are infinite, the relevant menus are the
admissible measurable menus in $\mathcal{L}_{j}^{\ast }$, not arbitrary
subsets of $X_{j}$. The assumption $|\Theta |\geq |X|$ is used for this
lower-bound construction; the formal proof is in the appendix.

\begin{theo}
\label{thm:minimal:weak} Suppose $|\Theta|\geq |X|$ and $|\mathcal{J}|=1$.
Then $\mathcal{G}^\ast$ is a minimal canonical contract space for
continuation equilibrium.
\end{theo}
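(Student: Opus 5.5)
The proof has two parts: canonicality of $\mathcal{G}^\ast$ (an upper bound) and minimality (a lower bound). With $|\mathcal{J}|=1$ we drop the index $j$.

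\emph{Canonicality.} Fix a payoff environment $\mathcal{E}$, a contract space $\mathcal{G}'$, and a $\mathcal{G}'$-continuation equilibrium $[G,(\gamma,q,p)]$ with $G:M\to X$ and allocation $z$. We replicate it inside $\mathcal{G}^\ast$ as follows.
\begin{itemize}
\item[(i)] Let $L:=G(M)$, which is Borel by admissibility, so that $L\in\mathcal{L}^\ast$. Offer the menu $\widehat{G}|_{M(L)}$.
\item[(ii)] Define the agent's strategy $q^\ast(\theta)$ as the pushforward of $q(\theta)$ under $m\mapsto (G(m),\gamma(m))$. This is feasible because $\gamma(m)\in F(G(m))$.
\item[(iii)] After an on-path message $(x,y)$, the principal plays $\gamma^\ast(x,y)=y$. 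His posterior $p^\ast(x,y)$ is the mixture of the old posteriors $p(m)$ over $\{m:(G(m),\gamma(m))=(x,y)\}$, obtained by disintegration. Since $y$ was optimal against each $p(m)$ in that fiber, and expected payoff is linear in beliefs, $y$ is optimal against the mixture.
\item[(iv)] Off path, i.e.\ at messages $(x,y)$ outside the support, we must deter deviations. For $x\in L$, pick a measurable $m_x\in G^{-1}(x)$ using the measurable selector of $G$ on its image. Set $p^\ast(x,y)=p(m_x)$ and $\gamma^\ast(x,y)=\gamma(m_x)$, which is optimal for that belief.
\end{itemize}
Then every message $(x,y)$ yields the agent a payoff already available under $G$: on path, a mixture of payoffs from messages in the fiber, each of which is weakly below the optimum; off path, exactly the payoff from $m_x$. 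Optimality of $q$ and interim participation therefore transfer, and the induced allocation equals $z$.

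The main technical obstacle is measurability: building $p^\ast$ as a regular conditional probability and making the off-path assignment $x\mapsto m_x$ and $\gamma^\ast$ jointly measurable. I would handle this with the admissibility conditions (Borel image, measurable selectors) and standard Borel disintegration.

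\emph{Minimality.} Let $\mathcal{G}'$ be any canonical space. It suffices to exhibit an injection from $\mathcal{L}^\ast$ into $\mathcal{G}'$, since $|\mathcal{G}^\ast|=|\mathcal{L}^\ast|$. For each $L\in\mathcal{L}^\ast$ we construct an environment $\mathcal{E}_L$ in which every continuation-equilibrium allocation that puts full support on $L$ requires a contract with image exactly $L$.
\begin{itemize}
\item[(i)] Using $|\Theta|\ge|X|$, choose an injection $\iota:L\to\Theta$ and a prior with full support on $\iota(L)$.
\item[(ii)] Let the agent of type $\iota(x)$ strictly prefer $x$ above all other contractible actions, for instance $u=\mathbf{1}\{x=\iota^{-1}(\theta)\}$ in the contractible component, independent of $y$. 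Let $U$ be low enough that participation is never binding.
\item[(iii)] Let the principal's payoff $v$ be constant, so any $y$ is optimal.
\end{itemize}
The allocation $\theta\mapsto x=\iota^{-1}(\theta)$ is implementable by the menu $L$ in $\mathcal{G}^\ast$. By canonicality of $\mathcal{G}'$ it must also be implemented by some contract $G_L\in\mathcal{G}'$. That contract's image must contain $L$, since every $x\in L$ is chosen by some type. Its image must also be contained in $L$: otherwise, some type $\iota(x)$ would... but type $\iota(x)$ prefers $x$ itself, so this alone does not work. We therefore modify $u$ so that every type strictly prefers any $x'\notin L$ to its own $x$, for example $u=2$ off $L$, $u=1$ at the own point, and $0$ otherwise. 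The target allocation then forces $G_L(M')=L$ exactly.

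Hence $L\mapsto G_L$ is injective, because distinct menus force distinct images. This gives $|\mathcal{G}'|\ge|\mathcal{L}^\ast|=|\mathcal{G}^\ast|$. When the support is uncountable, we must ensure that $\iota$ and $u$ are measurable; I would use Borel isomorphisms between standard Borel spaces, and for countable $L$ this is immediate.
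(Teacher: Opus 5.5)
Your proposal is correct and follows the paper's route. The canonicality half is exactly Lemma~\ref{lem:extension_w2}: push $q$ forward under $m\mapsto(G(m),\gamma(m))$, follow recommendations there, and copy $(\gamma,p)$ from a selected original message elsewhere. Your repaired lower-bound environment (agent strictly better off $L$, unique own-point reward on $L$, image forced to equal $L$) is the paper's construction in Proposition~\ref{thm:lower:weak}. Your injection $\iota:L\to\Theta$ and constant $v$ are harmless variants of the paper's surjection $\varphi_j:\Theta\to\widetilde X_j$ and $v_k=-u$ off $\widetilde X_j$; the latter matters only for the robust version.

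One small tightening: split messages by membership in the \emph{image} of $m\mapsto(G(m),\gamma(m))$ rather than by the support. A support point with an empty fiber would otherwise get the recommended $y$ with no original message backing either its belief or the agent's payoff.
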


\begin{theo}
\label{thm:minimal:weak-robust} Suppose $|\Theta|\geq |X|$ and $|\mathcal{J}%
|\geq 2$. Then $\mathcal{G}^\ast$ is a minimal canonical contract space for
robust equilibrium.
\end{theo}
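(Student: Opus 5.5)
The plan has two parts. First show that $\mathcal{G}^{\ast}$ is canonical for robust equilibrium, i.e. $\mathcal{Z}^{\text{robust-}\mathcal{G}^{\ast}\text{-}\mathcal{E}}=\mathcal{Z}^{\text{robust-}\mathcal{E}}$ for every $\mathcal{E}$. Then show minimality by building, for each admissible menu, a payoff environment in which that menu is indispensable. The basic tool, presumably the same one used for Theorem \ref{thm:minimal:weak}, is a \emph{recording map}. Given any profile $G$ and a continuation equilibrium $(\gamma,q,p)$, send each message profile $m$ to the profile of pairs $\big(G_k(m_k),\gamma_k(m)\big)_{k}$. Replace each $G_k$ by $\widehat{G}|_{M_k(L_k)}$ with $L_k=G_k(M_k)$, and let the new messaging strategy $q^{\ast}(\theta)$ be the push-forward of $q(\theta)$ under the recording map.

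\textbf{Step 1 (continuation equilibria translate forward).} I would first show that $[G^{\ast},(\gamma^{\ast},q^{\ast},p^{\ast})]$ is a $\mathcal{G}^{\ast}$-continuation equilibrium with the same allocation.

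On path, principal $j$ plays the recommended $y_j$. The posterior after a recorded profile is a mixture of the original posteriors $p_j(m)$ over the fiber of $m$'s with that record. On the whole fiber, $\gamma_j(m)=y_j$ and the rivals' actions are the same $y_{-j}$. Since $y_j$ is a best response against each component posterior, it remains a best response against the mixture, by linearity of expected utility in beliefs.

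Off path, for a recorded message $(x_k,y_k)$ with $x_k\in L_k$, I use the measurable selector of $G_k$ on its image to pick an original message $m_k$ with $G_k(m_k)=x_k$. Principals then mimic the original continuation at that message and ignore the recommendation. Any agent deviation in the new game is then replicable in the original game, so the agent's optimality and interim participation carry over. Measurability follows from the admissibility conditions defining $\mathcal{L}^{\ast}_k$.

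\textbf{Step 2 (the two inclusions).} For $\mathcal{Z}^{\text{robust-}\mathcal{E}}\subseteq\mathcal{Z}^{\text{robust-}\mathcal{G}^{\ast}\text{-}\mathcal{E}}$, start from an unrestricted robust equilibrium and translate it by Step 1. Take a deviation $G_j'\in\mathcal{G}^{\ast}_j$. It is an admissible mechanism, so the original robustness yields a continuation equilibrium after $(G_j',G_{-j})$ that leaves $j$ no better off. Applying Step 1 to the non-deviating principals' contracts converts it into one after $(G_j',G^{\ast}_{-j})$ with the same payoffs.

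For the reverse inclusion, start from a $\mathcal{G}^{\ast}$-robust equilibrium and an arbitrary admissible deviation $G_j'$. I would need a continuation equilibrium after $(G_j',G^{\ast}_{-j})$ that gives $j$ no more than the equilibrium payoff. The natural route is to pass to the menu $M_j(G_j'(M_j'))\in\mathcal{G}^{\ast}_j$, invoke $\mathcal{G}^{\ast}$-robustness there, and pull the resulting continuation equilibrium back to $G_j'$.

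\textbf{Main obstacle.} This pull-back is the step I expect to be hard. In the menu world the agent can attach different recommendations to the same $x_j$, and $j$'s posterior can differ across them. Under $G_j'$, if $G_j'$ is close to injective, those distinctions are lost.

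My first attempt would use the freedom in the robustness definition: only \emph{some} continuation equilibrium must be unprofitable for $j$. I would therefore look for a continuation equilibrium under $G_j'$ in which the agent's messages to the other principals $k\neq j$ carry the needed information. Their menus contain all of $F_k(x_k)$, so the agent's choice of recommendation to $k$ can be correlated with the information to be conveyed to $j$, and all messages are public. This is the place where $|\mathcal{J}|\geq2$ should enter.

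If that fails, the fallback is to show directly that some continuation equilibrium of $(G_j',G^{\ast}_{-j})$ induces a payoff for $j$ no larger than $j$'s payoff in some continuation equilibrium after $M_j(G_j'(M_j'))$. The argument would go through Step 1 applied to the deviation game, combined with a selection argument on the set of continuation payoffs.

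\textbf{Step 3 (minimality).} For each $j$ and each $L_j\in\mathcal{L}^{\ast}_j$, use $|\Theta|\geq|X|$ to build an environment as follows. Assign an injection $x\mapsto\theta_x$ from $X_j$ into $\Theta$, put the prior on $\{\theta_x:x\in L_j\}$, and make the other principals' payoffs trivial (constant), so they offer any fixed menu. Choose $u$ so that type $\theta_x$ strictly prefers $x$ among available contractible actions. Choose $v_j$ so that $j$ gets a high payoff exactly when each $\theta_x$ with $x\in L_j$ obtains $x$, and a low payoff if any $x\notin L_j$ is available or some $x\in L_j$ is missing.

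Then every robust-equilibrium allocation in this environment requires $j$ to offer a contract whose image is exactly $L_j$. Such an allocation exists in $\mathcal{Z}^{\text{robust-}\mathcal{E}}$ via the menu $\widehat{G}|_{M_j(L_j)}$. Hence any canonical $\mathcal{G}'$ must contain, for every profile of menus, a contract profile with those images. This gives an injection $\mathcal{G}^{\ast}\to\mathcal{G}'$, so $|\mathcal{G}'|\geq|\mathcal{G}^{\ast}|$.
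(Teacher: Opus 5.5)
\textbf{Main gap: the reverse inclusion.} Your Step 1 and the inclusion $\mathcal{Z}^{\text{robust-}\mathcal{E}}\subseteq\mathcal{Z}^{\text{robust-}\mathcal{G}^{\ast}\text{-}\mathcal{E}}$ match the paper (Lemmas \ref{lem:extension_w2} and \ref{lem:max:robust:on-equilibrium-pah}). The problem is the reverse inclusion. There you try to show that the given $\mathcal{G}^{\ast}$-robust profile itself, with $G^{\ast}_{-j}$ held fixed, survives every admissible deviation $G_j'$. That is false in general, and the paper says so. A deviation to a contract with a single message above some $x_j$ with $|F_j(x_j)|\geq 2$ removes the recommendation messages that sustained post-deviation continuation play, and $\mathcal{G}^{\ast}$-robustness need not deter it.

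Neither of your routes repairs this. First, $G^{\ast}_k$ has no spare messages in which to encode $j$'s lost recommendation. Each message is a distinct feasible pair $(x_k,y_k)$, and $k$'s equilibrium menu may be a single $x_k$ with $F_k(x_k)$ a singleton. Second, the fallback runs in the wrong direction. By your own Step 1, the continuation-equilibrium outcomes after the coarse $G_j'$ form a \emph{subset} of those after the full menu $\widehat{G}|_{M_j(G_j'(M_j'))}$. So the one unprofitable continuation that $\mathcal{G}^{\ast}$-robustness supplies after the full menu may be exactly one that disappears after $G_j'$.

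The missing idea is that canonicality concerns allocations, so you are free to change the profile. Replace each $G_k$ by $G_k^{\circ}:M_k\times M\to X_k$ with $G_k^{\circ}(m_k,\widetilde m)=G_k(m_k)$, whose auxiliary coordinate records the whole original message profile. By relabeling, this supports the same allocation. Now suppose $j$ deviates to $G_j'\in\mathcal{G}_j^{\sharp}$:
\begin{itemize}
\item take the full menu with the same image, and use $\mathcal{G}^{\ast}$-robustness to obtain an unprofitable post-deviation continuation equilibrium;
\item replicate it under $(G_j',G^{\circ}_{-j})$ by an injection that routes the information lost at $j$ through the public auxiliary coordinates sent to $k\neq j$. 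This is where $|\mathcal{J}|\geq 2$ enters.
\end{itemize}
Arbitrary admissible deviations then reduce to $\mathcal{G}^{\sharp}$ ones. Every admissible $G_j'$ satisfies $G_j'\sqsupset_j G_j''$ for some $G_j''\in\mathcal{G}_j^{\sharp}$, and relabeling preserves continuation payoffs (Lemma \ref{lem:max:robust:off-equilibrium-pah}).

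\textbf{Step 3 needs a different justification.} Since $v_j$ is defined on outcomes, it cannot penalize an action merely for being available. Also, $\mathcal{G}'$-robustness only tests deviations that $\mathcal{G}'$ contains, so $j$'s incentives cannot force $\mathcal{G}'_j$ to contain a contract with image exactly $L_j$. Moreover, with the prior on $\{\theta_x:x\in L_j\}$ and $L_j$ uncountable, a missing $x$ is a null event and invisible in $j$'s expected payoff.

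What pins the image is the target allocation itself, through the agent's type-by-type incentive constraint. In the paper's construction, the agent strictly prefers every $x_j\notin L_j$ (payoff $8$) while all principals get $-8$. A measurable surjection $\varphi_j:\Theta\to L_j$ assigns each $x\in L_j$ to some type. Any $\mathcal{G}'$-robust equilibrium reproducing this allocation must therefore use every $x\in L_j$ and cannot offer any $x_j\notin L_j$, so its image is exactly $L_j$. Your injection could serve the same purpose, but only if every type $\theta_{x'}$, including $x'\notin L_j$, strictly prefers $x'$ whenever it is available, and you argue through that type's incentive constraint. That is not the argument you give.
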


The economic content is simple. This is the limited-commitment analogue of
the taxation principle. Under full commitment, a menu of contractible
outcomes is enough because the principal controls the entire outcome. Under
imperfect commitment, a menu of contractible actions alone discards
continuation-relevant information. The additional recommendation records the
beliefs and continuation behavior that a mechanism can induce without making
the discretionary action enforceable. In Example \ref{example1}, this means
that principal $j$ need only consider the three contracts in (\ref{ktt1}).
Equivalently, principal $j$ offers a non-empty subset of contractible
actions and allows the agent to recommend any feasible continuation action
conditional on that choice. In any $\mathcal{G}^{\ast }$-continuation
equilibrium, principals follow the agent's non-binding recommendations on
the equilibrium path; the recommendations thus record the continuation
behavior that each contractible action is designed to induce rather than
constrain it.

\subsection{Implementation of robust-equilibrium allocations}

Theorem \ref{thm:minimal:weak-robust} is an allocation result: the reduced
contract space $\mathcal{G}^\ast$ captures the unrestricted robust
equilibrium allocations. A profile-level distinction remains. A profile that
is robust relative to $\mathcal{G}^\ast$ need not itself be robust against
every admissible contract outside $\mathcal{G}^\ast$, because a deviation
may remove the recommendation messages used to support post-deviation
continuation play.

This issue can be resolved without changing the induced allocation. Enlarge
each canonical mechanism by an auxiliary message coordinate that records the
original message profile, while keeping the committed contractible action
unchanged. The added coordinate restores the off-path information needed for
the no-safe-deviation test, and the formal construction yields the following
profile-implementation result.

\begin{prop}
\label{prop:robust:equilibrium} Suppose $|J|\geq 2$. For any $\mathcal{G}%
^{\ast }$-robust equilibrium $\left[ (G_{k}:M_{k}\rightarrow X_{k})_{k\in
J},(\gamma ,q,p)\right] $ that induces allocation $z$, let $M=\times _{k\in
J}M_{k}.$ Then there exists an unrestricted robust equilibrium $\left[
(G_{k}^{\circ }:M_{k}\times M\rightarrow X_{k})_{k\in J},(\gamma ^{\prime
},q^{\prime },p^{\prime })\right] $ that also induces $z$, where
\begin{equation*}
G_{k}^{\circ }(m_{k},\widetilde{m})=G_{k}(m_{k}),\qquad \forall (m_{k},%
\widetilde{m})\in M_{k}\times M,\quad \forall k\in J.
\end{equation*}
\end{prop}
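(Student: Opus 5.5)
The plan is to build the enlarged profile explicitly and check two properties. First, it reproduces $z$ on path. Second, it survives every admissible unilateral deviation $G_j'$, not only those in $\mathcal{G}^\ast_j$.

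\textbf{Step 1: on-path construction.} Fix the $\mathcal{G}^\ast$-robust equilibrium $[G,(\gamma,q,p)]$. Define $G_k^\circ(m_k,\widetilde m)=G_k(m_k)$. The agent's strategy $q'$ sends $(m_k,\widetilde m)_{k}$ with $\widetilde m = m$ in every coordinate $k$, where $m\sim q(\theta)$. That is, each principal's auxiliary coordinate duplicates the full original message profile.

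Continuation play is $\gamma'_j((m_k,\widetilde m^k)_k)=\gamma_j(m)$, which reads only the original messages. Beliefs $p'_j$ are the Bayesian updates of $p_j(m)$ on the diagonal. Off the diagonal, assign $p_j(m)$ as well, so any auxiliary coordinates are simply ignored.

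Feasibility holds because $\gamma_j(m)\in F_j(G_j(m_j))=F_j(G^\circ_j(\cdot))$. Sequential optimality of $\gamma'$ follows from that of $\gamma$, since beliefs and others' actions are identical. Agent optimality and participation follow because any message in the enlarged game yields the same outcome as some message in the original game. Hence $[G^\circ,(\gamma',q',p')]$ is a continuation equilibrium inducing $z$.

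\textbf{Step 2: deviations.} Take any admissible $G_j':M_j'\to X_j$. The idea is to map the continuation game after $(G_j',G^\circ_{-j})$ into one after a canonical deviation. Let $L_j=G_j'(M_j')$ and let $\widehat G|_{M_j(L_j)}\in\mathcal{G}_j^\ast$ be the corresponding menu with recommendations.

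Robustness in $\mathcal{G}^\ast$ gives a $\mathcal{G}^\ast$-continuation equilibrium $[(\widehat G|_{M_j(L_j)},G_{-j}),(\gamma'',q'',p'')]$ that gives $j$ at most $V_j[G,(\gamma,q,p)]$. We now replicate it in the game after $(G_j',G^\circ_{-j})$, and this is where the auxiliary coordinate is used. The agent sends to each $k\neq j$ the pair $(m_k,\widetilde m)$, where $\widetilde m$ encodes the canonical message profile $(m_j^\ast,m_{-j})$. Here $m_j^\ast=(x_j,y_j)$ carries the recommendation $y_j$. To $j$ the agent sends any $m_j'$ with $G_j'(m_j')=x_j$, chosen by the measurable selector guaranteed by admissibility.

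Principals $k\neq j$ read $\widetilde m$, which is publicly observed by all principals, so principal $j$ can read it too. They then play $\gamma''$ with beliefs $p''$. When the coordinates are inconsistent, principals use a fixed selection of the canonical continuation: for example, they follow the first $k\neq j$'s auxiliary coordinate, which exists since $|J|\ge 2$.

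\textbf{Step 3: verification.} Check that the replicated play is a continuation equilibrium after $(G_j',G^\circ_{-j})$. The set of outcomes the agent can induce is exactly the set available under the canonical deviation. The contractible image is $L_j$ for $j$, and the recommendations span all of $F_j(x_j)$ through the auxiliary coordinate. So agent optimality and participation transfer directly. Principals' optimality transfers because beliefs on path equal $p''$.

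Then $j$'s payoff equals the canonical one, which is at most the equilibrium payoff. This holds for every admissible contract space containing $G^\circ$, so the profile is an unrestricted robust equilibrium.

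\textbf{Main obstacle.} I expect the main obstacle to be Step 2–3's off-path specification. The auxiliary messages of different principals $k\neq j$ may disagree, and messages to $j$ may be inconsistent with $\widetilde m$. We must pick measurable beliefs and best responses that do not create profitable agent deviations.

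My first attempt is a "first non-deviating principal's coordinate governs" rule. I would then verify that every agent message reduces to some canonical message with identical outcome, which is what makes the agent's incentive sets coincide.
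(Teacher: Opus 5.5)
Your route is the paper's. Step 1 is the paper's inductive use of Lemma \ref{lem:extension_w}. Steps 2--3 are Lemmas \ref{lem:extension:reverse} and \ref{lem:robust:off-equilibrium}, and picking one message $m_j'$ above each $x_j$ by a selector just folds in the reduction to $\mathcal{G}^\sharp$ from Lemma \ref{lem:max:robust:off-equilibrium-pah}. The off-path bookkeeping you single out is routine: send every message profile to a canonical profile with the same contractible actions, using a selector default when the decoded recommendation is infeasible. That is what the paper does with $\xi$.

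The step that fails is ``$\widetilde m$ encodes the canonical message profile $(m_j^\ast,m_{-j})$.'' The auxiliary coordinate ranges over $M=\times_k M_k$, the message space of the \emph{original} profile. But $m_j^\ast=(x_j,y_j)$ ranges over $M_j(L_j)$ for the \emph{deviator's} menu. That menu can contain contractible actions outside $G_j(M_j)$, and more recommendations than $M$ (or $|J|-1$ copies of it) can distinguish. When $G_j'$ has one message above each $x_j$, this coordinate is the only channel for $y_j$. So the post-deviation $\mathcal{G}^\ast$ equilibrium cannot in general be replicated, and your claim that recommendations ``span all of $F_j(x_j)$'' is unfounded.

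This cannot be patched inside the stated construction: with auxiliary coordinate exactly $M$, the statement is false. Here is a counterexample.
\begin{itemize}
\item Take two principals, types $\theta_1,\theta_2$ with probability $1/2$ each, and $U\equiv-1000$. Principal 2 has $X_2=\{c,d\}$ with singleton feasible sets and $v_2\equiv 0$. The payoffs $u$ and $v_1$ depend only on principal 1's action pair.
\item Let $X_1=\{\bar x,x',x''\}$ with $F_1(\bar x)=\{\bar y\}$, $F_1(x')=\{y',y''\}$, $F_1(x'')=\{y^0\}$.
\item Payoffs, written as (agent, principal 1) for $\theta_1$ then $\theta_2$: $(-100,10)$ for both types at $(\bar x,\bar y)$; $(10,2)$ and $(0,0)$ at $(x',y')$; $(8,0)$ and $(10,1)$ at $(x',y'')$; $(5,-1000)$ and $(5,100)$ at $(x'',y^0)$.
\end{itemize}

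The singleton canonical menus $\{\bar x\}$ and $\{c\}$ form a $\mathcal{G}^\ast$-robust equilibrium with $V_1=10$. The agent never takes $\bar x$ when offered an alternative, and $\{x''\}$ is worth $-450$ to principal 1. Any menu containing $x'$ has the separating equilibrium in which $\theta_1$ recommends $(x',y')$ and $\theta_2$ recommends $(x',y'')$, worth $1.5$.

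Here $M$ is a single point, so $G^\circ$ has singleton message spaces. Now let principal 1 deviate to the plain menu with message space $\{x',x''\}$. Type $\theta_1$ strictly prefers $x'$, so the posterior on $\theta_2$ at $x'$ is at most $1/2$. Principal 1 then strictly prefers $y'$ there, and $\theta_2$ takes $x''$. Every continuation equilibrium pays him $51>10$, so no $(\gamma',q',p')$ makes $G^\circ$ robust.

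The paper's own proof makes the same silent identification. When Lemma \ref{lem:extension:reverse} is applied inside Lemma \ref{lem:robust:off-equilibrium} to the post-deviation profile, its map $\Psi$ writes that profile's messages into $\mathcal{M}^\circ=M$.

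The repair, for your argument and the paper's, is a richer auxiliary coordinate. For instance, use an uncountable standard Borel space, into which every $M_j(L_j)$ Borel-embeds, or, in finite environments, the set of all feasible action-pair profiles. With that change, or under the added hypothesis that $M$ is uncountable, your construction goes through. It still delivers the allocation-level conclusion needed for Theorem \ref{thm:minimal:weak-robust}.
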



Proposition \ref{prop:robust:equilibrium} shows that the canonical contract
space is sufficient for characterizing robust-equilibrium allocations, even
if some equilibrium profiles require a richer off-path message space for
unrestricted implementation.

\section{Distinct Discretionary Actions and Exit}

\label{sec:labor_apps}

\subsection{A Single Principal's Optimal Contract\label{sec:appI}}

Section \ref{sec:revisable} considered bounded revisions of a committed
baseline. This section turns to environments in which the discretionary
choice is a distinct payoff-relevant instrument rather than a revision of
the same final action. The principal contracts on one instrument, cannot
commit to another, and the agent can exit after observing the discretionary
choice. The full-commitment equivalence need not apply, but the extended
taxation principle still sharply simplifies the contracting problem.

Labor contracting is the running interpretation, with compensation as the
contractible action and workload as the non-contractible action. The same
structure also covers settings in which a loan term, insurance coverage,
wholesale price, or regulatory standard is contractible while monitoring,
claim handling, service quality, or enforcement is chosen later.

\paragraph{Model.}

There is one principal and one agent. The agent's type is $%
\theta\in\Theta\subset\mathbb{R}$, where $\Theta$ is an interval, and the
prior $\mu$ is atomless.

The principal's utility function and the agent's utility function are
\begin{eqnarray}
&&v(x,y,\theta),  \label{v_quasi} \\
&&u(x,y,\theta),  \label{u_quasi}
\end{eqnarray}
respectively. Here, $x\in\mathbb{R}$ is a contractible action and $y\in%
\mathbb{R}_{+}$ is a non-contractible action. The agent's reservation
utility is normalized to zero. If trade does not occur, the principal also
receives zero.

After the principal chooses $y$, the agent can continue or walk away. Define
the reduced-form payoffs
\begin{equation*}
\bar{u}(x,y,\theta ):=%
\begin{cases}
u(x,y,\theta ) & \text{if }u(x,y,\theta )\geq 0, \\
0 & \text{otherwise,}%
\end{cases}%
\text{ \hspace{0.15in}and }\hspace{0.15in}\bar{v}(x,y,\theta ):=%
\begin{cases}
v(x,y,\theta ) & \text{if }u(x,y,\theta )\geq 0, \\
0 & \text{otherwise.}%
\end{cases}%
\end{equation*}%
The assumptions below are imposed on the primitive payoffs $u$ and $v$ on
the continuation branch; exit enters through these reduced-form payoffs.

\paragraph{Assumptions.}

We impose the following monotonicity and sorting assumptions.

\begin{assum}
\label{assu_monotone_in_theta} $u(x,y,\theta)$ is strictly increasing in $%
\theta$, and $v(x,y,\theta)$ is non-decreasing in $\theta$.
\end{assum}

Higher types are weakly more valuable to the principal and strictly better
off at any fixed contractual outcome.

\begin{assum}
\label{assu_monotone_in_y} $v(x,y,\theta)$ is increasing in $y$, and $%
u(x,y,\theta)$ is decreasing in $y$.\footnote{%
If instead $v(x,y,\theta)$ is decreasing in $y$ and $u(x,y,\theta)$ is
increasing in $y$, one can redefine $y^{\prime}=-y$ and apply the same
analysis. For example, in a monopoly problem, $y$ may represent product
quality: the seller's payoff is decreasing in quality, whereas the buyer's
utility is increasing in quality.}
\end{assum}

The non-contractible action creates the key tension: the principal prefers a
higher $y$, whereas the agent prefers a lower $y$.

We also impose a global single-crossing condition on the agent's preferences.

\begin{assum}[Single Crossing Property]
\label{assu_single_crossing} For any two distinct pairs $a=(x,y)$ and $%
b=(x^{\prime},y^{\prime})$, the set $\{\theta:u(a,\theta)\geq u(b,\theta)\}$
is an interval, possibly empty or all of $\Theta$, and the indifference set $%
\{\theta:u(a,\theta)=u(b,\theta)\}$ contains at most one type.
\end{assum}

Assumption \ref{assu_single_crossing} is an ordinal single-crossing
restriction on the agent's preferences over action pairs. It is implied by
the usual differentiable Spence-Mirrlees single-crossing condition in
environments where preference differences cross at most once; see \cite%
{gs1996}.

\paragraph{Equilibrium class: atomic participating action pairs.}

The applications of interest use finite menus or piecewise-pooling
mechanisms. For the general statement, we impose an atomicity condition that
captures those cases exactly and keeps the pooling argument exact rather
than approximate. Fix a contract in $\mathcal{G}^{\ast }$ and a truthful
continuation equilibrium in which the principal follows the recommendation
on the equilibrium path. Let $L$ be the menu of contractible actions and let
$q:\Theta \rightarrow \Delta (L\times \mathbb{R}_{+})$ be the agent's
continuation strategy over chosen contractible actions and recommended
non-contractible actions. For each type $\theta $, let
\begin{equation*}
P_{\theta }:=\{(x,y)\in L\times \mathbb{R}_{+}:u(x,y,\theta )\geq 0\}.
\end{equation*}%
Define the aggregate measure over participating action pairs by
\begin{equation*}
Q^{P}(A):=\int_{\Theta }q(\theta )[A\cap P_{\theta }]\,\mu (d\theta )
\end{equation*}%
for every Borel set $A\subseteq L\times \mathbb{R}_{+}$. We call the
equilibrium an \emph{atomic participating-action-pair equilibrium} if,
whenever $Q^{P}$ is not concentrated, up to null sets, on a single action
pair, there exist (at least) two distinct action pairs $a,b\in L\times
\mathbb{R}_{+}$ such that $Q^{P}(\{a\})>0$ and $Q^{P}(\{b\})>0$. This
condition is automatic when the participating action-pair measure has finite
or countable support, including finite or countable menu-with-recommendation
message spaces and pure piecewise-pooling mechanisms. Diffuse
thin-randomization schemes can be treated by approximation under stronger
regularity; Appendix \ref{sec:proof_proposition_single} records that
interpretation.

\paragraph{Main results.}

By Theorem \ref{thm:minimal:weak}, we can restrict attention to menus with
recommendations. The agent chooses a contractible action $x$ and recommends
a non-contractible action $y$; by Lemma \ref{lem:extension_w2}, we can focus
on truthful continuation equilibria in which the principal follows the
recommendation on the equilibrium path.

The first result shows that, despite the apparent richness of the
contracting problem, the principal can restrict attention to a particularly
simple class of contracts.

\begin{prop}
\label{proposition_single} Under Assumptions \ref{assu_monotone_in_theta}--%
\ref{assu_single_crossing}, together with the maintained local uniform
continuity and integrability conditions used in Lemma \ref%
{lem:rent_extraction_common_slack}, the single-principal optimum over atomic
participating-action-pair equilibria is attainable with a single
contractible-action offer $x$.
\end{prop}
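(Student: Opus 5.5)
By Theorem \ref{thm:minimal:weak} and Lemma \ref{lem:extension_w2}, it suffices to work with a contract in $\mathcal{G}^{\ast}$ with menu $L$ and a truthful continuation equilibrium. In that equilibrium the agent's message $(x,y)$ is followed by the principal choosing $y$. Fix any atomic participating-action-pair equilibrium. The plan is to show that the principal's payoff is weakly improved by a menu that offers a single $x$, and then to argue the supremum is attained.

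\textbf{Step 1: the participating pairs are ordered and cut off by type.} Assumption \ref{assu_monotone_in_theta} makes $u$ strictly increasing in $\theta$, so each pair $a$ has a participating set $\{\theta:u(a,\theta)\geq 0\}$ that is an upper interval. Assumption \ref{assu_single_crossing} makes the sets of types preferring one pair to another into intervals. Together with agent optimality, these facts should imply the following. Each pair $a$ in the support of $Q^{P}$ is chosen by an interval of types $I_a$. These intervals are ordered. Types below a cutoff $\underline\theta$ exit.

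\textbf{Step 2: compare with pooling, using the atomic condition.} If $Q^{P}$ is concentrated on a single pair $(x^{\ast},y^{\ast})$, offer the singleton menu $L=\{x^{\ast}\}$. Then verify that $y^{\ast}$ can still be supported as the principal's best response. The idea is that the posterior on the pooled participating types is the same as before. Any off-path belief argument carries over, because with one $x$ the only messages are recommendations, and these can be treated as babbling. If instead the atomic condition gives two pairs $a,b$ with positive mass, consider the top pair in the order. I would show that moving the lower atom's types onto the higher pair, or onto a single re-chosen $x$, does not lower the principal's payoff.

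The mechanism for this comparison is sequential rationality. On each pooled interval $I_a$ the principal's choice of $y_a$ is a best response to the posterior restricted to $I_a$, given the exit reduced-form payoffs $\bar u,\bar v$. Under Assumption \ref{assu_monotone_in_y} the principal wants $y$ as high as participation allows. Merging two atoms therefore changes only who exits, and Assumption \ref{assu_monotone_in_theta} ($v$ non-decreasing in $\theta$) controls the payoff from those marginal types. Lemma \ref{lem:rent_extraction_common_slack} should supply the rent-extraction comparison. It should say that, under the local uniform continuity and integrability conditions, the principal can shift the contractible action to extract common slack across pooled types without losing participation mass. Iterating this merge over the countably many atoms reduces the menu to one pair. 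A limit argument, using dominated convergence via integrability, handles countable rather than finite support.

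\textbf{Step 3: attainment, and the main obstacle.} With the problem reduced to single offers, the value is the supremum of a function of $x$ alone: the principal picks $y(x)$ optimally under the pooled posterior with exit, and I would maximize over $x$. The main obstacle I expect is the merging step in Step 2. With a distinct instrument and exit after $y$, merging two atoms changes the posterior and hence the continuation $y$. The new $y$ could induce more exit and lower the principal's payoff, so the comparison is not a simple convexity argument. My first attempt would be to pick the merged $x$ so that the marginal (lowest participating) type of the higher atom remains exactly indifferent to exit. I would then use single crossing to show all types above still participate, and apply Lemma \ref{lem:rent_extraction_common_slack} to bound the principal's loss on the lower atom by his gain from extracting slack.
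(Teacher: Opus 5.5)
\textbf{There is a genuine gap in Step 2}, and it is where you expect it. The merging argument for two or more participating atoms is never carried out, and as described it cannot be. Comparing a menu with a merged menu means re-solving the continuation game: a new posterior, a new $y$, and a new exit set. You give no inequality that settles this comparison.

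Lemma \ref{lem:rent_extraction_common_slack} also does not say what you need. It fixes the posterior $p$ and the pair $(x,y)$. It shows that if all participating types share a common slack $\kappa>0$, then raising the \emph{non-contractible} action $y$ slightly strictly raises the principal's posterior payoff. It is a statement about the principal's ex post best response. It does not let you shift the contractible action, merge atoms, or redesign the menu. You gesture at this under ``sequential rationality,'' but you then apply it to a payoff comparison across menus rather than as a contradiction.

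\textbf{The missing idea} is that two positive-mass participating atoms cannot coexist in a continuation equilibrium at all. So no merging, no iteration over countably many atoms, and no limit argument is needed.
\begin{itemize}
\item Take $a=(x,y)$ and $b=(x',y')$ with $Q^{P}(\{a\})>0$ and $Q^{P}(\{b\})>0$.
\item Assumption \ref{assu_single_crossing} (interval preference sets, at most one indifferent type) rules out interlacing. So after relabeling, $\Theta(b)$ lies above $\Theta(a)$.
\item For $\theta'\in\Theta(b)$, incentive compatibility and strict monotonicity in $\theta$ give $u(x',y',\theta')\ge u(x,y,\theta')\ge \sup_{\theta\in\Theta(a)}u(x,y,\theta)>0$. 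The last inequality is strict because $\Theta(a)$ has positive measure under an atomless prior. It therefore contains two participating types, and the higher one gets strictly positive utility.
\item This is a common slack on the participating set after $b$. The lemma then says the principal strictly gains by raising $y'$ after message $b$. That contradicts the premise that following the recommendation $y'$ is optimal.
\end{itemize}

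\textbf{Finishing the argument.} The atomic condition now forces $Q^{P}$ onto a single pair $(\bar x,\bar y)$, or else there is no trade.

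The comparison with a single offer is then one weak inequality, not a re-support argument. Any type with $u(\bar x,\bar y,\theta)>0$ would deviate to $(\bar x,\bar y)$ rather than exit. So the menu payoff equals $\int\mathbf{1}_{\{u(\bar x,\bar y,\theta)\ge 0\}}v(\bar x,\bar y,\theta)\,\mu(d\theta)$. Offering $\bar x$ alone, with the principal choosing $y$ optimally, yields at least this.

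This also corrects your single-atom step. You try to show that $y^{\ast}$ remains a best response after the singleton offer. That is false in general: after a singleton offer the posterior is the prior, not the original pooled posterior. But it is also unnecessary, because the weak inequality is all that is required.
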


The proposition is exact for the atomic equilibrium class. The restriction
selects the finite and pooling mechanisms that carry the economic argument;
Appendix \ref{sec:proof_proposition_single} explains how diffuse
thin-randomization schemes can be read by approximation under stronger
regularity.

To visualize the force of Proposition \ref{proposition_single}, fix a
truthful continuation equilibrium induced by a menu with recommendations $G$%
, with continuation strategy $q:\Theta \rightarrow \Delta (L\times \mathbb{R}%
_{+})$ over chosen contractible actions and recommended non-contractible
actions, where $L$ denotes the menu of contractible actions that the agent
can choose given $G$, and let $Q^{P}$ be the aggregate participating
action-pair measure defined above. For each $(x,y)$ with $Q^{P}(\{(x,y)\})>0$%
, define
\begin{equation*}
\Theta (x,y):=\left\{ \theta \in \Theta :q(\theta )[\{(x,y)\}]>0\text{ and
the agent does not exit the contract}\right\} .
\end{equation*}%
Thus, $\Theta (x,y)$ is the set of types that choose $(x,y)$ with positive
probability and remain in the relationship.

\begin{figure}[tbp]
\centering
\includegraphics[scale=0.6]{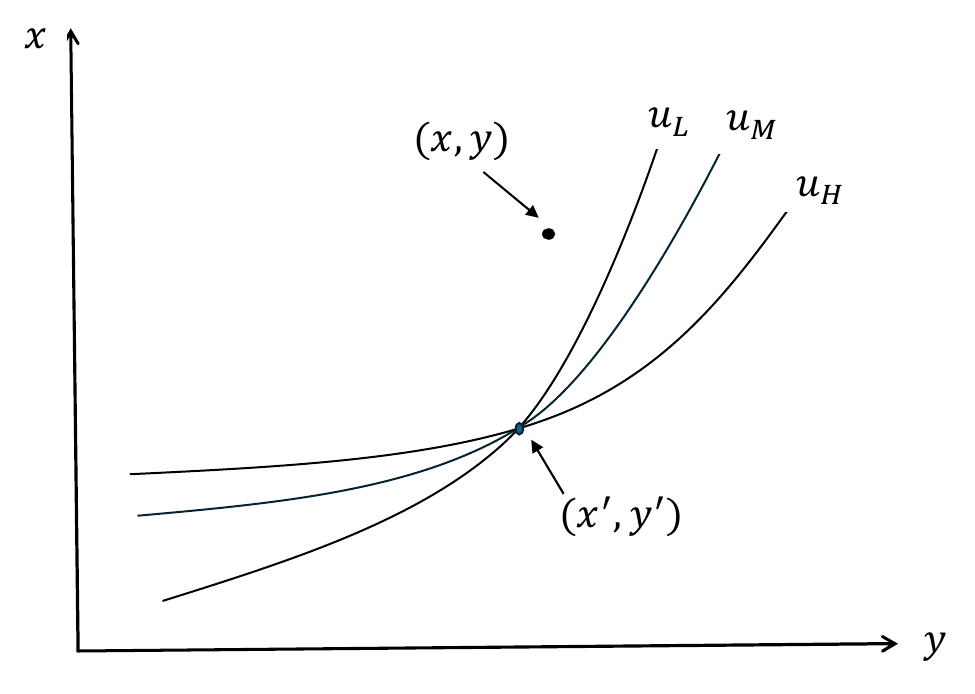}
\caption{A schematic illustration of why two positive-measure pooling
regions are inconsistent}
\label{fig:pooling}
\end{figure}

Figure \ref{fig:pooling} illustrates why two positive-measure pooling
regions must be ordered by type. Suppose instead that they are interlaced:
low and high types choose one pair, say $(x,y)$, while an intermediate type
chooses another pair, $(x^{\prime },y^{\prime })$. Since $(x,y)$ is
acceptable to both the low and high types, single crossing implies that it
lies above the intermediate type's relevant indifference curve as well. The
intermediate type would therefore strictly prefer $(x,y)$ to $(x^{\prime
},y^{\prime })$, contradicting incentive compatibility. Hence any two
positive-measure pooling regions must be ordered.

Nor can ordered positive-measure pooling regions be sustained. In any higher
pooling region, participating types obtain rents above their reservation
utilities. The principal can therefore raise the non-contractible action
slightly, which increases the principal's payoff and lowers the agent's
payoff, while preserving participation for those types. Appendix \ref%
{sec:proof_proposition_single} formalizes this step as a rent-extraction
lemma from common slack. Thus, within the atomic equilibrium class, any menu
with recommendations generates the payoff of a single participating action
pair, and hence of a single contractible-action offer.

This is the distinct-instrument counterpart to the revisable-action
equivalence. Because $y$ is not a revision of $x$, the conclusion is not
that full commitment is reproduced. Instead, exit and sorting make rich
communication payoff-equivalent to a simple contractible-action offer.

Within the atomic equilibrium class, Proposition \ref{proposition_single}
reduces the optimal contracting problem to a single offer $x$. Once such an
offer is made, the relevant continuation decision is whether the agent stays
or exits after observing the principal's choice of $y$. For any $(x,y)\in
\mathbb{R}\times \mathbb{R}_{+}$, let
\begin{equation*}
S(x,y):=\{\theta \in \Theta :u(x,y,\theta )\geq 0\}
\end{equation*}%
be the set of types that remain. Since $u(x,y,\theta )$ is increasing in $%
\theta $, this set is either empty, or all of $\Theta $, or an upper
interval. When the cutoff is interior, its lower endpoint $\theta (x,y)$
satisfies $u(x,y,\theta (x,y))=0$.

Define $C:=\left\{ (x,y)\in \mathbb{R}\times \mathbb{R}_{+}:\exists \theta
\in \Theta \text{ such that }u(x,y,\theta )\geq 0\right\} ,$and, for each $%
x\in \mathbb{R}$,
\begin{equation*}
C(x):=\{y\in \mathbb{R}_{+}:(x,y)\in C\}.
\end{equation*}%
Let $B:=\{x\in \mathbb{R}:C(x)\neq \emptyset \}.$

The second result characterizes the principal's optimal single offer.

\begin{prop}
\label{proposition_single_optimal} A single contractible-action offer $%
x^{\ast }$ is optimal within the atomic equilibrium class if and only if
\begin{equation}
x^{\ast }\in \arg \max_{x\in B}\left[ \max_{y\in C(x)}\int_{\Theta }\mathbf{1%
}_{\{u(x,y,\theta )\geq 0\}}v(x,y,\theta )\,\mu \left[ d\theta \right] %
\right] .  \label{optimal_contract}
\end{equation}
\end{prop}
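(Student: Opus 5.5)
By Proposition \ref{proposition_single}, the principal's optimum over atomic participating-action-pair equilibria is attained by a single contractible-action offer. The plan is therefore to compute exactly what payoff a single offer $x$ can generate, and then maximize over $x$.

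\textbf{Step 1: continuation play after a single offer.} Fix a single offer $x$, i.e.\ the canonical menu with $L=\{x\}$. Since $L$ is a singleton, the agent's message can only carry information through the non-binding recommendation. Within the atomic class, this communication cannot raise the principal's payoff above pooling on one participating pair. Indeed, the ordering-plus-rent-extraction argument behind Proposition \ref{proposition_single} applies verbatim to the case of a singleton $L$, and it collapses $Q^P$ to a single pair $(x,y)$.

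So along any equilibrium generated by offering $x$, the induced payoff equals that of some pair $(x,y)$ with $y\in\mathbb{R}_{+}$. Under that pair:
\begin{itemize}
\item types in $S(x,y)$ continue and yield $v(x,y,\theta)$;
\item all other types exit and yield $0$.
\end{itemize}
Given the reduced-form payoffs $\bar v$, the principal's value from $(x,y)$ is therefore
\begin{equation*}
W(x,y):=\int_\Theta \mathbf{1}_{\{u(x,y,\theta)\ge 0\}}v(x,y,\theta)\,\mu[d\theta].
\end{equation*}
If $y\notin C(x)$, nobody participates and $W(x,y)=0$. Choosing such a $y$ is weakly dominated whenever $C(x)\neq\emptyset$, as long as the maximum over $C(x)$ is nonnegative. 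I would check the corner convention separately: when $B=\emptyset$ or all values are negative, the outside value $0$ is the benchmark.

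\textbf{Step 2: achievability.} For the ``if'' direction I need that the principal can actually induce $y^\ast\in\arg\max_{y\in C(x^\ast)}W(x^\ast,y)$. I would construct a continuation equilibrium in which all types recommend $y^\ast$, so the posterior equals the prior. The principal's optimal choice of $y$ given the prior then maximizes $W(x^\ast,\cdot)$, and $y^\ast$ is a best response by definition. Recommending is costless for the agent, so no type wants to deviate. Off-path recommendations are handled by setting $\gamma\equiv y^\ast$, which satisfies Lemma \ref{lem:extension_w2}'s truthful form.

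Conversely, for any single offer $x$, the principal's sequential rationality forces the chosen $y$ to maximize expected $\bar v$ under the posterior. By Step 1, the payoff is at most $\max_{y\in C(x)}W(x,y)$. Hence the value of offering $x$ is exactly
\begin{equation*}
\Phi(x):=\max_{y\in C(x)}W(x,y),
\end{equation*}
and the value is $0$ for $x\notin B$.

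\textbf{Step 3: the equivalence.} Combining Steps 1 and 2 with Proposition \ref{proposition_single}, the optimum over the atomic class equals $\sup_{x}\Phi(x)$. A single offer $x^\ast$ attains this optimum if and only if $x^\ast\in\arg\max_{x\in B}\Phi(x)$, which is \eqref{optimal_contract}.

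\textbf{Main obstacle.} The main obstacle is the upper bound in Step 1 for a fixed $x$. One must rule out that separating recommendations, which induce type-dependent $y$ at the same $x$, could beat pooling; otherwise the inner maximum would not be $\max_y W$. I would handle this by invoking the interlacing and rent-extraction lemma (Lemma \ref{lem:rent_extraction_common_slack}). The claim is that any two atoms of $Q^P$ would let the principal raise $y$ on the higher atom profitably, which contradicts sequential rationality. Hence only one participating atom survives, and its value is $W(x,y)$ for its $y$.
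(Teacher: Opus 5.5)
Your proposal is correct and takes the same route as the paper: use Proposition \ref{proposition_single} to reduce to a single offer, value each pair $(x,y)$ by $\int_\Theta \mathbf{1}_{\{u(x,y,\theta)\ge 0\}}v(x,y,\theta)\,\mu[d\theta]$, and maximize first over $y\in C(x)$ and then over $x\in B$. The paper's proof is terser and leaves implicit what your Steps 1--2 spell out: in the atomic class a fixed offer $x$ yields the payoff of only one participating pair, and the maximizing $y$ is implemented by a pooling recommendation whose posterior equals the prior. It also does not address the nonnegativity corner you flag.
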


Proposition \ref{proposition_single_optimal} gives a simple two-step
procedure. For each contractible action $x$, the principal first chooses the
non-contractible action $y$ to maximize expected payoff, taking into account
the set of types that remain after observing $y$. The principal then chooses
the contractible action $x$ that yields the highest value of this
continuation problem.

The result is not tied to labor markets: it applies whenever the primitive
payoffs satisfy the monotonicity and single-crossing conditions, without
requiring separability.

\paragraph{Application to the labor-contracting example.}

We now return to the labor example introduced in Section \ref%
{sec:motivating_example}. Let
\begin{equation*}
v(x,y,\theta)=y\theta-x^{2},\qquad u(x,y,\theta)=\frac{x\theta-y^{2}}{\sqrt{%
\theta}},
\end{equation*}
where $x$ is base compensation, $y$ is work intensity, and $\theta\sim%
\mathrm{Unif}[3,4]$.

Applying Proposition \ref{proposition_single_optimal}, we restrict attention
to a single compensation offer $x$. If $x\leq 0$, no type accepts except
trivially at $(0,0)$, so we focus on $x>0$. The cutoff type is determined by
\begin{equation*}
x\theta -y^{2}=0\qquad \Longrightarrow \qquad \theta (x,y)=\frac{y^{2}}{x}.
\end{equation*}%
Hence, the worker stays if and only if $\theta \geq y^{2}/x$.

Let $t:=\theta(x,y)=y^{2}/x$. Since $\theta\in[3,4]$,
\begin{equation*}
\Pr(\text{stay}) = \Pr(\theta\geq t) = \left\{
\begin{array}{ll}
1 & \text{if } t\leq 3, \\
4-t & \text{if } 3<t<4, \\
0 & \text{if } t\geq 4.%
\end{array}
\right.
\end{equation*}
Using $y=\sqrt{xt}$, the principal's expected payoff can be written as a
function of $(x,t)$.

For $t\in \lbrack 3,4)$, expected profit is
\begin{equation*}
\Pi (x,t)=\Pr (\theta \geq t)\,\mathbb{E}\left[ y\theta -x^{2}\mid \theta
\geq t\right] =(4-t)\left( \sqrt{xt}\cdot \frac{t+4}{2}-x^{2}\right)
=(4-t)\left( K(t)\sqrt{x}-x^{2}\right) ,
\end{equation*}%
where
\begin{equation*}
K(t):=\frac{\sqrt{t}(t+4)}{2}.
\end{equation*}

For $t\leq 3$, all types stay and the principal's payoff is increasing in $t$%
, so the best no-exit cutoff is $t=3$. For each $t\in[3,4)$, choose $%
x^{\ast}(t)$ to maximize $\Pi(x,t)$, and then choose $t^{\ast}$ to maximize $%
\Pi(x^{\ast}(t),t)$. The first-order condition gives $x^{\ast}(t)=\left(
K(t)/4\right) ^{2/3}$, and substituting this expression shows that the
continuation value is maximized at $t^{\ast}=3$. The unique solution is
\begin{equation*}
x^{\ast}=\left(\frac{7\sqrt{3}}{8}\right)^{2/3}=1.32, \qquad y^{\ast}=\sqrt{%
3x^{\ast}}=1.99,
\end{equation*}
so all types remain in equilibrium.

Thus the optimal rich mechanism in this class is payoff-equivalent to a
single offer of base compensation, with work intensity chosen so that the
lowest participating type is just indifferent.

\subsection{Robustness of Simple Contracts in Common Agency \label{sec:appII}%
}

We now ask whether the single-offer structure survives common agency, where
several principals compete for the same privately informed agent and
important performance dimensions remain non-contractible. The answer is yes
under principal-specific exit and separable agent utility across
relationships.

\paragraph{A motivating example.}

Consider a worker who can work for two firms. Each firm can contract on a
contractible support term $x_{j}$, but not on eventual work intensity $y_{j}$%
. After observing realized intensities, the worker decides separately for
each firm whether to remain. The question is whether a firm can profit by
replacing a simple support offer with an arbitrary mechanism, such as a menu
of support terms and recommendations. Proposition \ref{prop_pooling_common}
shows that it cannot do so in the robust-equilibrium sense.

\paragraph{Model.}

There are two principals, indexed by $j=1,2$, and one agent. The agent's
type is $\theta\in\Theta=[\theta_{\min},\theta_{\max}]\subset\mathbb{R}$,
and the common prior is an atomless distribution $\mu\in\Delta(\Theta)$.

For each principal $j$, let $x_{j}\in \mathbb{R}$ denote the contractible
action and $y_{j}\in \mathbb{R}_{+}$ denote the non-contractible action. Let
$x=(x_{1},x_{2})$ and $y=(y_{1},y_{2})$. Principal $j$'s payoff is $%
v_{j}(x,y,\theta )$, while the agent's utility is $u(x,y,\theta ).$

The agent can remain with or walk away from each principal separately. If
the agent exits from principal $j$, both the agent and principal $j$ receive
zero from that bilateral relationship.

\paragraph{Assumptions.}

The common-agency extension uses a bilateral version of the single-principal
conditions, together with separability across relationships. We state the
condition in bundled form here; the clause-by-clause formulation is recorded
in the Supplemental Appendix.

\begin{assum}[Common-agency reduction]
\label{Ass_multi_bilateral_reduction} The agent's utility is separable
across principals, $u(x,y,\theta)=u_1(x_1,y_1,\theta)+u_2(x_2,y_2,\theta)$.
Each principal's payoff can be written as $v_j(x,y,\theta)=\widetilde
v_j(x_j,y_j,x_{-j},\theta)$, and if the agent exits from principal $j$ then
principal $j$ receives zero. For each principal $j$ and each fixed $x_{-j}$,
the bilateral primitives $u_j(x_j,y_j,\theta)$ and $\widetilde
v_j(x_j,y_j,x_{-j},\theta)$ satisfy the single-principal monotonicity,
opposite-preference, single-crossing, local-regularity, and atomicity
conditions in Section \ref{sec:appI}. Following any unilateral deviation by
principal $j$, the no-safe-deviation test admits a continuation equilibrium
in which the payoff contribution from the non-deviating relationship is
independent of principal $j$'s message.
\end{assum}

This is a sufficient condition, not a claim that common-agency environments
are generically separable. Its role is to isolate a class of applications in
which each bilateral relationship has its own exit decision and a unilateral
deviation cannot improve by manipulating continuation payoffs in the other
relationship. The separability and principal-separable-continuation clauses
then make the deviation comparable to the single-principal problem: holding
the other principal's contractible offer fixed, the deviating principal
faces a bilateral menu-with-recommendations problem of the kind analyzed in
Section \ref{sec:appI}.

\paragraph{Main result.}

Consider the \emph{single contractible-action offer game}, in which each
principal offers only one contractible action. The next proposition shows
that equilibria of this simple game are robust to arbitrary mechanism
deviations in the no-safe-deviation sense of Definition \ref%
{def:equilibrium:robust}.

\begin{prop}
\label{prop_pooling_common} Under Assumption \ref%
{Ass_multi_bilateral_reduction}, every equilibrium allocation in the single
contractible-action offer game is a robust-equilibrium allocation.
\end{prop}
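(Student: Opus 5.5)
Fix an equilibrium of the single contractible-action offer game with offers $(x_1^*,x_2^*)$, continuation actions $(y_1^*,y_2^*)$, and induced allocation $z$. Each single offer $x_j^*$ is the contract $\widehat{G}|_{M_j(\{x_j^*\})}\in\mathcal{G}_j^\ast$. By Theorem \ref{thm:minimal:weak-robust}, $\mathcal{Z}^{\text{robust-}\mathcal{G}^\ast\text{-}\mathcal{E}}=\mathcal{Z}^{\text{robust-}\mathcal{E}}$. Proposition \ref{prop:robust:equilibrium} then lifts any $\mathcal{G}^\ast$-robust profile to an unrestricted robust equilibrium with the same allocation. So it suffices to show that the single-offer profile, with its continuation play, is a $\mathcal{G}^\ast$-robust equilibrium. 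That is, for every principal $j$ and every menu with recommendations $G_j'\in\mathcal{G}_j^\ast$, we must exhibit one $\mathcal{G}^\ast$-continuation equilibrium after $(G_j',G_{-j})$ that gives $j$ no more than $V_j$ at the original profile.

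First I check that the original profile is a $\mathcal{G}^\ast$-continuation equilibrium. The agent's message set is essentially a recommendation of $y_j$ for each $j$. I let the agent recommend $(y_1^*,y_2^*)$ and let the principals follow on path. Off path, beliefs are arbitrary and are chosen so that the principals still play $y_j^*$, for instance by assigning beliefs concentrated where $y_j^*$ is a best response. On path, the single-offer equilibrium already makes the $y_j^*$ optimal. So this is a continuation equilibrium.

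Next, fix a deviation $G_j'$ with menu $L_j$. I invoke the last clause of Assumption \ref{Ass_multi_bilateral_reduction}. It gives a continuation equilibrium after the deviation in which the payoff contribution from relationship $-j$ does not depend on $j$'s message; I select the one in which principal $-j$ keeps $x_{-j}^*$ and the bilateral play with $-j$ is unchanged. Separability of $u$ then makes the agent's problem with principal $j$ a bilateral problem with primitives $u_j$ and $\widetilde v_j(\cdot,\cdot,x_{-j}^*,\cdot)$, and these satisfy the Section \ref{sec:appI} assumptions. This continuation equilibrium lies in the atomic class, and by Lemma \ref{lem:extension_w2} I may take it truthful. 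Proposition \ref{proposition_single} then says principal $j$'s payoff from it is at most the best single-offer payoff $\max_{x_j}\max_{y_j}\int \mathbf 1_{\{u_j\ge0\}}\widetilde v_j\,d\mu$, computed as in Proposition \ref{proposition_single_optimal} with $x_{-j}^*$ held fixed. The original profile is an equilibrium of the single-offer game, so $x_j^*$ already attains this maximum against $x_{-j}^*$. Hence the deviation is not profitable in this continuation equilibrium, which is all the no-safe-deviation test requires.

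The main obstacle is the bilateral reduction. The agent's choice inside $G_j'$ must depend only on the bilateral problem, and principal $j$'s posterior, which is formed from the full message profile, must coincide with the bilateral posterior. Even with separable utility, correlated messaging across relationships could break this. I would handle it by having the agent's message to $-j$ be type-independent off the deviation (the recommendation $y_{-j}^*$), which the assumption's independence clause permits. The atomicity needed for Proposition \ref{proposition_single} comes directly from the bundled assumption.
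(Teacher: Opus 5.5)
Your proof is correct, and its core matches the paper's. You take the principal-separable continuation equilibrium after the deviation and reduce to the bilateral problem with primitives $u_j$ and $\widetilde v_j(\cdot,\cdot,x_{-j}^{\ast},\cdot)$. You then bound principal $j$'s payoff by a single-offer payoff via Proposition \ref{proposition_single}, and use that $x_j^{\ast}$ is a best single offer against $x_{-j}^{\ast}$.

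The difference is the outer reduction. The paper runs the no-safe-deviation test directly on an arbitrary mechanism $G_j'$ against the simple-offer profile itself. The assumption supplies a continuation equilibrium after $(G_j',G_{-j})$, and the taxation principle (Lemma \ref{lem:extension_w2}) is used only to evaluate its payoff. This shows the simple profile is robust as it stands. Because the non-deviating offer keeps a trivial message space, no channel for signalling through $m_{-j}$ ever opens.

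You instead check only deviations in $\mathcal{G}_j^{\ast}$ and then appeal to Proposition \ref{prop:robust:equilibrium}, which is exactly what answers the paper's caveat that $\mathcal{G}^{\ast}$-robustness alone need not deter arbitrary deviations. This route needs the separable-continuation clause only for canonical deviations. The cost is that it yields an allocation-equivalent robust equilibrium in the enlarged contracts $G_k^{\circ}$, rather than robustness of the simple profile itself; that is enough here, since the statement concerns allocations. Writing $-j$'s offer as a menu with a continuum of recommendations is also what creates the obstacle you flag.

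Your patch for that obstacle is available, but not for the reason you give: the clause says neither that play with $-j$ is unchanged nor that the agent's message to $-j$ is type-independent. The clean route is to take the principal-separable continuation equilibrium for the singleton-message offer of $-j$. Then transport it to its $\mathcal{G}_{-j}^{\ast}$ version by Lemma \ref{lem:extension_w}, using the constant surjection from $\{x_{-j}^{\ast}\}\times\mathbb{R}_{+}$ onto the single message. This makes the agent's message to $-j$ constant, and continuation play ignores it. The posterior concern is harmless for the rent-extraction step anyway, since the common slack bound survives any refinement of principal $j$'s posterior by $m_{-j}$.
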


The logic is direct. A complex deviation defeats the simple-offer
equilibrium only if every continuation equilibrium after the deviation gives
the deviator a higher payoff. Under principal-separable continuation, one
post-deviation continuation equilibrium holds fixed the payoff contribution
from the non-deviating relationship. The deviating principal's communication
problem then reduces to the single-principal problem in Section \ref%
{sec:appI}, holding the other principal's contract fixed. Hence every
complex deviation has a continuation equilibrium that is payoff-equivalent
to a simple contractible-action deviation, and no such deviation is
profitable in an equilibrium of the simple-offer game. The formal proof is
in the Supplemental Appendix.

\paragraph{Application to the common-agency labor example.}

\label{example_common_agency}

We now apply Proposition \ref{prop_pooling_common} to the two-firm labor
example in Section \ref{sec:Example3}. Suppose that the worker can work for
two firms. Let $x_{j}\geq 0$ and $y_{j}\geq 0$ denote the contractible
support and non-contractible work intensity chosen by firm $j$, for $j=1,2$.
The firms' profit functions are
\begin{equation*}
v_{j}=(1+\beta x_{-j})y_{j}\theta -x_{j}^{2},\qquad j=1,2,
\end{equation*}%
where $\beta =17/21$. Thus, each firm's payoff depends on the other firm's
contractible action. The worker's utility function is
\begin{equation*}
u=u_{1}+u_{2}=\frac{x_{1}\theta -y_{1}^{2}}{\sqrt{\theta }}+\frac{%
x_{2}\theta -y_{2}^{2}}{\sqrt{\theta }}.
\end{equation*}%
As in the single-principal example, assume that $\theta \sim \mathrm{Unif}%
[3,4]$.

For each firm $j$, the worker remains if and only if $\theta\geq
t_j:=y_j^2/x_j$. Holding $x_{-j}$ fixed and writing $A_j=1+\beta x_{-j}$,
the same cutoff calculation as in the single-principal example gives the
best response
\begin{equation*}
BR_{j}^{x}(x_{-j})=\left[ \left( 1+\beta x_{-j}\right) \frac{7\sqrt{3}}{8}%
\right] ^{2/3}.
\end{equation*}%
The derivation is collected in the Supplemental Appendix.

With $\beta =17/21$, this best response has the closed-form fixed point
\begin{equation*}
BR_{j}^{x}(3)=\left[ \left( 1+\frac{17}{21}\cdot 3\right) \frac{7\sqrt{3}}{8}%
\right] ^{2/3}=3.
\end{equation*}%
Therefore the simple-offer game has a symmetric equilibrium
\begin{equation*}
x_{1}^{\ast }=x_{2}^{\ast }=3,\qquad y_{1}^{\ast }=y_{2}^{\ast }=3,\qquad
t_{1}^{\ast }=t_{2}^{\ast }=3.
\end{equation*}%
All worker types remain with both firms, and the lowest type is just
indifferent. By Proposition \ref{prop_pooling_common}, this simple-offer
equilibrium is robust to deviations to richer mechanisms.

\section{Conclusion}

\label{sec:conclusion}

This paper studies when full commitment is a valid benchmark for contracts
that leave some decisions to later discretion. The answer depends on the
structure of the discretionary decision. When the principal's later choice
is a bounded revision of the same final action, limited commitment changes
implementation but not final-action allocations. When the contractible and
discretionary choices are distinct instruments, the equivalence can fail.

The technical device behind these results is the \emph{extended taxation
principle}, which identifies a minimal canonical contract space under
general preferences. Menus specify contractible actions, while
recommendations carry the continuation information needed to discipline
discretionary choices.

The analysis applies this test in two directions. In revisable-action
environments, continuation equilibria implement the same final-action
distributions as full commitment, whether revisions take the form of bounded
additive adjustments or bounded percentage adjustments. When discretionary
choices are distinct instruments and the agent can exit, the equivalence
need not hold. Even there, in structured applications optimal
single-principal contracts are simple, and, under separability, this
simplicity is robust in common agency even against arbitrary mechanisms.

The main text focuses on publicly observable contracting. This distinction
is immaterial with a single principal, but under competition public messages
help determine the beliefs and continuation actions that discipline
deviations. The final section after the Supplemental Appendix, ``Private
Contracting with Imperfect Commitment,'' treats the corresponding
private-contracting environment, in which each principal observes only his
own mechanism and the message sent to him. It shows that the
single-principal conclusions are unchanged and identifies the private
canonical contract space needed for robust equilibrium with multiple
principals.

Together, these results show that limited commitment need not make contract
design intractable. Once the correct canonical contract space is identified,
familiar full-commitment benchmarks can be evaluated as equilibrium
implications rather than imposed as assumptions. Discretion is not harmless,
but its consequences can be analyzed within a disciplined contract space,
clarifying when full commitment is justified and when it is not.\newpage

\appendix

\renewcommand{\theHsection}{appendix.\Alph{section}} \renewcommand{%
\theHsubsection}{appendix.\Alph{section}.\arabic{subsection}}

\begin{center}
{\Large \textbf{Appendix}}
\end{center}

\noindent The appendix collects proof material that is used directly in the
main text. Appendix A proves the revisable-action equivalence in Proposition %
\ref{prop:revisable_equivalence_main}. Appendix B records the formal
continuation-equilibrium definition underlying Definition \ref%
{def:weak-equilibrium}. Appendix C contains the canonical-contract
definitions, the continuation-equilibrium reduction lemma, and the main
canonicality arguments; remaining auxiliary robust-equilibrium facts are
collected in the Supplemental Appendix. Appendix D contains the proof of the
single-contractible-action result in Proposition \ref{proposition_single}.
The proof of Proposition \ref{proposition_single_optimal}, the proof of
Proposition \ref{prop_pooling_common}, and the less central auxiliary proofs
are collected in the Supplemental Appendix.

\section{Proof of Proposition \protect\ref{prop:revisable_equivalence_main}
\label{sec:proof_appIII}}

\paragraph{Admissibility convention.}

The equivalence result is stated for admissible continuation equilibria
whose transformed mechanisms in the proof remain admissible: replacing a
limited-commitment mechanism by its final-action baseline, and replacing a
full-commitment mechanism by the shifted baseline that supports the same
final action under revision. This is only a regularity convention, automatic
in finite or countable menus and in standard compact-continuous
specifications.

\paragraph{Proof}

Fix $\alpha \geq 0$. For a posterior belief $p\in \Delta (\Theta )$, write
\begin{equation*}
V(z\mid p)=\int_{\Theta }v(z,\theta )\,p(d\theta ).
\end{equation*}%
Since $v(\cdot ,\theta )$ is strictly concave for every $\theta $, the
posterior expected payoff $V(\cdot \mid p)$ is strictly concave. By
assumption it admits a maximizer $r(p)$. Hence $r(p)$ is unique, and $%
V(\cdot \mid p)$ is strictly increasing on $(-\infty ,r(p)]$ and strictly
decreasing on $[r(p),\infty )$.

Given a continuation equilibrium $[G,(\gamma ,q,p)]$, define $\lambda
_{\theta }\in \Delta (\mathbb{R})$ by
\begin{equation*}
\lambda _{\theta }(B)=\int_{M}\mathbf{1}_{\{G(m)+\gamma (m)\in
B\}}\,q(\theta )[dm]
\end{equation*}%
for every Borel set $B\subseteq \mathbb{R}$. Deterministic allocations $%
z:\Theta \rightarrow \mathbb{R}$ are the special case in which $\lambda
_{\theta }$ is degenerate at $z(\theta )$ for every type $\theta $. For each
$\alpha $, let $\Gamma (\alpha )$ denote the set of continuation equilibrium
stochastic final-action allocations when the revision bound is $\alpha $.

We first show that $\Gamma (\alpha )\subseteq \Gamma (0)$. Consider any
continuation equilibrium in the model with revision bound $\alpha $, denoted
by $[G,(\gamma ,q,p)]$, where $G:M\rightarrow \mathbb{R}$ is the
contractible baseline and $\gamma :M\rightarrow \lbrack -\alpha ,\alpha ]$
is the receiver's continuation strategy. For each message $m\in M$, let
\begin{equation*}
z_{m}=G(m)+\gamma (m)
\end{equation*}%
be the final action induced by $m$. Construct a full-commitment mechanism $%
\widetilde{G}:M\rightarrow \mathbb{R}$ by $\widetilde{G}(m)=z_{m}$ for every
$m\in M$, and set the continuation action equal to zero after every message.
Since $G$ and $\gamma $ are measurable, $\widetilde{G}$ is measurable. By
the admissibility convention in Section \ref{sec:revisable}, this
transformed map is an admissible full-commitment baseline. Keep the same
sender strategy $q$ and the same belief system $p$.

Because both players' payoffs depend only on the final action, every mixed
message strategy yields the same payoff under the two mechanisms: for every
type $\theta $ and every $\sigma \in \Delta (M)$,
\begin{equation*}
\int_{M}u(\widetilde{G}(m),0,\theta )\,\sigma (dm)=\int_{M}u(G(m),\gamma
(m),\theta )\,\sigma (dm).
\end{equation*}%
Taking $\sigma =q(\theta )$ preserves the sender's equilibrium payoff, and
taking arbitrary $\sigma $ preserves the entire set of deviation payoffs.
Hence the sender's incentive and participation constraints are preserved.
Receiver optimality is automatic when $\alpha =0$, since the receiver has no
discretionary revision, and beliefs remain Bayesian consistent because the
sender strategy is unchanged. Finally, for every type $\theta $ and every
Borel set $B\subseteq \mathbb{R}$,
\begin{equation*}
\int_{M}\mathbf{1}_{\{\widetilde{G}(m)\in B\}}\,q(\theta )[dm]=\int_{M}%
\mathbf{1}_{\{G(m)+\gamma (m)\in B\}}\,q(\theta )[dm].
\end{equation*}%
Thus the same stochastic final-action allocation is induced under full
commitment.

We next show that $\Gamma (0)\subseteq \Gamma (\alpha )$. Consider any
continuation equilibrium in the full-commitment model, denoted by $%
[G,(0,q,p)]$, where $G:M\rightarrow \mathbb{R}$ and the receiver's
continuation action is identically zero. For each message $m\in M$, let $%
z_{m}=G(m)$ be the final action induced by $m$, and let $r_{m}=r(p_{m})$ be
the receiver's posterior ideal final action after message $m$, where $%
p_{m}:=p(m)$. Since $m\mapsto p_{m}$ and $p\mapsto r(p)$ are measurable, $%
m\mapsto r_{m}$ is measurable. Since $m\mapsto z_{m}$ and $m\mapsto r_{m}$
are measurable, the sets $\{m:z_{m}<r_{m}\}$, $\{m:z_{m}=r_{m}\}$, and $%
\{m:z_{m}>r_{m}\}$ are measurable. Define a new contractible baseline $%
\widehat{G}:M\rightarrow \mathbb{R}$ and a continuation strategy $\widehat{%
\gamma }:M\rightarrow \lbrack -\alpha ,\alpha ]$ by
\begin{equation*}
\widehat{G}(m)=%
\begin{cases}
z_{m}-\alpha , & \text{if }z_{m}<r_{m}, \\
z_{m}, & \text{if }z_{m}=r_{m}, \\
z_{m}+\alpha , & \text{if }z_{m}>r_{m},%
\end{cases}%
\qquad \widehat{\gamma }(m)=%
\begin{cases}
\alpha , & \text{if }z_{m}<r_{m}, \\
0, & \text{if }z_{m}=r_{m}, \\
-\alpha , & \text{if }z_{m}>r_{m}.%
\end{cases}%
\end{equation*}%
Then
\begin{equation*}
\widehat{G}(m)+\widehat{\gamma }(m)=z_{m},\qquad \forall m\in M.
\end{equation*}%
The preceding measurability observation implies that $\widehat{G}$ and $%
\widehat{\gamma }$ are measurable. By the admissibility convention in
Section \ref{sec:revisable}, the transformed baseline $\widehat{G}$ is an
admissible contract.

We verify receiver optimality. After message $m$, the receiver can choose
any $y\in \lbrack -\alpha ,\alpha ]$, and hence can choose any final action
in the interval $[\widehat{G}(m)-\alpha ,\widehat{G}(m)+\alpha ].$

If $z_{m}<r_{m}$, then $\widehat{G}(m)=z_{m}-\alpha $, so this interval is $%
[z_{m}-2\alpha ,z_{m}]$. Since the whole interval lies weakly below $r_{m}$
and $V(\cdot \mid p_{m})$ is strictly increasing on $(-\infty ,r_{m}]$, the
best feasible final action is $z_{m}$, implemented by $y=\alpha $. If $%
z_{m}>r_{m}$, then the feasible interval is $[z_{m},z_{m}+2\alpha ]$. Since
this interval lies weakly above $r_{m}$ and $V(\cdot \mid p_{m})$ is
strictly decreasing on $[r_{m},\infty )$, the best feasible final action is $%
z_{m}$, implemented by $y=-\alpha $. If $z_{m}=r_{m}$, then the receiver's
posterior ideal is feasible and is implemented by $y=0$. Thus $\widehat{%
\gamma }$ is optimal after every message.

Keep the same sender strategy $q$ and the same belief system $p$. Since each
message induces the same final action as under full commitment, for every
type $\theta$ and every mixed deviation $\sigma\in\Delta(M)$,
\begin{equation*}
\int_M u(\widehat{G}(m),\widehat{\gamma}(m),\theta)\,\sigma(dm) =\int_M
u(G(m),0,\theta)\,\sigma(dm).
\end{equation*}
Thus the sender's incentive and participation constraints are preserved.
Beliefs remain Bayesian consistent because the sender strategy is unchanged.
Finally, for every type $\theta $ and every Borel set $B\subseteq \mathbb{R}$%
,
\begin{equation*}
\int_{M}\mathbf{1}\{\widehat{G}(m)+\widehat{\gamma }(m)\in B\}\,q(\theta
)[dm]=\int_{M}\mathbf{1}\{G(m)\in B\}\,q(\theta )[dm].
\end{equation*}%
Thus the same stochastic final-action allocation is induced in the model
with revision bound $\alpha $.

Combining the two inclusions gives $\Gamma (\alpha )=\Gamma (0)$ for every $%
\alpha \geq 0$. The proof establishes the equality for unrestricted
mechanisms. By Proposition \ref{prop:canonical:weak} in Appendix \ref%
{sec:prelim_analysis:2}, restricting attention to the canonical
menu-with-recommendations contract space $\mathcal{G}^{\ast }$ preserves the
corresponding action-pair allocations, and therefore also their stochastic
final-action allocations.

\section{Formal Continuation Equilibrium\label%
{app:formal_continuation_equilibrium}}

\paragraph{Formal continuation-equilibrium definition.}

Fix a contract profile $G=(G_{j}:M_{j}\rightarrow X_{j})_{j\in \mathcal{J}%
}\in \mathcal{G}$ with $M:=\times _{j\in \mathcal{J}}M_{j}$. Define $%
\mathcal{Q}^{M}:=[\Delta (M)]^{\Theta }$ and
\begin{equation*}
\mathcal{Y}_{j}^{M}:=\{\gamma _{j}:M\rightarrow Y_{j}:\gamma _{j}(m)\in
F_{j}(G_{j}(m_{j}))\text{ for every }m\in M\},\qquad \mathcal{Y}^{M}:=\times
_{j\in \mathcal{J}}\mathcal{Y}_{j}^{M}.
\end{equation*}%
Let $\mathcal{P}_{j}^{M}:=[\Delta (\Theta )]^{M}$ and $\mathcal{P}%
^{M}:=\times _{j\in \mathcal{J}}\mathcal{P}_{j}^{M}$. A belief profile $%
p=(p_{j})_{j\in \mathcal{J}}$ is Bayesian consistent with $q:\Theta
\rightarrow \Delta (M)$ if, for every $j$, and all measurable $%
E_{1}\subseteq \Theta $ and $E_{2}\subseteq M$,
\begin{equation*}
\int_{E_{1}}q(\theta )[E_{2}]\,\mu \lbrack d\theta ]=\int_{\Theta }\left(
\int_{E_{2}}p_{j}(m)[E_{1}]\,q(\theta )[dm]\right) \mu \lbrack d\theta ].
\end{equation*}%
A profile $[G,(\gamma ,q,p)]$ is a $\mathcal{G}$-continuation equilibrium if
$G\in \mathcal{G}$, $(\gamma ,q,p)\in \mathcal{Y}^{M}\times \mathcal{Q}%
^{M}\times \mathcal{P}^{M}$, $p$ is Bayesian consistent with $q$, and the
following two conditions hold. First, for every $\theta \in \Theta $,
\begin{equation*}
\int_{M}u\left( (G_{j}(m_{j}))_{j\in \mathcal{J}},(\gamma _{j}(m))_{j\in
\mathcal{J}},\theta \right) q(\theta )[dm]\geq \max \left\{ U(\theta
),\sup_{m\in M}u\left( (G_{j}(m_{j}))_{j\in \mathcal{J}},(\gamma
_{j}(m))_{j\in \mathcal{J}},\theta \right) \right\} .
\end{equation*}%
Second, for every $j\in \mathcal{J}$, every $m=(m_{k})_{k\in \mathcal{J}}\in
M$, and every $y_{j}\in F_{j}(G_{j}(m_{j}))$,
\begin{equation*}
\int_{\Theta }v_{j}\left( (G_{k}(m_{k}))_{k\in \mathcal{J}},(\gamma
_{k}(m))_{k\in \mathcal{J}},\theta \right) p_{j}(m)[d\theta ]\geq
\int_{\Theta }v_{j}\left( (G_{k}(m_{k}))_{k\in \mathcal{J}},(y_{j},(\gamma
_{k}(m))_{k\neq j}),\theta \right) p_{j}(m)[d\theta ].
\end{equation*}

\section{Canonical Contracts}

This appendix proves the minimal-canonical-contract results in Theorems \ref%
{thm:minimal:weak} and \ref{thm:minimal:weak-robust}. We first introduce an
auxiliary contract space. For any $M_k\subseteq X_k\times Y_k$, write
\begin{equation*}
L(M_k):=\{x_k\in X_k:\exists y_k\in Y_k\text{ such that }(x_k,y_k)\in M_k\}.
\end{equation*}
Let $\pi_{X_k}:X_k\times Y_k\rightarrow X_k$ denote the coordinate
projection, $\pi_{X_k}(x_k,y_k)=x_k$.
\begin{align*}
\mathcal{M}_{k}^{\sharp }:=\{M_{k}\in \mathcal{B}(X_{k}\times
Y_{k})\diagdown \{\varnothing\}:&\ M_k\subseteq \{(x_k,y_k):y_k\in
F_k(x_k)\}, \\
&\ L(M_k)\in\mathcal{L}_k^\ast, \\
&\ \pi_{X_k}|_{M_k}:M_k\rightarrow L(M_k)\text{ admits} \\
&\ \text{a measurable right inverse}\},\quad \forall k\in\mathcal{J}, \\
\mathcal{G}_{k}^{\sharp }:=\{\langle G_{k}:M_{k}\longrightarrow
X_{k}\rangle: &\ M_{k}\in \mathcal{M}_{k}^{\sharp }\text{ and} \\
&\ G_{k}(x_{k},y_{k})=x_{k},\ \forall (x_{k},y_{k})\in M_{k}\},\quad \forall
k\in \mathcal{J}, \\
\mathcal{M}^{\sharp }&:=\times _{k\in \mathcal{J}}\mathcal{M}_{k}^{\sharp },
\\
\mathcal{G}^{\sharp }&:=\times _{k\in \mathcal{J}}\mathcal{G}_{k}^{\sharp }.
\end{align*}%
A contract in $\mathcal{G}_{j}^{\sharp }$ offers a non-empty set of feasible
pairs $(x_j,y_j)$; the agent chooses $x_j$ and recommends $y_j$. Unlike $%
\mathcal{G}_{j}^{\ast }$, this class need not include every feasible
recommendation after a given $x_j$. The Borel-image and selector
requirements in the definition ensure that every contract in $\mathcal{G}%
_{j}^{\sharp }$ is admissible under the convention in Section \ref{sec:model}%
. Thus%
\begin{equation*}
\mathcal{M}_{j}^{\ast }\subseteq \mathcal{M}_{j}^{\sharp }\text{, and }%
\mathcal{G}_{j}^{\ast }\subseteq \mathcal{G}_{j}^{\sharp }\text{, }\forall
j\in \mathcal{J}\text{,}
\end{equation*}%
Every contract in $\mathcal{G}_{j}^{\ast }$ is therefore in $\mathcal{G}%
_{j}^{\sharp }$. The inclusions are strict whenever $F_j(x_j)$ contains at
least two feasible recommendations for some $x_j$; if every $F_j(x_j)$ is a
singleton, the two classes coincide. In Example \ref{example1}, $\left\vert
\mathcal{G}_{j}^{\ast }\right\vert =3$ and $\left\vert \mathcal{G}%
_{j}^{\sharp }\right\vert =31$.

We introduce a useful binary relation on contracts. For each $j\in \mathcal{J%
}$ and any contract $G_{j}:M_{j}\longrightarrow X_{j}$, define
\begin{equation*}
G_{j}(M_{j}):=\left\{ x_{j}\in X_{j}:\exists m_{j}\in M_{j}\text{, }%
G_{j}\left( m_{j}\right) =x_{j}\right\} \text{.}
\end{equation*}%
i.e., $G_{j}(M_{j})$ is the image set of $G_{j}:M_{j}\longrightarrow X_{j}$.
For any two contracts $G_{j}:M_{j}\longrightarrow X_{j}$ and $G_{j}^{\prime
}:M_{j}^{\prime }\longrightarrow X_{j}$, define%
\begin{equation*}
G_{j}\sim G_{j}^{\prime }\Longleftrightarrow G_{j}(M_{j})=G_{j}^{\prime
}(M_{j}^{\prime })\text{.}
\end{equation*}%
For any contract $G_{j}:M_{j}\longrightarrow X_{j}$, let $G_{j}^{G_{j}}$
denote the unique contract in $\mathcal{G}_{j}^{\ast }$ such that $G_{j}\sim
G_{j}^{G_{j}}$, i.e.,
\begin{eqnarray*}
M_{j}^{G_{j}}:= &&\left\{ \left( x_{j},y_{j}\right) \in X_{j}\times
Y_{j}:x_{j}\in G_{j}(M_{j})\text{ and }y_{j}\in F_{j}(x_{j})\right\} \text{,}
\\
G_{j}^{G_{j}} &:&M_{j}^{G_{j}}\longrightarrow X_{j}\text{ such that }%
G_{j}^{G_{j}}\left( x_{j},y_{j}\right) =x_{j}\text{, }\forall \left(
x_{j},y_{j}\right) \in M_{j}^{G_{j}}\text{.}
\end{eqnarray*}

\subsection{Auxiliary canonical-contract facts}

\label{sec:prelim_analysis:2}

The proof uses auxiliary facts about the relation between arbitrary
contracts, menus with recommendations, and robust-equilibrium deviations.
The first fact shows that every continuation-equilibrium allocation can be
induced by a menu-with-recommendations contract.

\begin{lemma}
\label{lem:extension_w2}For any contract space $\mathcal{G}$, any allocation
$z$ induced by a $\mathcal{G}$-continuation equilibrium is also induced by a
$\mathcal{G}^{\ast }$-continuation equilibrium.
\end{lemma}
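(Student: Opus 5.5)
The plan is a ``recording'' construction carried out in two steps: first pass to $\mathcal{G}^{\sharp}$, then enlarge to $\mathcal{G}^{\ast}$.

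\emph{Step 1: record each message as an action pair.} Fix a $\mathcal{G}$-continuation equilibrium $[G,(\gamma,q,p)]$ with $G=(G_j:M_j\to X_j)_{j\in\mathcal{J}}$. The natural target is to map each message profile $m$ to the profile of pairs $(G_j(m_j),\gamma_j(m))_{j}$. With several principals, however, $\gamma_j(m)$ depends on the whole profile $m$, not only on $m_j$. A coordinate-by-coordinate recording of $m_j$ alone therefore loses information. I would instead work at the level of outcomes.

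\emph{Step 2: the intermediate mechanism in $\mathcal{G}^{\sharp}$.} For each $j$, let $M_j^{\sharp}$ be the set of feasible pairs $(x_j,y_j)$ that arise as $(G_j(m_j),\gamma_j(m))$ for some $m$. Define the agent's new strategy $q^{\sharp}(\theta)$ as the pushforward of $q(\theta)$ under $m\mapsto (G_j(m_j),\gamma_j(m))_j$. Define principal $j$'s continuation after a profile $(x_k,y_k)_k$ on this range as ``follow the recommendation'': $\gamma^{\sharp}_j=y_j$.

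Beliefs on path are the conditional of the prior given the pushed-forward message. Principal $j$'s posterior after the recorded profile is then a mixture of the original posteriors $p_j(m)$ over the original messages $m$ mapping to that profile. Each such $m$ has $y_j$ optimal given $p_j(m)$ and the others' actions $(\gamma_k(m))_{k\ne j}=(y_k)_{k\ne j}$, which are fixed by the profile. The optimality inequality in Appendix B is linear in the belief, so it survives mixing, and $y_j$ remains optimal.

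Agent incentives carry over because the set of available outcome profiles under the new mechanism includes every original outcome. Any outcome profile that was not jointly reachable from a single original $m$ still needs continuation play. For those off-path profiles (recombinations across coordinates), I would assign beliefs and actions that punish the agent, or more simply a continuation chosen so that the resulting payoff is no better than some original sup. This is exactly where the one-principal case is trivial: with $|\mathcal{J}|=1$ every recorded pair comes from some $m$.

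\emph{Step 3: enlarging to $\mathcal{G}^{\ast}$.} Next, pass from $M_j^{\sharp}$ to the full closure $M_j(L_j)$ with $L_j=G_j(M_j)$. The added messages $(x_j,y_j)$ with $y_j\in F_j(x_j)$ are either unused or off-path, and they must be given continuation play that does not tempt the agent. For an added message with contractible action $x_j$, I would let principal $j$ ignore the recommendation and play as after some on-path recorded pair $(x_j,y'_j)$, with the belief from that pair. The agent then gets exactly a payoff already available, so no new deviation is profitable. The measurable-selector requirements in $\mathcal{L}^{\ast}_j$ supply the measurable choice of such $y'_j$.

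Bayesian consistency follows from the pushforward construction, and the induced allocation on $X\times Y$ is unchanged by construction.

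\emph{Main obstacle.} I expect the hard step to be measurability together with the multi-principal recombination issue. The remedy above only specifies continuation after off-path profiles that recombine recorded pairs across principals. It does not yet show that such a continuation can be chosen without giving the agent a payoff above what the original mechanism offered. Since the original $M$ is a product, a recombined profile is not automatically generated by some original $m$, so this still has to be established.

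My first attempt would be to have each principal $j$ react to a recombined profile exactly as it reacts in the original game to some message profile $m$ inducing the same coordinate-$j$ pair. Continuation actions only need feasibility and optimality given some belief, so I would pick beliefs supporting the original $\gamma_j(m)$. The remaining question is whether the other principals can be kept consistent with such a choice; checking that is the step on which the argument depends.

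For the measurability of the pushforward maps and of the choices of $y'_j$, I would invoke the admissibility selectors.
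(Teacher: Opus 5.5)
Your recording map $m\mapsto (G_k(m_k),\gamma_k(m))_k$, the push-forward strategy, and the mixture argument for on-path optimality all match the paper. The gap is the step you flag yourself. You never specify continuation play at message profiles of the new product space that are not images of any original $m$. These include cross-principal recombinations and the extra recommendations added in your Step 3. So neither principal optimality there nor the agent's incentive constraint is established.

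Your tentative fix would not work. Suppose each principal $j$ reacts as at some original profile $m^{(j)}$ that matches only its own coordinate. Then $\gamma_j(m^{(j)})$ is optimal against $G_{-j}(m^{(j)}_{-j})$ and $\gamma_{-j}(m^{(j)})$. It need not be optimal against the actions the other principals actually take at the recombined profile. Moreover, following recommendations there hands the agent a recombined outcome that was never available in the original game, which can be a profitable deviation. ``Punishing'' the agent is not available either, because continuation actions must be optimal for the principals given some belief. Step 3 has the same defect: with several principals, ``the belief from that pair'' is undefined, since beliefs and continuation actions depend on the whole message profile.

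The missing idea is to select \emph{jointly}, using contractible actions only. Because $M=\times_k M_k$ is a product and each $x_k\in L_k=G_k(M_k)$ has a measurable preimage $s_k(x_k)$ by admissibility, every new profile $m'=(x_k,y_k)_k$ determines a single original profile $r(m'):=(s_k(x_k))_k\in M$. For $m'$ outside the image of the recording map, ignore all recommendations and let every principal $k$ play $\gamma_k(r(m'))$ with belief $p_k(r(m'))$. This works for three reasons:
\begin{itemize}
\item All principals refer to the same original profile, so mutual optimality is inherited from the original equilibrium.
\item Feasibility holds because $G_k(s_k(x_k))=x_k$.
\item The agent's payoff at $m'$ equals the payoff from the original message $r(m')$, so no new deviation payoff is created.
\end{itemize}
This single device handles both recombinations and added recommendations, so you can go directly to $\mathcal{G}^\ast$ without the intermediate $\mathcal{G}^\sharp$ step. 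One minor point: at zero-probability profiles inside the image, take $p_k(m)$ for some preimage $m$ rather than an arbitrary conditional version, so that following the recommendation remains optimal.
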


\begin{proof}
Suppose that $z$ is induced by a $\mathcal{G}$-continuation equilibrium $%
\left[ G\text{, }\left( \gamma ,q,p\right) \right] $, where $%
G_{k}:M_{k}\rightarrow X_{k}$ for each $k\in \mathcal{J}$. For each
principal $k$, let
\begin{equation*}
L_k:=G_k(M_k)
\end{equation*}
and replace $G_k$ by the menu-with-recommendations contract $%
G_k^{G_k}:M_k^{G_k}\rightarrow X_k$, where
\begin{equation*}
M_k^{G_k}:=\{(x_k,y_k)\in X_k\times Y_k:x_k\in L_k,\ y_k\in F_k(x_k)\},
\qquad G_k^{G_k}(x_k,y_k)=x_k .
\end{equation*}
By the measurable-selection regularity imposed in the main text, choose a
measurable selector $s_k:L_k\rightarrow M_k$ such that $G_k(s_k(x_k))=x_k$
for every $x_k\in L_k$.

Let $M^{\prime}=\times_{k\in\mathcal{J}}M_k^{G_k}$ and define the map
\begin{equation*}
T:M\rightarrow M^{\prime},\qquad T(m)=\left(
G_k(m_k),\gamma_k(m)\right)_{k\in\mathcal{J}} .
\end{equation*}
The new communication strategy is the push-forward of $q$ by $T$:
\begin{equation*}
q^{\prime}(\theta)[E]=q(\theta)[T^{-1}(E)]
\end{equation*}
for every measurable $E\subset M^{\prime}$. Hence $q^{\prime}$ induces
exactly the same allocation as the original equilibrium.

It remains to define beliefs and continuation actions at the larger message
space. For every $m^{\prime}=(x_k,y_k)_{k\in\mathcal{J}}\in M^{\prime}$, let
\begin{equation*}
r(m^{\prime}):=\left( s_k(x_k)\right)_{k\in\mathcal{J}}\in M
\end{equation*}
be an original message profile that induces the same contractible actions.
If $m^{\prime}\in T(M)$, set
\begin{equation*}
\gamma^{\prime}_k(m^{\prime})=y_k,\qquad k\in\mathcal{J},
\end{equation*}
and choose $p^{\prime}_k(m^{\prime})$ to be a version of the conditional
distribution of $\theta$ given $T(m)=m^{\prime}$. If $m^{\prime}\notin T(M)$%
, ignore the off-path recommendations and set
\begin{equation*}
\gamma^{\prime}_k(m^{\prime})=\gamma_k(r(m^{\prime})),\qquad
p^{\prime}_k(m^{\prime})=p_k(r(m^{\prime})),\qquad k\in\mathcal{J}.
\end{equation*}
These off-path actions are feasible because $G_k(r_k(m^{%
\prime}))=x_k=G_k^{G_k}(m^{\prime}_k)$.

Bayesian consistency holds by construction of the regular conditional
beliefs on the image of $T$; beliefs outside that image are unrestricted by
Bayes' rule. Principal optimality also holds. On $T(M)$, each $%
\gamma^{\prime}_k(m^{\prime})=y_k$ is the continuation action chosen after
all original messages mapped into $m^{\prime}$, and optimality is preserved
under mixtures of the corresponding original posteriors. Outside $T(M)$, the
pair $(\gamma^{\prime}_k(m^{\prime}),p^{\prime}_k(m^{\prime}))$ is copied
from the original message profile $r(m^{\prime})$.

Finally, the enlarged message space creates no profitable deviation for the
agent. A message in $T(M)$ gives the same action profile as some original
message profile. A message outside $T(M)$ gives the same contractible-action
profile and the same continuation-action profile as the original message
profile $r(m^{\prime})$. Thus every payoff the agent can obtain under the
new menu was already attainable by a message deviation in the original
equilibrium. The agent's incentive and participation constraints therefore
remain satisfied. Hence
\begin{equation*}
\left[ (G_k^{G_k})_{k\in\mathcal{J}},(\gamma^{\prime},q^{\prime},p^{\prime})%
\right]
\end{equation*}
is a $\mathcal{G}^{\ast}$-continuation equilibrium inducing $z$.
\end{proof}

\begin{prop}
\label{prop:canonical:weak}$\mathcal{G}^{\ast }$ is a canonical contract
space for continuation equilibrium.
\end{prop}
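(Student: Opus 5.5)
The proposition follows almost directly from Lemma \ref{lem:extension_w2} once the definition of canonicality is unpacked. By Definition \ref{define:canonical:weak}, I need to show the following: for every payoff environment $\mathcal{E}$ and every contract space $\mathcal{G}'$,
\begin{equation*}
\mathcal{Z}^{\text{continuation-}\mathcal{G}^{\prime}\text{-}\mathcal{E}}\subseteq \mathcal{Z}^{\text{continuation-}\mathcal{G}^{\ast}\text{-}\mathcal{E}}.
\end{equation*}
The plan is to fix an arbitrary $\mathcal{E}$ and $\mathcal{G}'$, take any allocation $z$ in the left-hand set, and choose a $\mathcal{G}'$-continuation equilibrium $[G,(\gamma,q,p)]$ that induces $z$. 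Lemma \ref{lem:extension_w2}, applied with $\mathcal{G}=\mathcal{G}'$, then gives a $\mathcal{G}^{\ast}$-continuation equilibrium $[(G_k^{G_k})_{k},(\gamma',q',p')]$ that induces the same $z$. This yields the inclusion.

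Two points need checking before invoking the lemma.

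First, the lemma's construction never uses the payoff primitives beyond those of the fixed environment, so it applies verbatim for each $\mathcal{E}$. The lemma's statement is implicitly relative to a fixed $\mathcal{E}$, and the argument (push-forward of $q$, conditional beliefs on $T(M)$, copied actions off the image) is environment-free.

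Second, the replacement contract $G_k^{G_k}$ must actually belong to $\mathcal{G}_k^{\ast}$. This means verifying that $L_k=G_k(M_k)\in\mathcal{L}_k^{\ast}$. It follows from the admissibility requirements imposed on every contract in Section \ref{sec:model}:
\begin{itemize}
\item the image $L_k$ is Borel and non-empty;
\item the set $M_k(L_k)=\{(x_k,y_k):x_k\in L_k,\ y_k\in F_k(x_k)\}$ is Borel;
\item the feasible correspondence admits a measurable selector on $L_k$, which supplies the selector $s_k:L_k\rightarrow M_k(L_k)$ required in the definition of $\mathcal{L}_k^{\ast}$.
\end{itemize}
Hence $M_k^{G_k}\in\mathcal{M}_k^{\ast}$ and $G_k^{G_k}=\widehat{G}|_{M_k^{G_k}}\in\mathcal{G}_k^{\ast}$. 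So the profile lies in $\mathcal{G}^{\ast}=\times_k\mathcal{G}_k^{\ast}$, and the equilibrium from the lemma is a genuine $\mathcal{G}^{\ast}$-continuation equilibrium. Note also that $\mathcal{G}^{\ast}$ is non-empty, as required of a contract space, because $X_k$ itself (or any singleton $\{x_k\}$) is an admissible menu under the maintained regularity.

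The only potential obstacle is this measurability bookkeeping, i.e.\ confirming that the admissibility conditions on arbitrary contracts line up exactly with the definition of $\mathcal{L}_k^{\ast}$. I would handle it by citing the admissibility convention directly, as above.
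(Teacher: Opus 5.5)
Your proposal is correct and matches the paper, which proves the proposition as a direct consequence of Lemma \ref{lem:extension_w2} together with Definition \ref{define:canonical:weak}; your check that $G_k(M_k)\in\mathcal{L}_k^{\ast}$ follows from the admissibility convention is bookkeeping the paper leaves implicit. One small aside: nonemptiness of $\mathcal{G}^{\ast}$ is better justified by that same observation (the image of any admissible contract lies in $\mathcal{L}_k^{\ast}$) than by appealing to $X_k$ or a singleton $\{x_k\}$, since, for instance, $M_k(\{x_k\})=\{x_k\}\times F_k(x_k)$ need not be Borel without further assumptions on $F_k$.
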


\begin{proof}
This is a direct consequence of Lemma \ref{lem:extension_w2} given the
definition of a canonical contract space for continuation equilibrium.
\end{proof}

The remaining auxiliary facts, Lemmas \ref{lem:extension_w}, \ref%
{lem:max:robust:off-equilibrium-pah}, and \ref%
{lem:max:robust:on-equilibrium-pah}, together with Propositions \ref%
{prop:canonical:lower-bound} and \ref{prop:canonical:upper-bound}, provide
the robust-equilibrium lower and upper bounds used below. Their formal
statements and proofs are collected in the Supplemental Appendix.

For reference, the robust-equilibrium arguments use the following relabeling
relation. For any two contracts $G_{j}:M_{j}\longrightarrow X_{j}$ and $%
G_{j}^{\prime }:M_{j}^{\prime }\longrightarrow X_{j}$, write $G_{j}^{\prime
}\sqsupset _{j}G_{j}$ if there exists a measurable surjection $\iota
_{j}:M_{j}^{\prime }\rightarrow M_{j}$, admitting a measurable right
inverse, such that $G_{j}^{\prime }(m_{j}^{\prime })=G_{j}(\iota
_{j}(m_{j}^{\prime }))$ for every $m_{j}^{\prime }\in M_{j}^{\prime }$.

\subsection{A lower bound for canonical contract space}

\label{sec:lowerbound_canonical}

First, we show that $\mathcal{G}^{\ast }$ is a lower bound for the canonical
contract space for the two equilibrium notions.

\begin{prop}
\label{thm:lower:weak}Suppose $\left\vert \Theta \right\vert \geq \left\vert
X\right\vert $. If $\mathcal{G}$ is a canonical contract space for
continuation equilibrium, we have $\left\vert \mathcal{G}\right\vert \geq
\left\vert \mathcal{G}^{\ast }\right\vert $.
\end{prop}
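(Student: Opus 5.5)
The plan is to build an injection from $\mathcal{G}^{\ast}$ into $\mathcal{G}$. For each admissible menu $L\in\mathcal{L}^{\ast}$, which indexes the contract $\widehat{G}|_{M(L)}\in\mathcal{G}^{\ast}$, we construct a payoff environment $\mathcal{E}_L$. The goal is that every continuation equilibrium of $\mathcal{E}_L$, under any contract space, has a contract whose image is exactly $L$. Canonicality of $\mathcal{G}$ then forces $\mathcal{G}$ to contain some contract $G_L$ with image $G_L(M_L)=L$. Distinct menus $L\neq L'$ give contracts with distinct images, so $G_L\neq G_{L'}$. Hence $L\mapsto G_L$ is injective, and $|\mathcal{G}|\geq|\mathcal{L}^{\ast}|=|\mathcal{G}^{\ast}|$. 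The last equality holds because $M(L)$ and $L$ determine each other.

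The core of the argument is the construction of $\mathcal{E}_L$, and it uses $|\Theta|\geq|X|$. Fix an injection $\phi:X\to\Theta$; measurability can be handled with Borel injections between standard Borel spaces when needed. Choose the payoff ingredients as follows:
\begin{itemize}
\item The agent's utility is $u(x,y,\theta)=1$ if $\theta=\phi(x)$ and $0$ otherwise.
\item The principal's utility $v$ is constant, so every continuation action is optimal and the principal's continuation optimality never binds.
\item The outside option is $U(\theta)=1$ for $\theta\in\phi(L)$ and $U(\theta)=0$ otherwise.
\item The prior $\mu$ puts positive mass on each point of $\phi(L)$ when $L$ is countable. In general we instead require full support on $\phi(L)$ in a suitable sense.
\end{itemize}
Under these choices, a type $\phi(x)$ with $x\in L$ can meet participation only by obtaining $x$ with probability one. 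Types outside $\phi(L)$ are indifferent everywhere. We then take a target allocation $z_L$ that assigns $x$ to type $\phi(x)$ for $x\in L$ and some fixed $x_0\in L$ to every other type. The construction $\widehat{G}|_{M(L)}$ with truthful reporting and any measurable selection of $y$ is a $\mathcal{G}^{\ast}$-continuation equilibrium inducing $z_L$. Because a single contract space containing $\widehat{G}|_{M(L)}$ yields this equilibrium, $z_L$ lies in $\mathcal{Z}^{\text{continuation-}\mathcal{G}^{\ast}\text{-}\mathcal{E}_L}$.

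The set $\mathcal{G}^{\ast}$ is one admissible contract space, so canonicality of $\mathcal{G}$ gives $z_L\in\mathcal{Z}^{\text{continuation-}\mathcal{G}\text{-}\mathcal{E}_L}$. Therefore some equilibrium with a contract $G\in\mathcal{G}$ induces $z_L$. Its image contains $L$, since each $x\in L$ is realized by type $\phi(x)$. To get the reverse inclusion, we must ensure the image does not contain extra points. One way is to modify $u$ so that some type strictly prefers every $x\notin L$; for example, give type $\phi(x_0)$ payoff $2$ at every $x\notin L$. Then that type would deviate if such an $x$ were available, so the image is exactly $L$. With $|\mathcal{J}|=1$, $X$ is the single principal's set, which matches the use in Theorem \ref{thm:minimal:weak}.

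The main obstacle is reconciling the pointwise construction with measurability and the atomless-versus-atomic issues when $L$ is uncountable. Allocations are defined pointwise in $\theta$, but Bayesian consistency is only up to $\mu$-null sets. The participation constraint is imposed for every $\theta$, however, and this is what rescues the pointwise argument. My first attempt is to exploit exactly that: the conditions hold for every type regardless of prior mass, so any prior works. The remaining work is checking that $u$ is Borel, which holds if $\phi$ is a Borel injection with Borel graph.
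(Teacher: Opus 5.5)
Your proof is correct and follows essentially the paper's route. For each admissible menu you build an environment whose target allocation forces any implementing contract to have image exactly that menu, and then you apply canonicality and the image map. Your injection $\phi$, single-type bonus, constant $v$, and outside option are inessential variants of the paper's surjection $\varphi_j:\Theta\to\widetilde{X}_j$, uniform off-menu payoff $8$, and $U\equiv 0$.

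Two small corrections. First, your opening claim that \emph{every} continuation equilibrium of $\mathcal{E}_L$ has image $L$ is false; only those inducing $z_L$ do, and that is all you actually use. Second, the proposition is not restricted to $|\mathcal{J}|=1$. The paper covers the general case by fixing the other principals at singleton menus, letting payoffs depend only on $x_j$, obtaining $|\mathcal{G}_j|\geq|\mathcal{L}_j^{\ast}|$ for each $j$, and taking products.
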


\begin{prop}
\label{thm:lower:weak-robust}Suppose $\left\vert \Theta \right\vert \geq
\left\vert X\right\vert $. If $\mathcal{G}$ is a canonical contract space
for robust equilibrium, we have $\left\vert \mathcal{G}\right\vert \geq
\left\vert \mathcal{G}^{\ast }\right\vert $.
\end{prop}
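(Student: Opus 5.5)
The plan is to show that the map sending each contract profile in $\mathcal{G}^{\ast}$ to the profile of image sets of contracts in $\mathcal{G}$ can be inverted. For every profile of admissible menus $L=(L_k)_{k\in\mathcal{J}}\in\times_k\mathcal{L}_k^{\ast}$, I will construct a payoff environment $\mathcal{E}_L$ and an allocation $z_L$ with two properties. First, $z_L$ is an unrestricted robust-equilibrium allocation. Second, in \emph{any} contract space, every robust equilibrium inducing $z_L$ must use a contract profile whose image sets are exactly $L$. If $\mathcal{G}$ is canonical for robust equilibrium, then $z_L\in\mathcal{Z}^{\text{robust-}\mathcal{G}\text{-}\mathcal{E}_L}$. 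So for each $L$ we can pick some $G^L\in\mathcal{G}$ with $G^L_k(M_k)=L_k$ for all $k$. Distinct $L$ give distinct $G^L$, and $\mathcal{G}^{\ast}$ is in bijection with such $L$, so $|\mathcal{G}|\ge|\mathcal{G}^{\ast}|$.

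\textbf{Construction of $\mathcal{E}_L$.} Using $|\Theta|\ge|X|$, I will fix, for each $k$, a Borel injection $\phi_k$ from $X_k$ into $\Theta$ with pairwise disjoint Borel images $\Theta_k$. This is the step where $|\Theta|\ge|X|$ is used. In the infinite case, Kuratowski's theorem for standard Borel spaces supplies the embeddings. In the finite case, Assumption \ref{assm:non-trivial} together with $|\mathcal{J}|\ge2$ essentially gives $\prod_k|X_k|\ge\sum_k|X_k|$; any degenerate singleton $X_k$ I would handle separately, since there $\mathcal{L}_k^{\ast}$ is a singleton anyway.

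Type $\theta=\phi_k(x_k)$ is the type whose favourite principal-$k$ contractible action is $x_k$. I will set
\[
u(x,y,\theta)=\sum_{k}\mathbf 1\{\theta\in\Theta_k,\ x_k=\phi_k^{-1}(\theta)\}.
\]
This payoff is independent of $y$ and measurable. The outside option is $U\equiv-1$, so participation is never binding, and principals' payoffs are $v_k\equiv0$. The prior $\mu$ is arbitrary, because the incentive constraints in Appendix \ref{app:formal_continuation_equilibrium} hold pointwise in $\theta$ and allocations are compared pointwise.

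Under $\mathcal{E}_L$, the allocation $z_L$ is defined as follows. Type $\phi_k(x_k)$ receives $x_k$ from principal $k$ whenever $x_k\in L_k$. Otherwise, and for all other coordinates and types, $z_L$ assigns a fixed default action $\bar x_k\in L_k$ together with a measurable feasible selection of $y_k$, which exists because $L_k\in\mathcal{L}_k^{\ast}$.

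\textbf{Step 1: $z_L$ is robust.} The menu-with-recommendations profile for $L$ lies in $\mathcal{G}^{\ast}$, and each type's prescribed message is optimal because utilities do not depend on $y$. Since $v_k\equiv0$, no deviation is ever strictly profitable. Hence the profile is an unrestricted robust equilibrium; Proposition \ref{prop:robust:equilibrium} can be invoked if one wants the profile-level statement.

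\textbf{Step 2: any robust equilibrium inducing $z_L$ has images exactly $L_k$.} Take any $\mathcal{G}$-robust equilibrium inducing $z_L$. Its image sets contain $L_k$, because every $x_k\in L_k$ is realized by type $\phi_k(x_k)$. Conversely, suppose some $x_k\notin L_k$ lies in the image of $G_k$. Then type $\phi_k(x_k)$ could send a message yielding $x_k$ and obtain payoff $1$, strictly more than the $0$ it gets under $z_L$. This contradicts optimality of its messages, so the image is exactly $L_k$.

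The main obstacle I expect is measurability and admissibility bookkeeping in the infinite case. This means checking that $u$ is jointly Borel, that the default selection defining $z_L$ is measurable, and that the injections $\phi_k$ with disjoint images exist given only the cardinality hypothesis. The finite-cardinality edge cases of the counting step also need care. The incentive argument itself is immediate once the environment is in place.
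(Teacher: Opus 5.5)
Your proof is correct, but it is organized differently from the paper's.

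\textbf{The paper's route.} The paper works one principal and one menu at a time. For a given $\widetilde X_j\in\mathcal{L}_j^{\ast}$, it fixes the other principals at singleton menus and takes a measurable surjection $\varphi_j:\Theta\to\widetilde X_j$. It then builds an environment in which every type gets $8$ from any $x_j\notin\widetilde X_j$ while every principal gets $-8$ there. Inside $\widetilde X_j$, the agent and all principals get $1$ exactly when $x_j=\varphi_j(\theta)$ and $0$ otherwise. The image is pinned down because every type would grab an outside action if one were offered. Robustness holds because the equilibrium already gives each principal the maximal payoff $1$, and a deviation adding outside actions admits a continuation with payoff $-8$. The count is then $|\mathcal{G}_j|\geq|\mathcal{L}_j^{\ast}|$ for each $j$, multiplied across principals.

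\textbf{Your route.} You set $v_k\equiv 0$, which makes the payoff comparison in the no-safe-deviation test automatic. You detect outside actions through dedicated favourite-action types $\phi_k(x_k)$ with $x_k\notin L_k$, and you count directly on menu profiles $L$.

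\textbf{What each buys.} Your version has a simpler robustness step. Your agent utility does not depend on $L$ at all, so a single environment witnesses the necessity of every menu profile at once; only the allocation $z_L$ varies. The cost is the cardinality bookkeeping: you need a distinct type for every element of $\bigsqcup_k X_k$, including actions outside the menu. That is why you need disjoint Borel embeddings and the finite-case argument ($\prod\geq\sum$ after discarding singleton $X_k$'s, which you rightly note are harmless because then $\mathcal{L}_k^{\ast}$ is a singleton). The paper's device, in which the agent loves every outside action, needs only $|\Theta|\geq|\widetilde X_j|$ and avoids this. The price is designing principal payoffs that deter deviations.

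\textbf{One small point.} Even with $v_k\equiv 0$, robustness requires that some continuation equilibrium exist after every admissible deviation. In your environment this is immediate: the agent's payoff is $\{0,1\}$-valued and depends only on contractible actions, admissibility supplies measurable selectors for both the message choice and a feasible $y$, and regular conditional beliefs exist. You should say this rather than rely only on ``no deviation is strictly profitable''; the paper's proof glosses over the same point. Invoking Proposition \ref{prop:robust:equilibrium} is unnecessary, since your $\mathcal{G}^{\ast}$ profile is already robust against every admissible deviation.
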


\begin{proof}
We provide a unified proof for both propositions. Suppose $\left\vert \Theta
\right\vert \geq \left\vert X\right\vert $. Recall that $\left\vert \mathcal{%
G}_{j}^{\ast }\right\vert =\left\vert \mathcal{L}_{j}^{\ast }\right\vert $
for any $j\in \mathcal{J}$. Consider any canonical contract space $\mathcal{G%
}$ for either continuation equilibrium or robust equilibrium. Fix any $j\in
\mathcal{J}$. We now prove $\left\vert \mathcal{G}_{j}\right\vert \geq
\left\vert \mathcal{L}_{j}^{\ast }\right\vert =\left\vert \mathcal{G}%
_{j}^{\ast }\right\vert $, which completes the proof.

Consider the image map $\phi _{j}:\mathcal{G}_{j}\longrightarrow
2^{X_{j}}\diagdown \left\{ \varnothing \right\} $ such that%
\begin{equation*}
\phi _{j}\left( G_{j}:M_{j}\longrightarrow X_{j}\right) =G_{j}\left(
M_{j}\right) \text{, }\forall G_{j}\in \mathcal{G}_{j}\text{.}
\end{equation*}%
We now prove that $\mathcal{L}_{j}^{\ast }$ is contained in the range of $%
\phi_j$, which implies $\left\vert \mathcal{G}_{j}\right\vert \geq
\left\vert \mathcal{L}_{j}^{\ast }\right\vert =\left\vert \mathcal{G}%
_{j}^{\ast }\right\vert $. Specifically, fix any $\widetilde{X}_{j}\in
\mathcal{L}_{j}^{\ast }$, and we prove $\phi _{j}\left( G_{j}\right) =%
\widetilde{X}_{j}$ for some $G_{j}\in \mathcal{G}_{j}$.

Since $\widetilde X_j$ is a Borel subset of the standard Borel space $X_j$
and $\left\vert \Theta \right\vert \geq \left\vert X\right\vert \geq
\left\vert \widetilde{X}_{j}\right\vert $, the standard-Borel isomorphism
theorem gives a measurable surjection $\varphi _{j}:\Theta \longrightarrow
\widetilde{X}_{j}$. Set $U(\theta)=0$ for every $\theta$. Consider the
following environment: for any $\left( x,y,\theta ,k\right) \in X\times
Y\times \Theta \times \mathcal{J}$,
\begin{gather}
x_{j}\notin \widetilde{X}_{j}\Longrightarrow u\left( x,y,\theta \right)
=-v_{k}\left( x,y,\theta \right) =8\text{,}  \label{kkt1} \\
\notag \\
\text{and }x_{j}\in \widetilde{X}_{j}\Longrightarrow u\left( x,y,\theta
\right) =v_{k}\left( x,y,\theta \right) =\left\{
\begin{tabular}{ll}
$1\text{,}$ & if $x_{j}=\varphi _{j}\left( \theta \right) $, \\
&  \\
$0\text{,}$ & otherwise.%
\end{tabular}%
\right. \text{.}  \label{kkt2}
\end{gather}%
Thus the agent strictly prefers actions in $X_{j}\diagdown \widetilde{X}_{j}$
to actions in $\widetilde{X}_{j}$, while principals have the opposite
preference. All payoffs depend only on principal $j$'s contractible action.

For any $k\in \mathcal{J}\diagdown \left\{ j\right\} $, fix any $\overline{x}%
_{k}\in X_{k}$. Define%
\begin{eqnarray*}
M_{j} &:=&\widetilde{X}_{j}\text{ and }M_{k}:=\left\{ \overline{x}%
_{k}\right\} \text{, }\forall k\in \mathcal{J}\diagdown \left\{ j\right\}
\text{,} \\
\widetilde{G}_{j} &:&M_{j}\rightarrow X_{j}\text{ such that }\widetilde{G}%
_{j}\left( x_{j}\right) =x_{j}\text{, }\forall x_{j}\in M_{j}\text{,} \\
\widetilde{G}_{k} &:&M_{k}\rightarrow X_{k}\text{ such that }\widetilde{G}%
_{k}\left( x_{k}\right) =x_{k}\text{, }\forall x_{k}\in M_{k}\text{, }%
\forall k\in \mathcal{J}\diagdown \left\{ j\right\} \text{,}
\end{eqnarray*}%
and%
\begin{equation*}
\widetilde{\mathcal{G}}_{k}:=\left\{ \widetilde{G}_{k}\right\} \cup \mathcal{%
G}_{k},\ \forall k\in\mathcal{J},\qquad \widetilde{\mathcal{G}}:=\times
_{k\in \mathcal{J}}\widetilde{\mathcal{G}}_{k}\text{.}
\end{equation*}%
There is a $\widetilde{\mathcal{G}}$-continuation equilibrium in which
principal $j$ offers $\widetilde{G}_{j}$, type $\theta$ chooses $%
x_j=\varphi_j(\theta)$, and the other principals choose their fixed
contractible actions. This allocation is also supported by an unrestricted
robust equilibrium. Indeed, the equilibrium payoff of every principal is
one. If principal $j$ deviates to a contract whose image contains some $%
x_j\notin \widetilde X_j$, then there is a continuation equilibrium after
the deviation in which the agent chooses such an action, yielding payoff $-8$
to every principal. If the deviating contract of principal $j$ has image
contained in $\widetilde X_j$, then every continuation equilibrium following
the deviation gives principal $j$ payoff at most one. Deviations by any
principal $k\neq j$ do not affect payoffs, because all payoffs depend only
on $x_j$. Hence the allocation generated by $\widetilde G$ is both a $%
\widetilde{\mathcal{G}}$-continuation-equilibrium allocation and an
unrestricted robust-equilibrium allocation. Since $\varphi _{j}:\Theta
\longrightarrow \widetilde{X}_{j}$ is surjective, we have $\phi _{j}\left(
\widetilde{G}_{j}\right) =\widetilde{X}_{j}$.

If $\mathcal{G}$ is canonical for continuation equilibrium, canonicality
applied to $\widetilde{\mathcal{G}}$ yields a $\mathcal{G}$-continuation
equilibrium inducing the same allocation. If $\mathcal{G}$ is canonical for
robust equilibrium, canonicality for robust equilibrium yields a $\mathcal{G}
$-robust equilibrium inducing the same allocation. In either case, denote
the resulting $\mathcal{G}$-equilibrium by $\left[ G=\times _{k\in \mathcal{J%
}}\left( G_{k}\right) \text{, }\left( \gamma ,q,p\right) \right] $. For each
$x_j\in\widetilde X_j$, there is a type $\theta$ such that $%
\varphi_j(\theta)=x_j$, and the induced allocation for that type chooses $%
x_j $. Therefore $\widetilde{X}_{j}\subset \phi _{j}\left( G_{j}\right) $.
Finally, we must have $\phi _{j}\left( G_{j}\right) \subset \widetilde{X}%
_{j} $. Otherwise, there exists $x_{j}^{\prime }\in \phi _{j}\left(
G_{j}\right) \diagdown \widetilde{X}_{j}$, and the agent finds it strictly
profitable to deviate to a message inducing $x_{j}^{\prime }$, contradicting
the agent incentive constraint in the selected $\mathcal{G}$-equilibrium.
Therefore, $\phi _{j}\left( G_{j}\right) =\widetilde{X}_{j}$.
\end{proof}

\subsection{Proof of Theorem \protect\ref{thm:minimal:weak}}

\label{sec:proof_thm_canon_continuation}

Propositions \ref{prop:canonical:weak} and \ref{thm:lower:weak} immediately
imply Theorem \ref{thm:minimal:weak}.

\subsection{Proof of Theorem \protect\ref{thm:minimal:weak-robust}}

\label{sec:minimal_contract_muti_principals}

Given $\left\vert \mathcal{J}\right\vert \geq 2$, Proposition \ref%
{thm:lower:weak-robust} gives the lower bound: any canonical contract space
for robust equilibrium must be at least as large as $\mathcal{G}^{\ast}$. It
remains to show that $\mathcal{G}^{\ast}$ is itself canonical for robust
equilibrium.

The proof has two steps. First, Proposition \ref{prop:canonical:upper-bound}
shows that any robust-equilibrium allocation can be represented by a robust
equilibrium in the menu-with-recommendations space $\mathcal{G}^{\ast}$.
Second, any $\mathcal{G}^{\ast}$-robust equilibrium can be embedded in an
enlarged message space that preserves the allocation while restoring
robustness to arbitrary deviations. The embedding uses the public message
profile to encode missing information when one principal deviates to a
coarser contract. The detailed construction and verification are in the
Supplemental Appendix. These two steps prove that $\mathcal{G}^{\ast}$ is a
canonical contract space for robust equilibrium, completing the proof. $%
\blacksquare $

\section{Proof of Proposition \protect\ref{proposition_single}}

\label{sec:proof_proposition_single}

We first record the regularity step used below.

\begin{lemma}[Rent extraction from common slack]
\label{lem:rent_extraction_common_slack} Fix a posterior belief $p$ and a
pair $(x,y)$. Suppose $p(S(x,y))>0$ and there is $\kappa >0$ such that
\begin{equation*}
u(x,y,\theta )\geq \kappa
\end{equation*}%
for $p$-almost every $\theta \in S(x,y)=\{\theta \in \Theta :u(x,y,\theta
)\geq 0\}$. Suppose also that $u$ is locally uniformly continuous in $y$ at $%
(x,y)$ on this participating posterior set: for every $\varepsilon >0$ there
exists $\delta >0$ such that, for every $t\in \lbrack 0,\delta ]$,
\begin{equation*}
|u(x,y+t,\theta )-u(x,y,\theta )|<\varepsilon
\end{equation*}%
for $p$-almost every $\theta \in S(x,y)$. If $u$ is decreasing in $y$ and $v$
is strictly increasing in $y$, and there is $\bar{\delta}>0$ such that $%
v(x,y+t,\cdot )$ is $p$-integrable on $S(x,y)$ for every $t\in \lbrack 0,%
\bar{\delta}]$, then for some $t>0$,
\begin{equation*}
\int_{\Theta }\mathbf{1}_{\{u(x,y+t,\theta )\geq 0\}}v(x,y+t,\theta
)\,p(d\theta )>\int_{\Theta }\mathbf{1}_{\{u(x,y,\theta )\geq
0\}}v(x,y,\theta )\,p(d\theta ).
\end{equation*}
\end{lemma}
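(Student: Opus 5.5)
The plan is to compare the two integrals by splitting $\Theta$ into the participating set $S:=S(x,y)$ and its complement, and to choose $t$ small enough that moving from $y$ to $y+t$ changes nothing on either piece except raising $v$.

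\emph{Step 1 (types outside $S$ stay out).} On $\Theta\setminus S$ we have $u(x,y,\theta)<0$. Since $u$ is decreasing in $y$, also $u(x,y+t,\theta)\le u(x,y,\theta)<0$ for every $t\ge 0$. So these types contribute zero to both integrals, and it suffices to work on $S$.

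\emph{Step 2 (types in $S$ stay in).} Apply the local uniform continuity hypothesis with $\varepsilon=\kappa$. This gives $\delta>0$ such that, for $t\in[0,\delta]$ and $p$-a.e.\ $\theta\in S$,
\begin{equation*}
u(x,y+t,\theta)>u(x,y,\theta)-\kappa\ge 0 .
\end{equation*}
The common slack $\kappa$ is what makes this uniform. Hence $\mathbf{1}_{\{u(x,y+t,\theta)\geq 0\}}=1$ $p$-a.e.\ on $S$ for all such $t$. The slack must be uniform because pointwise positive rents alone could let some types exit for every $t>0$.

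\emph{Step 3 (compare the integrals).} Fix any $t\in(0,\min\{\delta,\bar\delta\}]$. By Steps 1 and 2,
\begin{equation*}
\int_{\Theta }\mathbf{1}_{\{u(x,y+t,\theta )\geq 0\}}v(x,y+t,\theta)\,p(d\theta )-\int_{\Theta }\mathbf{1}_{\{u(x,y,\theta )\geq 0\}}v(x,y,\theta)\,p(d\theta )=\int_{S}\bigl[v(x,y+t,\theta)-v(x,y,\theta)\bigr]\,p(d\theta).
\end{equation*}
Both terms on the right are finite, because $t\le\bar\delta$ and the integrability hypothesis holds at $t$ and at $0$. The integrand is strictly positive everywhere on $S$, since $v$ is strictly increasing in $y$. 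A strictly positive function integrated over a set of positive measure, $p(S)>0$, has a strictly positive integral, so the right-hand side is $>0$. This gives the claimed inequality.

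The only delicate point is Step 2. The key is to invoke the continuity hypothesis with $\varepsilon$ equal to the slack $\kappa$, and to note that the a.e.\ qualifiers cost nothing because the integrals ignore $p$-null sets. The remaining steps are direct.
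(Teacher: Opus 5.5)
Your proposal is correct and follows essentially the same route as the paper: choose $\delta$ from local uniform continuity with $\varepsilon=\kappa$ so that $p$-almost all of $S(x,y)$ keeps participating, use that $u$ is decreasing in $y$ to keep types outside $S(x,y)$ out, and conclude from $p(S(x,y))>0$, strict monotonicity of $v$ in $y$, and integrability that the payoff strictly rises for every $t\in(0,\min\{\delta,\bar\delta\}]$. Your extra justifications, namely finiteness of both integrals and strict positivity of the integral of a strictly positive function over a set of positive measure, fill in details that the paper leaves implicit.
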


\begin{proof}
Choose $\delta >0$ from local uniform continuity for $\varepsilon =\kappa $.
Then for every $t\in (0,\min \{\delta ,\bar{\delta}\}]$, $p$-almost all
types in $S(x,y)$ remain willing to participate after $(x,y+t)$. Types
outside $S(x,y)$ do not enter after the increase because $u$ is decreasing
in $y$. Hence the participating set is unchanged, up to a $p$-null set.
Since $p(S(x,y))>0$, $v$ is strictly increasing in $y$, and the relevant
functions are integrable,
\begin{equation*}
\int_{S(x,y)}v(x,y+t,\theta )\,p(d\theta )>\int_{S(x,y)}v(x,y,\theta
)\,p(d\theta )
\end{equation*}%
for every sufficiently small $t>0$. This is the desired strict improvement
in the reduced-form posterior payoff.
\end{proof}

\paragraph{Diffuse action-pair distributions.}

The atomic participating-action-pair restriction is an exact condition. It
rules out equilibria in which the agent's mixed communication strategy makes
participating mass spread over zero-probability action pairs. This
restriction is mainly technical. Under additional compactness, boundedness,
and uniform-continuity conditions, such diffuse action-pair schemes can be
approximated by finite partitions of $L\times \mathbb{R}_{+}$: each cell is
replaced by a representative action pair, and the induced payoffs, incentive
constraints, and continuation optimality conditions are preserved up to an
arbitrarily small error. Thus the exact proposition applies to the atomic
equilibrium class, while diffuse thin-randomization equilibria are covered
by the corresponding approximation or closure interpretation. The proof
below uses the exact condition to avoid carrying these approximation errors.

\bigskip

Now we prove Proposition \ref{proposition_single}. Let $w\in \{0,1\}$ denote
the agent's exit decision after observing the principal's non-contractible
action, with $w=1$ meaning that the agent stays. By Lemma \ref%
{lem:extension_w2}, it is enough to consider a menu with recommendations and
a truthful continuation equilibrium in which the principal follows the
agent's recommendation on the equilibrium path. Fix such a menu, let $L$ be
the set of available contractible actions, and let $q:\Theta \rightarrow
\Delta (L\times \mathbb{R}_{+})$ be the agent's continuation strategy over
chosen contractible actions and recommended non-contractible actions. For
each type $\theta $, let
\begin{equation*}
P_{\theta }:=\{(x,y)\in L\times \mathbb{R}_{+}:u(x,y,\theta )\geq 0\}.
\end{equation*}%
Let
\begin{equation*}
Q(A):=\int_{\Theta }q(\theta )[A]\,\mu (d\theta )
\end{equation*}%
be the aggregate distribution over selected action pairs, and let
\begin{equation*}
Q^{P}(A):=\int_{\Theta }q(\theta )[A\cap P_{\theta }]\,\mu (d\theta )
\end{equation*}%
be the aggregate measure over participating action pairs. For each action
pair $a=(x,y)$ with $Q(\{a\})>0$, let $p_{a}$ denote the posterior after
action pair $a$. For each action pair $a=(x,y)$, define
\begin{equation*}
\Theta (a):=\{\theta \in \Theta :q(\theta )[\{a\}]>0\text{ and the agent
stays after }y\}.
\end{equation*}%
All statements about type sets are understood up to $\mu $-null sets.

Consider two distinct selected action pairs $a=(x,y)$ and $b=(x^{\prime
},y^{\prime })$ with $Q^P(\{a\})>0$ and $Q^P(\{b\})>0$. For each such action
pair, $\mu(\Theta(a))>0$, because
\begin{equation*}
Q^P(\{a\})=\int_\Theta \mathbf{1}_{\{\theta\in \Theta(a)\}}q(\theta)[\{a\}]
\,\mu(d\theta).
\end{equation*}
In particular, each such action pair has positive aggregate probability and
hence a well-defined on-path posterior. Assumption \ref{assu_single_crossing}
implies that the set of types weakly preferring $a$ to $b$ is an interval,
with at most one indifference type. Hence the two regions cannot be
interlaced: if, after relabeling, $\theta _{L},\theta _{H}\in \Theta (a)$
and the region $\Theta (b)$ contains a positive-measure set of types between
them, then the at-most-one indifference-type condition lets us choose $%
\theta _{M}\in \Theta (b)$ with $\theta _{L}<\theta _{M}<\theta _{H}$ who
strictly prefers $b$ to $a$. Incentive compatibility says that $\theta _{L}$
and $\theta _{H}$ weakly prefer $a$ to $b$. Since the set of types weakly
preferring $a$ to $b$ is an interval, $\theta _{M}$ must also weakly prefer $%
a$, a contradiction. Thus any two distinct positive-measure action-pair
regions must be ordered.

Ordered positive-measure pooling regions cannot be sustained either.
Suppose, without loss of generality, that every type in $\Theta (b)$ is
higher than every type in $\Theta (a)$. For any $\theta^{\prime }\in \Theta
(b)$, incentive compatibility and monotonicity in type give
\begin{equation*}
u(x^{\prime },y^{\prime },\theta ^{\prime })\geq u(x,y,\theta ^{\prime
})\geq \sup_{\theta \in \Theta (a)}u(x,y,\theta )>0.
\end{equation*}%
Therefore, participation is slack on a positive posterior-measure subset of
the higher pooling region; in fact, the displayed inequality gives a common
positive slack bound for all types that participate after action pair $b$.
By Lemma \ref{lem:rent_extraction_common_slack}, the principal can raise $%
y^{\prime }$ slightly following action pair $b$ and strictly increase the
principal's posterior payoff. This contradicts continuation optimality.
Hence there cannot be two distinct positive-measure action-pair regions.

The atomic participating-action-pair restriction rules out the remaining
diffuse case. Since the preceding paragraphs rule out two distinct action
pairs with positive participating mass, this restriction implies that the
participating action-pair measure must be concentrated, up to null sets, on
a single action pair. If there is no participating action-pair mass, the
no-trade payoff is attainable by a single offer that induces exit. Otherwise
let $(\bar{x},\bar{y})$ be the action pair on which $Q^P$ is concentrated.
Any type with $u(\bar{x},\bar{y},\theta)>0$ must choose this participating
action pair rather than exit; otherwise that type has a profitable message
deviation to $(\bar{x},\bar{y})$. Thus, except for null sets and indifferent
cutoff types, the participating types coincide with $S(\bar{x},\bar{y})$.
The original menu payoff is no larger than the payoff from the single offer $%
\bar{x}$ followed by the best continuation choice of $y$ after that offer.
Since every single contractible-action offer is itself feasible, the menu
value within the atomic equilibrium class equals the single-offer value. $%
\blacksquare $

\bibliographystyle{econometrica}
\bibliography{common-agency}

\clearpage
\setcounter{page}{1} \renewcommand{\thepage}{SA-\arabic{page}}

\begin{center}
{\Large \textbf{Supplemental Appendix}}\\[0.75em]
{\large \textbf{Contracting with Imperfect Commitment: Minimal Canonical
Contracts}}\\[0.5em]
Seungjin Han and Siyang Xiong
\end{center}

\noindent This supplemental appendix contains formal assumptions and proofs
deferred from the main paper.

\section*{Additional Proofs}

\subsection*{Auxiliary facts for robust canonicality}

The continuation-equilibrium canonicality result is proved in Appendix \ref%
{sec:prelim_analysis:2}. Here we record the remaining robust-equilibrium
comparison facts used in Appendix \ref{sec:minimal_contract_muti_principals}
and in the detailed proof of Theorem \ref{thm:minimal:weak-robust} below.
The proof uses two comparison results: $\mathcal{G}^{\sharp }$ gives a lower
bound for robust-equilibrium allocations, while $\mathcal{G}^{\ast }$ gives
an upper bound. For the lower bound, introduce a second binary relation on
contracts. For any two contracts $G_{j}:M_{j}\longrightarrow X_{j}$ and $%
G_{j}^{\prime }:M_{j}^{\prime }\longrightarrow X_{j}$, define%
\begin{equation*}
G_{j}^{\prime }\sqsupset _{j}G_{j}\Longleftrightarrow \left(
\begin{array}{c}
\text{there exists a measurable surjective function }\iota
_{j}:M_{j}^{\prime }\longrightarrow M_{j}\text{,} \\
\text{admitting a measurable right inverse, and such that} \\
G_{j}^{\prime }\left( m_{j}^{\prime }\right) =G_{j}\left( \iota _{j}\left(
m_{j}^{\prime }\right) \right) \text{ for any }m_{j}^{\prime }\in
M_{j}^{\prime }%
\end{array}%
\right) \text{.}
\end{equation*}

\begin{lemma}
\label{lem:extension_w}Consider any contract space $\mathcal{G}$ and any
allocation $z$ induced by a $\mathcal{G}$-continuation equilibrium $\left[ G%
\text{, }\left( \gamma ,q,p\right) \right] $. For any contract space $%
\mathcal{G}^{\prime }$ with $\mathcal{G}\subseteq \mathcal{G}^{\prime }$,
any $j\in \mathcal{J}$, and any $G_{j}^{\prime }\in \mathcal{G}_{j}^{\prime
} $ with $G_{j}^{\prime }\sqsupset _{j}G_{j}$, there is a $\mathcal{G}%
^{\prime }$-continuation equilibrium in which principal $j$ uses $%
G_{j}^{\prime }$, the other principals use $G_{-j}$, and the induced
allocation is $z$.
\end{lemma}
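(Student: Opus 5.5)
The proof is a relabeling argument. Fix the $\mathcal{G}$-continuation equilibrium $[G,(\gamma,q,p)]$ inducing $z$. Take $\iota_j:M_j'\to M_j$ to be the measurable surjection from $G_j'\sqsupset_j G_j$, with measurable right inverse $\rho_j:M_j\to M_j'$, so that $\iota_j\circ\rho_j=\mathrm{id}_{M_j}$. Define $\iota:M_j'\times M_{-j}\to M$ by $\iota(m_j',m_{-j})=(\iota_j(m_j'),m_{-j})$, and let $\rho$ be the corresponding map built from $\rho_j$.

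The new profile is $[(G_j',G_{-j}),(\gamma',q',p')]$, defined as follows.
\begin{itemize}
\item Continuation actions: $\gamma'_k(m')=\gamma_k(\iota(m'))$ for every $k$ and every $m'$.
\item Beliefs: $p'_k(m')=p_k(\iota(m'))$ for every $k$ and every $m'$.
\item Messages: $q'(\theta)=q(\theta)\circ\rho^{-1}$, the push-forward of $q(\theta)$ under $\rho$.
\end{itemize}
Feasibility of $\gamma'$ is immediate. Since $G_j'(m_j')=G_j(\iota_j(m_j'))$, we have $\gamma'_j(m')\in F_j(G_j(\iota_j(m_j')))=F_j(G_j'(m_j'))$. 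Measurability follows because all these maps are compositions of measurable maps.

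The key identity is that every new message $m'$ induces exactly the same action profile as $\iota(m')$. The contractible actions agree by the $\sqsupset_j$ relation, and the continuation actions agree by the definition of $\gamma'$. Since $\iota\circ\rho=\mathrm{id}$, the push-forward of $q'(\theta)$ under $\iota$ is $q(\theta)$. It follows that the induced allocation equals $z$ and that each type's on-path payoff is unchanged.

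The agent's constraints carry over. For any deviation $m'$, the payoff equals the payoff from $\iota(m')\in M$, which was already available in the original equilibrium. Hence the supremum of deviation payoffs is unchanged (surjectivity gives equality, though only ``$\le$'' is needed), and participation is preserved. Principal optimality after any $m'$ is the original optimality condition at $\iota(m')$, because the belief, the other principals' actions, and the feasible set $F_j$ are all copied from $\iota(m')$.

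The step I expect to be the main obstacle is Bayesian consistency. Under $q'$, the only messages that occur are in $\rho(M)$. On that set, the message $m'=\rho(m)$ carries exactly the information of $m$, since $\rho$ is injective with left inverse $\iota$. I would verify the consistency identity by a change of variables. Take measurable $E_1\subseteq\Theta$ and $E_2\subseteq M'$, and set $E=\rho^{-1}(E_2)$. Then
\[
\int_{E_1} q'(\theta)[E_2]\,\mu(d\theta)=\int_{E_1} q(\theta)[E]\,\mu(d\theta),
\]
and
\[
\int_\Theta\int_{E_2} p'_k(m')[E_1]\,q'(\theta)[dm']\,\mu(d\theta)=\int_\Theta\int_{E} p_k(m)[E_1]\,q(\theta)[dm]\,\mu(d\theta),
\]
where the second equality uses $p'_k(\rho(m))=p_k(\iota(\rho(m)))=p_k(m)$. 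The original consistency of $p$ with $q$, applied to the pair $(E_1,E)$, makes the two right-hand sides equal. Off $\rho(M)$, beliefs are unrestricted by Bayes' rule, and copying $p_k(\iota(m'))$ is a legitimate choice.

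The remaining checks are routine. Measurability of $\rho^{-1}(E_2)$ holds because $\rho$ is measurable. Membership $(G_j',G_{-j})\in\mathcal{G}'$ holds because $\mathcal{G}\subseteq\mathcal{G}'$ and $G_j'\in\mathcal{G}_j'$. Together these give a $\mathcal{G}'$-continuation equilibrium inducing $z$.
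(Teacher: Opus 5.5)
Your proposal is correct and follows the paper's proof: the same surjection $\iota_j$ with right inverse $\rho_j$, the push-forward of $q$ under $\rho$, and continuation actions and beliefs copied via $\gamma\circ\iota$ and $p\circ\iota$. The one addition is your explicit change-of-variables check of Bayesian consistency (using $p'_k\circ\rho=p_k$), which the paper covers only by saying that on-path play is a relabeling of the original equilibrium.
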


\begin{proof}
Since $G_{j}^{\prime }\sqsupset _{j}G_{j},$ there exists a measurable
surjective function $\iota _{j}:M_{j}^{\prime }\longrightarrow M_{j}$ with a
measurable right inverse $\rho _{j}:M_{j}\rightarrow M_{j}^{\prime }$ such
that $G_{j}^{\prime }\left( m_{j}^{\prime }\right) =G_{j}\left( \iota
_{j}\left( m_{j}^{\prime }\right) \right) $ for any $m_{j}^{\prime }\in
M_{j}^{\prime }$ and $\iota _{j}(\rho _{j}(m_{j}))=m_{j}$ for every $%
m_{j}\in M_{j}$.

Define $q^{\prime }$ as the push-forward of $q$ under the map $%
(m_{j},m_{-j})\mapsto (\rho _{j}(m_{j}),m_{-j})$. Equivalently, for every
measurable $E\subset M_{j}^{\prime }\times M_{-j}$,
\begin{equation*}
q^{\prime }(\theta )[E]=q(\theta )\left[ \{m\in M:(\rho
_{j}(m_{j}),m_{-j})\in E\}\right] .
\end{equation*}%
Thus messages outside $\rho _{j}(M_{j})\times M_{-j}$ are reached with zero
probability. For every message profile $(m_{j}^{\prime },m_{-j})\in
M_{j}^{\prime }\times M_{-j}$ and every principal $\ell $, set
\begin{equation*}
\gamma _{\ell }^{\prime }(m_{j}^{\prime },m_{-j})=\gamma _{\ell }(\iota
_{j}(m_{j}^{\prime }),m_{-j}),\qquad p_{\ell }^{\prime }(m_{j}^{\prime
},m_{-j})=p_{\ell }(\iota _{j}(m_{j}^{\prime }),m_{-j}).
\end{equation*}

The construction preserves the induced contractible actions because $%
G_{j}^{\prime }(m_{j}^{\prime })=G_{j}(\iota _{j}(m_{j}^{\prime }))$. On the
path, the profile is just a relabeling of the original continuation
equilibrium. Off the path, every possible message under $G_{j}^{\prime }$ is
assigned the same continuation action and belief as the original message
profile to which $\iota _{j}$ maps it. Hence no new payoff is available to
the agent, and each principal's continuation action remains optimal.
Therefore
\begin{equation*}
\left[ (G_{j}^{\prime },G_{-j}),(\gamma ^{\prime },q^{\prime },p^{\prime })%
\right]
\end{equation*}%
is a $\mathcal{G}^{\prime }$-continuation equilibrium inducing the same
allocation $z$.
\end{proof}

\bigskip

\begin{lemma}
\label{lem:max:robust:off-equilibrium-pah}For any contract space $\mathcal{G}%
=\times _{k\in \mathcal{J}}\mathcal{G}_{k}$ with $\mathcal{G}^{\sharp
}\subseteq \mathcal{G}$, a profile $\left[ G=\times _{k\in \mathcal{J}}G_{k}%
\text{, }\left( \gamma ,q,p\right) \right] $ is a $\mathcal{G}$-robust
equilibrium if and only if it is a $\widehat{\mathcal{G}}$-robust
equilibrium with $\widehat{\mathcal{G}}=\times _{k\in \mathcal{J}}\widehat{%
\mathcal{G}}_{k}=\times _{k\in \mathcal{J}}\left[ \left\{ G_{k}\right\} \cup
\mathcal{G}_{k}^{\sharp }\right] $.
\end{lemma}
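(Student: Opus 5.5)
The plan is to prove the two directions separately. Throughout, note that $G\in\widehat{\mathcal{G}}$ and $\widehat{\mathcal{G}}_k\subseteq\mathcal{G}_k$ for every $k$, because $\mathcal{G}^{\sharp}\subseteq\mathcal{G}$ and $G\in\mathcal{G}$.

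\emph{Continuation part.} First I check that the continuation-equilibrium part of the two robustness notions coincides. Definition \ref{def:weak-equilibrium}, in its formal version in Appendix \ref{app:formal_continuation_equilibrium}, involves only the fixed profile $G$ and the conditions on $(\gamma,q,p)$. The ambient contract space enters only through the requirement $G\in\mathcal{G}$. Hence $[G,(\gamma,q,p)]$ is a $\mathcal{G}$-continuation equilibrium if and only if it is a $\widehat{\mathcal{G}}$-continuation equilibrium. The same reasoning applies to every post-deviation profile $(G_j',G_{-j})$: whether it is a continuation equilibrium does not depend on which space contains it. The only real difference between the two robustness notions is therefore the set of deviations $G_j'$ that must be tested.

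\emph{($\Rightarrow$)} Suppose the profile is $\mathcal{G}$-robust. Every $G_j'\in\widehat{\mathcal{G}}_j$ lies in $\mathcal{G}_j$, so robustness supplies a continuation equilibrium after $(G_j',G_{-j})$ that gives principal $j$ no more than the equilibrium payoff $V_j$. By the observation above, this is also a $\widehat{\mathcal{G}}$-continuation equilibrium. Hence the profile is $\widehat{\mathcal{G}}$-robust.

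\emph{($\Leftarrow$)} This is the substantive step. Suppose the profile is $\widehat{\mathcal{G}}$-robust, and take an arbitrary deviation $G_j':M_j'\to X_j$ in $\mathcal{G}_j$. I will construct a sharp contract $G_j^{\sharp}\in\mathcal{G}_j^{\sharp}$ with $G_j'\sqsupset_j G_j^{\sharp}$. Then robustness against $G_j^{\sharp}$ gives a continuation equilibrium $[(G_j^{\sharp},G_{-j}),(\gamma^{\sharp},q^{\sharp},p^{\sharp})]$ with payoff to $j$ at most $V_j[G,(\gamma,q,p)]$. Lemma \ref{lem:extension_w}, applied with $\mathcal{G}'=\mathcal{G}$, then transports this to a continuation equilibrium after $(G_j',G_{-j})$ that induces the same allocation. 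Since $V_j$ is computed from the allocation, $j$'s payoff is unchanged, so the deviation $G_j'$ is not profitable.

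The difficulty is constructing $G_j^{\sharp}$. The natural candidate records each message by the pair (contractible action, recommendation it induces). But $\sqsupset_j$ requires a measurable surjection $\iota_j:M_j'\to M_j^{\sharp}$ that depends only on $j$'s own message, whereas continuation actions depend on the whole profile. So I define $\iota_j$ without reference to continuation play. I fix a measurable selector $\sigma$ of $F_j$ on $L_j'=G_j'(M_j')$, which exists by admissibility, and set
\begin{equation*}
M_j^{\sharp}=\{(x_j,\sigma(x_j)):x_j\in L_j'\},\qquad \iota_j(m_j')=(G_j'(m_j'),\sigma(G_j'(m_j'))).
\end{equation*}
I then need to verify three things.
\begin{enumerate}
\item $M_j^{\sharp}$ is Borel. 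It is the graph of a measurable function on a Borel set, so this holds.
\item $L(M_j^{\sharp})=L_j'$ belongs to $\mathcal{L}_j^{\ast}$. This uses the admissibility of $G_j'$: its image is Borel, and $M_j(L_j')$ is Borel with a measurable selector.
\item The projection $\pi_{X_j}$ has a measurable right inverse on $M_j^{\sharp}$. The map $x_j\mapsto(x_j,\sigma(x_j))$ serves.
\end{enumerate}
Finally, $\iota_j$ is surjective and has a measurable right inverse, namely the composition of a measurable selector of $G_j'$ on its image (again from admissibility) with the projection. I also have $G_j'=G_j^{\sharp}\circ\iota_j$ by construction, so $G_j'\sqsupset_j G_j^{\sharp}$.

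The main obstacle is making sure these measurability requirements line up exactly with the admissibility convention, so that $G_j^{\sharp}$ really belongs to $\mathcal{G}_j^{\sharp}\subseteq\widehat{\mathcal{G}}_j$. Once that is verified, the argument closes with Lemma \ref{lem:extension_w}.
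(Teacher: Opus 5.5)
Your proposal is correct and follows the paper's proof essentially step for step. The ``only if'' direction comes from $\widehat{\mathcal{G}}\subseteq\mathcal{G}$. For the ``if'' direction, you replace an arbitrary deviation $G_j'$ by the contract in $\mathcal{G}_j^{\sharp}$ that offers the pairs $(x_j,\sigma(x_j))$ along a measurable selector of $F_j$ on $G_j'(M_j')$, invoke robustness against that contract, and transport the resulting continuation equilibrium back to $G_j'$ via Lemma \ref{lem:extension_w}; your explicit checks of the Borel and right-inverse requirements spell out what the paper asserts in one line.
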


\begin{proof}
Consider any contract space $\mathcal{G}=\times _{k\in \mathcal{J}}\mathcal{G%
}_{k}$. The "only if" part of Lemma \ref{lem:max:robust:off-equilibrium-pah}
is immediately implied by Definition \ref{def:equilibrium:robust} and $%
\widehat{\mathcal{G}}\subseteq \mathcal{G}$. To prove the "if" part, we fix
any $\widehat{\mathcal{G}}$-robust equilibrium $\left[ G=\left( G_{k}\right)
_{k\in \mathcal{J}}\text{, }\left( \gamma ,q,p\right) \right] $, and aim to
show that it is a $\mathcal{G}$-robust equilibrium. Consider any $j\in
\mathcal{J}$ and any unilateral deviation by $j$, written $G_{j}^{\prime
}:M_{j}^{\prime }\rightarrow X_{j}$, with $G_{j}^{\prime }\in \mathcal{G}%
_{j} $. We can find a contract $G_{j}^{\prime \prime }\in \mathcal{G}%
_{j}^{\sharp }$ such that $G_{j}^{\prime }\sqsupset _{j}G_{j}^{\prime \prime
}$: since $G_{j}^{\prime }$ is admissible, its image lies in $\mathcal{L}%
_{j}^{\ast }$. Choose a measurable feasible-action selector $y_{j}(x_{j})\in
F_{j}(x_{j})$ on this image and let $G_{j}^{\prime \prime }$ offer exactly
the pairs $(x_{j},y_{j}(x_{j}))$. Mapping each message of $G_{j}^{\prime }$
to the corresponding pair in $G_{j}^{\prime \prime }$ gives the required
surjection, and the maintained selector for $G_{j}^{\prime }$ gives a
measurable right inverse. Since $\left[ G\text{, }\left( \gamma ,q,p\right) %
\right] $ is a $\widehat{\mathcal{G}}$-robust equilibrium, there exists a $%
\widehat{\mathcal{G}}$-continuation equilibrium $\left[ \left( G_{j}^{\prime
\prime },G_{-j}\right) \text{, }\left( \gamma ^{\prime \prime },q^{\prime
\prime },p^{\prime \prime }\right) \right] $ such that%
\begin{equation}
V_{j}\left[ G\text{, }\left( \gamma ,q,p\right) \right] \geq V_{j}\left[
\left( G_{j}^{\prime \prime },G_{-j}\right) \text{, }\left( \gamma ^{\prime
\prime },q^{\prime \prime },p^{\prime \prime }\right) \right] \text{.}
\label{at4}
\end{equation}%
By $G_{j}^{\prime }\sqsupset _{j}G_{j}^{\prime \prime }$ and Lemma \ref%
{lem:extension_w}, there is a $\mathcal{G}$-continuation equilibrium with
contract profile $\left( G_{j}^{\prime },G_{-j}\right) $; denote it by $%
\left[ \left( G_{j}^{\prime },G_{-j}\right) \text{, }\left( \gamma ^{\prime
},q^{\prime },p^{\prime }\right) \right] $. It satisfies%
\begin{equation}
V_{j}\left[ \left( G_{j}^{\prime },G_{-j}\right) \text{, }\left( \gamma
^{\prime },q^{\prime },p^{\prime }\right) \right] =V_{j}\left[ \left(
G_{j}^{\prime \prime },G_{-j}\right) \text{, }\left( \gamma ^{\prime \prime
},q^{\prime \prime },p^{\prime \prime }\right) \right] \text{.}  \label{at5}
\end{equation}%
(\ref{at4})-(\ref{at5}) imply that $\left[ G\text{, }\left( \gamma
,q,p\right) \right] $ is a $\mathcal{G}$-robust equilibrium.
\end{proof}

\bigskip

Lemma \ref{lem:max:robust:off-equilibrium-pah} immediately implies that $%
\mathcal{G}^{\sharp }$ establishes a lower bound for the robust equilibrium
allocation sets in the following sense.

\begin{prop}
\label{prop:canonical:lower-bound}For any $\mathcal{E}=\langle \mu
,U,u,(v_{k})_{k\in \mathcal{J}}\rangle $ and any contract space $\mathcal{G}$
with $\mathcal{G}^{\sharp }\subseteq \mathcal{G}$, we have%
\begin{equation*}
\mathcal{Z}^{\text{robust-}\mathcal{G}^{\sharp }\text{-}\mathcal{E}%
}\subseteq \mathcal{Z}^{\text{robust-}\mathcal{G}\text{-}\mathcal{E}}\text{.}
\end{equation*}
\end{prop}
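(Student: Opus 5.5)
The plan is to take an arbitrary $\mathcal{G}^{\sharp}$-robust equilibrium $[G,(\gamma,q,p)]$ inducing $z$ and show that it is also $\mathcal{G}$-robust, so that $z\in\mathcal{Z}^{\text{robust-}\mathcal{G}\text{-}\mathcal{E}}$. By Lemma \ref{lem:max:robust:off-equilibrium-pah}, $\mathcal{G}$-robustness of a profile with $G\in\mathcal{G}$ is equivalent to $\widehat{\mathcal{G}}$-robustness, where $\widehat{\mathcal{G}}=\times_{k}[\{G_k\}\cup\mathcal{G}_k^{\sharp}]$. Since $G_k\in\mathcal{G}_k^{\sharp}$, we have $\widehat{\mathcal{G}}=\mathcal{G}^{\sharp}$. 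The argument therefore reduces to bookkeeping with that lemma.

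The steps run in this order.

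First, note that $G\in\mathcal{G}^{\sharp}\subseteq\mathcal{G}$, so $G$ is an admissible profile in $\mathcal{G}$.

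Second, check that the continuation-equilibrium part transfers from $\mathcal{G}^{\sharp}$ to $\mathcal{G}$. The conditions in the formal definition of Appendix \ref{app:formal_continuation_equilibrium} involve only the fixed profile $G$, its message spaces, and $(\gamma,q,p)$. They do not depend on the ambient contract space beyond the requirement $G\in\mathcal{G}$, so $[G,(\gamma,q,p)]$ is a $\mathcal{G}$-continuation equilibrium.

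Third, identify $\widehat{\mathcal{G}}$ with $\mathcal{G}^{\sharp}$, which is immediate because $\{G_k\}\cup\mathcal{G}^{\sharp}_k=\mathcal{G}^{\sharp}_k$. The profile is $\widehat{\mathcal{G}}$-robust by hypothesis, and the ``if'' direction of Lemma \ref{lem:max:robust:off-equilibrium-pah} then yields $\mathcal{G}$-robustness.

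Finally, the induced allocation is unchanged, because it depends only on $(G,\gamma,q)$. Hence $z$ lies in the robust-$\mathcal{G}$ set.

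The only substantive content sits inside the lemma's ``if'' direction: every deviation $G_j'\in\mathcal{G}_j$ is replicated by a coarser $G_j''\in\mathcal{G}_j^{\sharp}$ with $G'_j\sqsupset_j G''_j$. The bad continuation equilibrium available after $G''_j$ is then lifted to one after $G'_j$ via Lemma \ref{lem:extension_w}. In a self-contained argument I would expect this step to be the main obstacle. It relies on the measurable right inverse, which must preserve the payoffs both on path and off path, and on the fact that the continuation equilibrium after $G''_j$ is taken with respect to $\mathcal{G}^{\sharp}$ and must be re-certified relative to $\mathcal{G}$. By the second step this re-certification is harmless, since continuation-equilibrium conditions are local to the profile. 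Here I simply invoke the lemma.
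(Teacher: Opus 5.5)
Your proposal is correct and is essentially the paper's argument. The paper says only that the proposition follows immediately from Lemma \ref{lem:max:robust:off-equilibrium-pah}, and the implicit step is yours: $\widehat{\mathcal{G}}=\times_k[\{G_k\}\cup\mathcal{G}_k^{\sharp}]=\mathcal{G}^{\sharp}$ when $G\in\mathcal{G}^{\sharp}$, so the lemma's ``if'' direction upgrades $\mathcal{G}^{\sharp}$-robustness to $\mathcal{G}$-robustness without changing the allocation, and your observation that continuation-equilibrium conditions depend only on the profile spells out a detail the paper leaves unstated.
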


It is not enough to check deviations only in $\mathcal{G}_{j}^{\ast }$. A
singleton-message deviation that induces $x_j$ when $F_j(x_j)=\{y_j,y_j^{%
\prime}\}$ cannot be obtained by relabeling any $\mathcal{G}_{j}^{\ast }$
contract: the singleton contract has no separate messages for the two
recommendations. Thus robustness against $\mathcal{G}_{j}^{\ast }$
deviations alone may fail to deter arbitrary deviations.

Nonetheless, $\mathcal{G}^{\ast }$ provides an upper bound.

\begin{lemma}
\label{lem:max:robust:on-equilibrium-pah}For any allocation $z$ which is
induced by a $\mathcal{G}$-robust equilibrium $\left[ G\text{, }\left(
\gamma ,q,p\right) \right] $ with $\mathcal{G}^{\ast }\subseteq \mathcal{G}$%
, there exists a $\mathcal{G}^{\ast }$-robust equilibrium $\left[ \left(
G_{k}^{G_{k}}\right) _{k\in \mathcal{J}}\text{, }\left( \gamma ^{\prime
},q^{\prime },p^{\prime }\right) \right] $ which induces $z$.
\end{lemma}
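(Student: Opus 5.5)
We prove Lemma \ref{lem:max:robust:on-equilibrium-pah} in two steps. First we build the candidate profile in $\mathcal{G}^{\ast}$. Then we check robustness against every deviation in $\mathcal{G}^{\ast}_{j}$.

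\textbf{Step 1 (candidate profile).} Start from the given $\mathcal{G}$-robust equilibrium $[G,(\gamma,q,p)]$ inducing $z$. Apply the construction in the proof of Lemma \ref{lem:extension_w2} to obtain a $\mathcal{G}^{\ast}$-continuation equilibrium $[(G_k^{G_k})_{k\in\mathcal{J}},(\gamma',q',p')]$ inducing $z$. Recall how it is built: $q'$ is the push-forward of $q$ under $T(m)=(G_k(m_k),\gamma_k(m))_k$, and off $T(M)$ the continuation play and beliefs are copied from $r(m')=(s_k(x_k))_k$.

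This profile gives each principal the same ex ante payoff as the original, since the allocation over $X\times Y$ is unchanged and $V_j$ depends only on that allocation. It therefore suffices to show that for each $j$ and each deviation $G_j'\in\mathcal{G}_j^{\ast}$ there is a $\mathcal{G}^{\ast}$-continuation equilibrium after $(G_j',G^{G_{-j}}_{-j})$ giving $j$ at most $V_j[G,(\gamma,q,p)]$.

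\textbf{Step 2 (robustness).} Fix $j$ and a deviation $G_j'=\widehat G|_{M_j(L_j')}\in\mathcal{G}_j^{\ast}$. Since $G_j'\in\mathcal{G}^{\ast}_j\subseteq\mathcal{G}_j$, robustness of the original profile in $\mathcal{G}$ yields a $\mathcal{G}$-continuation equilibrium $[(G_j',G_{-j}),(\gamma'',q'',p'')]$ with $V_j\le V_j[G,(\gamma,q,p)]$. This continuation is played against the \emph{original} contracts $G_{-j}$, not against $G^{G_{-j}}_{-j}$.

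We transport it by applying the Lemma \ref{lem:extension_w2} construction once more. Keep coordinate $j$ fixed. For each $k\neq j$, replace $G_k$ by $G_k^{G_k}$ via the map $m\mapsto\bigl(m_j',(G_k(m_k),\gamma''_k(m))_{k\neq j}\bigr)$. Principal $j$'s own messages stay unchanged; this is harmless because $G_j'$ is already in $\mathcal{G}_j^{\ast}$.

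The verification follows Lemma \ref{lem:extension_w2}. The argument there never uses any property of $G_j$, and it goes through coordinatewise for the principals being replaced, here $k\neq j$. The resulting continuation equilibrium induces the same action-pair allocation, hence gives $j$ the same payoff, which is at most the equilibrium payoff.

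\textbf{Main obstacle.} The delicate point is the off-path specification in Step 2. At message profiles outside the image of the transport map, we must assign actions and beliefs for principal $j$ as well as for the $k\neq j$. For each coordinate $k\neq j$ we use the selector $s_k$ to pick an original $M_k$ message with the same contractible action, and we keep $m_j'$ as is. We then copy $\gamma''$ and $p''$ at that original profile.

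Two things must be checked. First, the copied actions are feasible: $\gamma''_j$ depends only on $G_j'(m_j')$, which is unchanged, and $\gamma''_k$ is feasible because $s_k$ preserves $x_k$. Second, the agent gains no new payoffs, because every new message profile replicates the action profile of some message profile in $M_j(L_j')\times M_{-j}$. Measurability of the composite beliefs follows from measurability of $s_k$ and of the regular conditional distributions, as in Lemma \ref{lem:extension_w2}.
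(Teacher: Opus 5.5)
Your overall architecture matches the paper's. You canonicalize the original profile via Lemma \ref{lem:extension_w2}. You use $\mathcal{G}$-robustness to get a post-deviation continuation equilibrium against the original $G_{-j}$. You then transport it to $G_{-j}^{G_{-j}}$ by a second application of Lemma \ref{lem:extension_w2}.

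The gap is in how you perform that second application. You keep principal $j$'s message coordinate unchanged and use the map $m\mapsto\bigl(m_j',(G_k(m_k),\gamma''_k(m))_{k\neq j}\bigr)$, claiming the argument of Lemma \ref{lem:extension_w2} goes through ``coordinatewise.'' It does not. That lemma works because the new message records \emph{every} principal's continuation action, so any original profiles merged into one new message share the same full action profile. This is what preserves the allocation, keeps the common action optimal against the mixed posterior, and ensures the agent gains no new payoffs.

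Under your map, principal $j$'s action $\gamma''_j(m)$ is not recorded. Since $\gamma''_j$ depends on the whole public message profile, it can vary across merged profiles. Suppose $G_k$ has two messages $a\neq b$ with $G_k(a)=G_k(b)$ and equal $\gamma''_k$. The agent can use $a$ versus $b$ to signal to principal $j$, who then chooses different $y_j$. Your map collapses $(m_j',a)$ and $(m_j',b)$ into one profile, so principal $j$ must choose a single action against the pooled posterior. Then none of the following is inherited: the allocation, the bound $V_j\le V_j[G,(\gamma,q,p)]$, or the agent's incentive constraint. Your ``main obstacle'' paragraph treats only profiles off the image of the transport map, but the failure occurs on the image.

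The repair is to re-encode coordinate $j$ as well, using $m\mapsto\bigl((G_j'(m_j'),\gamma''_j(m)),(G_k(m_k),\gamma''_k(m))_{k\neq j}\bigr)$. The pair $(G_j'(m_j'),\gamma''_j(m))$ is a legitimate message of $G_j'$ precisely because $G_j'\in\mathcal{G}_j^{\ast}$ offers every feasible recommendation above each contractible action it offers. This amounts to applying Lemma \ref{lem:extension_w2} to the entire profile $(G_j',G_{-j})$ and noting $G_j^{G_j'}=G_j'$, which is what the paper does. So membership in $\mathcal{G}_j^{\ast}$ is what makes re-encoding principal $j$'s messages \emph{possible}, not what makes it unnecessary. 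With this change, your off-image specification and the rest of the argument go through.
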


\begin{proof}
Consider a $\mathcal{G}$-robust equilibrium $\left[ G\text{, }\left( \gamma
,q,p\right) \right] $ which induces $z$. Lemma \ref{lem:extension_w2}
implies a $\mathcal{G}$-continuation equilibrium $\left[ \left(
G_{k}^{G_{k}}\right) _{k\in \mathcal{J}}\text{, }\left( \gamma ^{\prime
},q^{\prime },p^{\prime }\right) \right] $ which induces $z$. As a result,
we have%
\begin{equation}
V_{j}\left[ G\text{, }\left( \gamma ,q,p\right) \right] =V_{j}\left[ \left(
G_{k}^{G_{k}}\right) _{k\in \mathcal{J}}\text{, }\left( \gamma ^{\prime
},q^{\prime },p^{\prime }\right) \right] \text{, }\forall j\in \mathcal{J}%
\text{.}  \label{at1}
\end{equation}%
We now show that $\left[ \left( G_{k}^{G_{k}}\right) _{k\in \mathcal{J}}%
\text{, }\left( \gamma ^{\prime },q^{\prime },p^{\prime }\right) \right] $
is a $\mathcal{G}^{\ast }$-robust equilibrium. Consider any $j\in \mathcal{J}
$ and any unilateral deviation by $j$ to $G_{j}^{\prime \prime }\in \mathcal{%
G}_{j}^{\ast }$. Since $\left[ G\text{, }\left( \gamma ,q,p\right) \right] $
is a $\mathcal{G}$-robust equilibrium, there exists $\mathcal{G}$%
-continuation equilibrium $\left[ \left( G_{j}^{\prime \prime
},G_{-j}\right) \text{, }\left( \gamma ^{\prime \prime },q^{\prime \prime
},p^{\prime \prime }\right) \right] $ such that%
\begin{equation}
V_{j}\left[ G\text{, }\left( \gamma ,q,p\right) \right] \geq V_{j}\left[
\left( G_{j}^{\prime \prime },G_{-j}\right) \text{, }\left( \gamma ^{\prime
\prime },q^{\prime \prime },p^{\prime \prime }\right) \right] \text{.}
\label{at2}
\end{equation}%
Since $G_{j}^{\prime \prime }\in \mathcal{G}_{j}^{\ast }$, we have $%
G_{j}^{\prime \prime }=G_{j}^{G_{j}^{\prime \prime }}$, and by Lemma \ref%
{lem:extension_w2}, we can find a $\mathcal{G}$-continuation equilibrium $%
\left[ \left( G_{j}^{\prime \prime },\left( G_{k}^{G_{k}}\right) _{k\in
\mathcal{J}\diagdown \left\{ j\right\} }\right) \text{, }\left( \gamma
^{\prime \prime \prime },q^{\prime \prime \prime },p^{\prime \prime \prime
}\right) \right] $ such that%
\begin{equation}
V_{j}\left[ \left( G_{j}^{\prime \prime },\left( G_{k}^{G_{k}}\right) _{k\in
\mathcal{J}\diagdown \left\{ j\right\} }\right) \text{, }\left( \gamma
^{\prime \prime \prime },q^{\prime \prime \prime },p^{\prime \prime \prime
}\right) \right] =V_{j}\left[ \left( G_{j}^{\prime \prime },G_{-j}\right)
\text{, }\left( \gamma ^{\prime \prime },q^{\prime \prime },p^{\prime \prime
}\right) \right] \text{.}  \label{at3}
\end{equation}%
(\ref{at1})-(\ref{at3}) imply that $\left[ \left( G_{k}^{G_{k}}\right)
_{k\in \mathcal{J}}\text{, }\left( \gamma ^{\prime },q^{\prime },p^{\prime
}\right) \right] $ is a $\mathcal{G}^{\ast }$-robust equilibrium.
\end{proof}

Lemma \ref{lem:max:robust:on-equilibrium-pah} immediately implies that $%
\mathcal{G}^{\ast }$ establishes an upper bound for the robust equilibrium
allocation sets in the following sense.

\begin{prop}
\label{prop:canonical:upper-bound}For any $\mathcal{E=}\left\langle \mu
\text{, }U,u\text{, }\left( v_{k}\right) _{k\in \mathcal{J}}\right\rangle $
and any contract space $\mathcal{G}$ with $\mathcal{G}^{\ast }\subseteq
\mathcal{G}$, we have%
\begin{equation*}
\mathcal{Z}^{\text{robust-}\mathcal{G}\text{-}\mathcal{E}}\subseteq \mathcal{%
Z}^{\text{robust-}\mathcal{G}^{\ast }\text{-}\mathcal{E}}\text{.}
\end{equation*}
\end{prop}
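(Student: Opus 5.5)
The statement to prove is Proposition \ref{prop:canonical:upper-bound}. I would obtain it as an immediate corollary of Lemma \ref{lem:max:robust:on-equilibrium-pah}, unpacking the definitions of the allocation sets.

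Fix a payoff environment $\mathcal{E}$ and a contract space $\mathcal{G}$ with $\mathcal{G}^{\ast}\subseteq\mathcal{G}$. Take any $z\in\mathcal{Z}^{\text{robust-}\mathcal{G}\text{-}\mathcal{E}}$. By definition there is a $\mathcal{G}$-robust equilibrium $[G,(\gamma,q,p)]$ inducing $z$. Since $\mathcal{G}^{\ast}\subseteq\mathcal{G}$, the hypotheses of Lemma \ref{lem:max:robust:on-equilibrium-pah} hold verbatim. That lemma gives a $\mathcal{G}^{\ast}$-robust equilibrium $[(G_k^{G_k})_{k\in\mathcal{J}},(\gamma',q',p')]$ that induces the same $z$. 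Hence $z\in\mathcal{Z}^{\text{robust-}\mathcal{G}^{\ast}\text{-}\mathcal{E}}$. Because $z$ was arbitrary, the inclusion follows.

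The only point needing care is inside the lemma, which I take as given. The replacement profile must be a $\mathcal{G}^{\ast}$-continuation equilibrium, and not merely a $\mathcal{G}$-continuation equilibrium. This holds for two reasons:
\begin{itemize}
\item The $\mathcal{G}$-continuation-equilibrium conditions of Appendix \ref{app:formal_continuation_equilibrium} depend only on the fixed contract profile, not on the ambient contract space.
\item The profile $(G_k^{G_k})_k$ lies in $\mathcal{G}^{\ast}$, because admissibility places each image $G_k(M_k)$ in $\mathcal{L}_k^{\ast}$.
\end{itemize}

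Likewise, the lemma's robustness check against deviations in $\mathcal{G}_j^{\ast}$ uses only continuation equilibria after those deviations. These do not depend on whether the ambient space is $\mathcal{G}$ or $\mathcal{G}^{\ast}$.

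With this observation, nothing beyond the lemma is required. I expect no genuine obstacle.
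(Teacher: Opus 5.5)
Your proposal is correct and matches the paper, which states Proposition \ref{prop:canonical:upper-bound} as an immediate consequence of Lemma \ref{lem:max:robust:on-equilibrium-pah}, exactly as you do. Your remark that the replacement profile is a genuine $\mathcal{G}^{\ast}$-continuation equilibrium tightens a point that the paper's proof of that lemma phrases loosely. It holds because the formal definition depends on the ambient space only through the requirement $G\in\mathcal{G}$, and admissibility places each image in $\mathcal{L}_k^{\ast}$.
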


\subsection*{Detailed proof of Theorem \protect\ref{thm:minimal:weak-robust}}

Given $\left\vert \mathcal{J}\right\vert \geq 2$, in order to prove Theorem %
\ref{thm:minimal:weak-robust}, it is sufficient to show that $\mathcal{G}%
^{\ast }$ is a canonical contract space due to Proposition \ref%
{thm:lower:weak-robust}. Fix any $\mathcal{E=}\left\langle \mu \text{, }U%
\text{, }u\text{, }\left( v_{k}\right) _{k\in \mathcal{J}}\right\rangle $
and any allocation $z$, and we aim to prove%
\begin{equation}
z\in \mathcal{Z}^{\text{robust-}\mathcal{G}^{\ast }\text{-}\mathcal{E}%
}\Longleftrightarrow \left[ z\text{ is induced by a robust equilibrium}%
\right] \text{.}  \label{G**_minimal}
\end{equation}%
For the \textquotedblleft $\Longleftarrow $\textquotedblright\ direction,
suppose that $z$ is induced by a robust equilibrium $\left[ G\text{, }\left(
\gamma ,q,p\right) \right] $. Because $\left[ G\text{, }\left( \gamma
,q,p\right) \right] $ is a robust equilibrium, it is also a $\overline{%
\mathcal{G}}$-robust equilibrium with $\overline{\mathcal{G}}=\times _{k\in
\mathcal{J}}\overline{\mathcal{G}}_{k}=\times _{k\in \mathcal{J}}\left[
\left\{ G_{k}\right\} \cup \mathcal{G}_{k}^{\ast }\right] $. Thus,
Proposition \ref{prop:canonical:upper-bound} implies $z\in \mathcal{Z}^{%
\text{robust-}\mathcal{G}^{\ast }\text{-}\mathcal{E}}$.

For the \textquotedblleft $\Longrightarrow $\textquotedblright\ direction,
suppose $z$ is induced by a $\mathcal{G}^{\ast }$-robust equilibrium $\left[
G\text{, }\left( \gamma ,q,p\right) \right] $. The profile $G$ itself need
not be robust to deviations outside $\mathcal{G}^{\ast }$. Instead, we
construct an allocation-equivalent robust equilibrium with enlarged message
spaces. Write $G_k:M_k\rightarrow X_k$ for each $k$, set
\begin{equation*}
M:=\times_{k\in\mathcal{J}}M_k,\qquad \mathcal{M}^\circ:=M.
\end{equation*}
Thus the extra coordinate used below records an original message profile,
not an arbitrary feasible action profile.

For each $j\in \mathcal{J}$ and each $\left\langle
G_{j}:M_{j}\longrightarrow X_{j}\right\rangle \in \mathcal{G}_{j}^{\ast }$,
we can extend it to the following contract, $G_{j}^{\circ }:M_{j}\times
\mathcal{M}^{\circ }\longrightarrow X_{j}$ such that%
\begin{equation*}
G_{j}^{\circ }\left( m_j,\widetilde m\right)=G_j(m_j)\text{, }\forall
\left(m_j,\widetilde m\right)\in M_j\times \mathcal{M}^\circ\text{.}
\end{equation*}%
The relation $G_{j}^{\circ }\sqsupset _{j}G_{j}$ holds by construction, and
the only difference between them is that the former has more messages, or
precisely, the $\mathcal{M}^{\circ }$ dimension in the message set of $%
G_{j}^{\circ }$. Since $\mathcal{M}^\circ$ is a product of admissible
standard Borel message spaces and $G_j^\circ$ has the same image as $G_j$, $%
G_j^\circ$ is admissible.

Let $\mathcal{G}_{j}^{\circ }$ denote the set of all such $G_{j}^{\circ }$,
and $\mathcal{G}^{\circ }:=\times _{k\in \mathcal{J}}\mathcal{G}_{k}^{\circ
} $.

\begin{lemma}
\label{lem:extension:reverse}Suppose $\left\vert \mathcal{J}\right\vert \geq
2$. Consider any $G\in \mathcal{G}^{\ast }$, any $j\in \mathcal{J}$ and any $%
G_{j}^{\prime }\in \mathcal{G}_{j}^{\sharp }$ such that $G_{j}\sim
G_{j}^{\prime }$. If allocation $z$ is induced by a $\mathcal{G}^{\ast }$%
-continuation equilibrium $\left[ G\text{, }\left( \gamma ,q,p\right) \right]
$, there exists a $\widehat{\mathcal{G}}^{\circ,j}\left( G_j^{\prime}\right)
$-continuation equilibrium
\begin{equation*}
\left[ \left( G_{j}^{\prime },\left( G_{k}^{\circ }\right) _{k\in \mathcal{J}%
\diagdown \left\{ j\right\} }\right) \text{, }\left( \gamma ^{\prime
},q^{\prime },p^{\prime }\right) \right]
\end{equation*}%
which induces $z$, where
\begin{equation*}
\widehat{\mathcal{G}}^{\circ,j}\left( G_j^{\prime}\right) =\left[ \left\{
G_j^{\prime}\right\} \cup \mathcal{G}_{j}^{\ast }\right] \times \prod_{k\in
\mathcal{J}\diagdown \left\{ j\right\} } \left[ \left\{ G_{k}^{\circ
}\right\} \cup \mathcal{G}_{k}^{\ast }\right] .
\end{equation*}
\end{lemma}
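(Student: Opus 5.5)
The plan is to build the new continuation equilibrium by direct relabeling, in the spirit of Lemma \ref{lem:extension_w2} and Lemma \ref{lem:extension_w}. The one new feature is that $G_j^{\prime}$ may lack some recommendations that $G_j$ offers. Since $|\mathcal{J}|\geq 2$, the missing recommendation for principal $j$ can be carried by the auxiliary coordinate $\mathcal{M}^\circ=M$ of some non-deviating principal's extended contract $G_k^\circ$. Because messages are public, every principal, including $j$, reads that coordinate.

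\emph{Step 1: agent's strategy.} Fix a designated $k_0\in\mathcal{J}\setminus\{j\}$. Since $G_j\sim G_j^{\prime}$ and $G_j^{\prime}\in\mathcal{G}_j^\sharp$, projection from $M_j^{\prime}$ onto $L_j=G_j(M_j)$ has a measurable right inverse $s_j^{\prime}:L_j\to M_j^{\prime}$. For $G_j\in\mathcal{G}_j^\ast$, the selector $s_j$ of the definition of $\mathcal{L}_j^\ast$ gives a measurable map $s_j:L_j\to M_j$ with $G_j(s_j(x_j))=x_j$. Define
\begin{equation*}
T(m):=\left(s_j^{\prime}(G_j(m_j)),\,\big((m_k,m)\big)_{k\neq j}\right).
\end{equation*}
The map $T$ is measurable and injective, because the auxiliary coordinate records $m$. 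Let $q^{\prime}(\theta)$ be the push-forward of $q(\theta)$ under $T$. Then every on-path new message profile corresponds to exactly one original profile $m$.

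\emph{Step 2: reference profile, continuation play and beliefs.} For a new message profile $m^{\prime}=(m_j^{\prime},(m_k,\widetilde m^{k})_{k\neq j})$, define a reference original profile $\rho(m^{\prime})\in M$ as follows.
\begin{itemize}
\item If the $k_0$-coordinate $\widetilde m:=\widetilde m^{k_0}$ is consistent, meaning $G_j(\widetilde m_j)=G_j^{\prime}(m_j^{\prime})$ and $\widetilde m_k=m_k$ for all $k\neq j$, set $\rho(m^{\prime})=\widetilde m$.
\item Otherwise set $\rho(m^{\prime})=\big(s_j(G_j^{\prime}(m_j^{\prime})),(m_k)_{k\neq j}\big)$.
\end{itemize}
The consistency set is defined by equalities of measurable maps into standard Borel spaces, so it is measurable and $\rho$ is measurable. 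Set
\begin{equation*}
\gamma_\ell^{\prime}(m^{\prime})=\gamma_\ell(\rho(m^{\prime})),\qquad p_\ell^{\prime}(m^{\prime})=p_\ell(\rho(m^{\prime})),\qquad \ell\in\mathcal{J}.
\end{equation*}
These actions are feasible. For principal $j$, $\gamma_j(\rho(m^{\prime}))\in F_j(G_j(\rho_j(m^{\prime})))=F_j(G_j^{\prime}(m_j^{\prime}))$. For $k\neq j$, $\rho_k(m^{\prime})=m_k$, so $\gamma_k$ is feasible for $G_k^\circ(m_k,\widetilde m^k)=G_k(m_k)$.

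\emph{Step 3: verification.} On path, $\rho(T(m))=m$, so the induced action profile after $T(m)$ is $(G(m),\gamma(m))$. Hence $z$ is reproduced. Because $T$ is injective, $p(m)$ is a version of the conditional belief given $T(m)$, so Bayesian consistency holds. Off-path beliefs are unrestricted.

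Principal optimality is inherited. At every $m^{\prime}$, the pair $(\gamma^{\prime}(m^{\prime}),p^{\prime}(m^{\prime}))$ equals $(\gamma(r),p(r))$ at the original profile $r=\rho(m^{\prime})$, with identical contractible actions. So the original optimality inequality at $r$ is exactly the required one.

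For the agent, every new message profile yields the action profile $(G(r),\gamma(r))$ for some $r\in M$. The set of attainable payoffs is therefore contained in the original one, and incentive and participation constraints carry over. Finally, $G_j^{\prime}\in\{G_j^{\prime}\}$ and $G_k^\circ\in\{G_k^\circ\}$, so the profile lies in $\widehat{\mathcal{G}}^{\circ,j}(G_j^{\prime})$. The equilibrium conditions are checked only at this contract profile, so membership in the space suffices.

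\emph{Expected obstacle.} The delicate step is Step 2: making the fallback rule simultaneously feasible, measurable and incentive-neutral. The feasibility is what forces the use of $s_j$ on $G_j$'s image rather than the recommendation in $m_j^{\prime}$. The incentive neutrality relies on the fact that any inconsistent auxiliary report collapses to a genuine original profile. I would handle this first by verifying that the consistency set is Borel and that both branches of $\rho$ land in $M$. After that, all remaining checks reduce to the original equilibrium conditions at $\rho(m^{\prime})$.
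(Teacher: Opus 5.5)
Your proposal is correct and is essentially the paper's proof: your $T$ is the paper's injective embedding $\Psi$ (with $s_j'$ in the role of $\rho_j'$ inside $\phi_j$), $q'$ is the same push-forward, and $(\gamma',p')$ are copied from an original message profile with the same contractible actions, so all equilibrium conditions are inherited. The only difference is the off-path rule. The paper uses $\Psi^{-1}$ on $\Psi(M)$ and the all-principal selector $\xi$ elsewhere, while you test consistency of the $k_0$ auxiliary coordinate and otherwise re-select only principal $j$'s message. Your variant is equally valid, and its case split is Borel by an elementary diagonal argument, without needing that $\Psi(M)$ is Borel and $\Psi^{-1}$ measurable.
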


\begin{proof}
Consider any $G\in \mathcal{G}^{\ast }$, where $G_{k}:M_{k}\longrightarrow
X_{k}$ for each $k$. Fix any $j\in \mathcal{J}$ and any $G_{j}^{\prime
}:M_{j}^{\prime }\longrightarrow X_{j}$ in $\mathcal{G}_{j}^{\sharp }$ such
that $G_{j}\sim G_{j}^{\prime }$. Since $G_j^\prime\in\mathcal{G}_j^\sharp$,
the projection $M_j^\prime\to G_j^\prime(M_j^\prime)$ admits a measurable
right inverse $\rho_j^\prime$. Define
\begin{equation*}
\phi_j(m_j):=\rho_j^\prime(G_j(m_j)).
\end{equation*}
Then $\phi_j:M_j\to M_j^\prime$ is measurable and satisfies $G_{j}\left(
m_{j}\right) =G_{j}^{\prime }\left[ \phi _{j}\left( m_{j}\right) \right] $
for every $m_{j}\in M_{j}$.

Suppose $z$ is induced by a $\mathcal{G}^{\ast }$-continuation equilibrium $%
\left[ G\text{, }\left( \gamma ,q,p\right) \right] $. We now construct a $%
\widehat{\mathcal{G}}^{\circ,j}\left( G_j^{\prime}\right) $-continuation
equilibrium $\left[ \left( G_{j}^{\prime },\left( G_{k}^{\circ }\right)
_{k\in \mathcal{J}\diagdown \left\{ j\right\} }\right) \text{, }\left(
\gamma ^{\prime },q^{\prime },p^{\prime }\right) \right] $ which induces $z$%
. We first construct an injective function%
\begin{gather*}
\Psi :\times _{k\in \mathcal{J}}M_{k}\longrightarrow M_{j}^{\prime }\times %
\left[ \times _{k\in \mathcal{J}\diagdown \left\{ j\right\} }\left(
M_{k}\times \mathcal{M}^{\circ }\right) \right] \text{,} \\
\Psi \left[ \left( x_{k},y_{k}\right) _{k\in \mathcal{J}}\right] =\left[
\phi _{j}\left( x_{j},y_{j}\right) ,\text{ }\left[ \left( x_{\widetilde{k}%
},y_{\widetilde{k}}\right) ,\left( x_{k},y_{k}\right) _{k\in \mathcal{J}}%
\right] _{_{\widetilde{k}\in \mathcal{J}\diagdown \left\{ j\right\} }}\right]
\text{.}
\end{gather*}%
Because $\left\vert \mathcal{J}\right\vert \geq 2$, the additional $\mathcal{%
M}^\circ$ coordinate sent to every $k\neq j$ records the original message
profile. Thus, $\Psi $ is a bijection between $M=\times _{k\in \mathcal{J}%
}M_{k}$ and $\Psi \left( M\right) $. Let $\Psi ^{-1}$ denote the inverse
function,
\begin{equation*}
\Psi ^{-1}:\Psi \left( M\right) \longrightarrow M\text{, and }\Psi ^{-1}%
\left[ \Psi \left( m\right) \right] =m\text{, }\forall m\in M\text{.}
\end{equation*}%
For each $\theta \in \Theta $, we have $q\left( \theta \right) \in \triangle
\left( M\right) $, and let $q^{\prime }\left( \theta \right) $ be the
push-forward distribution under $\Psi $:
\begin{equation*}
q^{\prime }\left( \theta \right) \left[ E\right] =q\left( \theta \right) %
\left[ \Psi ^{-1}\left( E\cap \Psi \left( M\right) \right) \right]
\end{equation*}
for every measurable set in the enlarged message space. Hence messages
outside $\Psi(M)$ are reached with zero probability.

For each $k$, let $\rho_k:G_k(M_k)\rightarrow M_k$ be the measurable
selector associated with the admissible contract $G_k$. Because $%
G_j^{\prime}\sim G_j$, every contractible-action profile generated by $%
\left(G_j^{\prime},G_{-j}^\circ\right)$ lies in $\times_{k\in\mathcal{J}%
}G_k(M_k)$. Define the measurable map
\begin{equation*}
\xi:\times_{k\in\mathcal{J}}G_k(M_k)\longrightarrow M,\qquad \xi[(x_k)_{k\in%
\mathcal{J}}]:=(\rho_k(x_k))_{k\in\mathcal{J}}.
\end{equation*}
Furthermore, for each principal $\ell \in \mathcal{J}$, define $\gamma
_{\ell }^{\prime }$ and $p_{\ell }^{\prime }$ as follows. For any $\left(
m_{k}\right) _{k\in \mathcal{J}}\in M_{j}^{\prime }\times \left[ \times
_{k\in \mathcal{J}\diagdown \left\{ j\right\} }\left( M_{k}\times \mathcal{M}%
^{\circ }\right) \right] $, {\small
\begin{equation*}
\gamma _{\ell }^{\prime }\left[ \left( m_{k}\right) _{k\in \mathcal{J}}%
\right] =\left\{
\begin{tabular}{ll}
$\gamma _{\ell }\left[ \left( x_{k},y_{k}\right) _{k\in \mathcal{J}}\right]
\text{,}$ &
\begin{tabular}{l}
if $\exists \left( x_{k},y_{k}\right) _{k\in \mathcal{J}}\in M$ such \\
that $\left( m_{k}\right) _{k\in \mathcal{J}}=\Psi \left[ \left(
x_{k},y_{k}\right) _{k\in \mathcal{J}}\right] $,%
\end{tabular}
\\
&  \\
$\gamma _{\ell }\left[ \xi \left( G_{j}^{\prime }\left( m_{j}\right) ,\left[
G_{k}^{\circ }\left( m_{k}\right) \right] _{k\in \mathcal{J}\diagdown
\left\{ j\right\} }\right) \right] \text{,}$ & otherwise.%
\end{tabular}%
\right.
\end{equation*}
given his belief \begingroup%
\begin{equation*}
p_{\ell }^{\prime }\left[ \left( m_{k}\right) _{k\in \mathcal{J}}\right]
=\left\{
\begin{tabular}{ll}
$p_{\ell }\left[ \left( x_{k},y_{k}\right) _{k\in \mathcal{J}}\right] \text{,%
}$ &
\begin{tabular}{l}
if $\exists \left( x_{k},y_{k}\right) _{k\in \mathcal{J}}\in M$ such \\
that $\left( m_{k}\right) _{k\in \mathcal{J}}=\Psi \left[ \left(
x_{k},y_{k}\right) _{k\in \mathcal{J}}\right] $,%
\end{tabular}
\\
&  \\
$p_{\ell }\left[ \xi \left( G_{j}^{\prime }\left( m_{j}\right) ,\left[
G_{k}^{\circ }\left( m_{k}\right) \right] _{k\in \mathcal{J}\diagdown
\left\{ j\right\} }\right) \right] \text{,}$ & otherwise.%
\end{tabular}%
\right. .
\end{equation*}
\endgroup
}

The map $\Psi $ embeds each equilibrium message profile from the original
profile into the enlarged message space. On the image of $\Psi $, actions
and beliefs coincide with the original equilibrium; off that image, $\xi $
selects a feasible action profile and the associated beliefs make the
specified continuation actions optimal. The enlarged message space also
creates no profitable deviation for the agent. Every message in $\Psi(M)$
induces exactly the payoff of the corresponding original message profile $%
\Psi^{-1}(\cdot)$. Every message outside $\Psi(M)$ induces the same
contractible-action profile and continuation-action profile as the original
message profile selected by $\xi$. Thus every payoff attainable by an agent
deviation in the enlarged game was already attainable by a message deviation
in the original $\mathcal{G}^{\ast}$-continuation equilibrium. The agent's
incentive and participation constraints therefore remain satisfied. Hence $%
\left[ \left( G_{j}^{\prime },\left( G_{k}^{\circ }\right) _{k\in \mathcal{J}%
\diagdown \left\{ j\right\} }\right) \text{, }\left( \gamma ^{\prime
},q^{\prime },p^{\prime }\right) \right] $ is a $\widehat{\mathcal{G}}%
^{\circ ,j}\left( G_{j}^{\prime }\right) $-continuation equilibrium that is
allocation-equivalent to $\left[ G\text{, }\left( \gamma ,q,p\right) \right]
$.
\end{proof}

\bigskip

The intuition of Lemma \ref{lem:extension:reverse} is clear. When the proof
replicates the original equilibrium, the apparent problem is that $%
G_{j}^{\prime }$ may not provide enough messages to reproduce the messages
sent to $j$ under $G_{j}$. The construction uses the enlarged message sets
in the $G_{k}^{\circ }$ contracts of the other principals to encode the
missing information. This scheme works because contracting is public: every
principal observes the full profile of messages, so the messages originally
sent to $j$ can be represented through the additional $\mathcal{M}^{\circ }$
coordinates sent to the other principals.

Using Lemma \ref{lem:extension:reverse}, we can establish the following
lemma.

\begin{lemma}
\label{lem:robust:off-equilibrium}Suppose $\left\vert \mathcal{J}\right\vert
\geq 2$. For any $\mathcal{G}^{\ast }$-robust equilibrium that induces $z$,
there exists a $\widehat{\mathcal{G}}^{\circ }$-robust equilibrium with
contract profile $G^{\circ }=\left( G_{k}^{\circ }\right) _{k\in \mathcal{J}%
} $ that induces $z$, where $\widehat{\mathcal{G}}^{\circ }=\times _{k\in
\mathcal{J}}\left[ \left\{ G_{k}^{\circ }\right\} \cup \mathcal{G}%
_{k}^{\sharp }\right] $.
\end{lemma}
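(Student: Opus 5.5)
Fix a $\mathcal{G}^{\ast}$-robust equilibrium $[G,(\gamma,q,p)]$ with $G_k:M_k\to X_k$ that induces $z$. The plan is to first build the candidate profile on $G^{\circ}$ and then verify robustness against every unilateral deviation in $\{G_j^{\circ}\}\cup\mathcal{G}_j^{\sharp}$.

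\emph{Candidate profile.} Since $G_k^{\circ}\sqsupset_k G_k$ for each $k$, apply Lemma \ref{lem:extension_w} once per principal, each time relabeling one coordinate. This yields a $\widehat{\mathcal{G}}^{\circ}$-continuation equilibrium $[G^{\circ},(\gamma^{\circ},q^{\circ},p^{\circ})]$ that induces $z$. In it, the agent sends $(m_k,m)$ to each $k$, and off-path behavior is copied through the projection $(m_k,\widetilde m)\mapsto m_k$. Because the allocation is unchanged, $V_j$ coincides with the original value for every $j$.

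\emph{Deviations to the extended canonical contracts.} Fix $j$ and a deviation $G_j'\in\mathcal{G}_j^{\sharp}$. Let $G_j''\in\mathcal{G}_j^{\ast}$ be the menu with recommendations satisfying $G_j''\sim G_j'$, that is, the contract with $M_j''=M_j^{G_j'}$. Since the original profile is $\mathcal{G}^{\ast}$-robust, there is a $\mathcal{G}^{\ast}$-continuation equilibrium $[(G_j'',G_{-j}),(\gamma'',q'',p'')]$ whose allocation $z''$ gives principal $j$ at most $V_j[G,(\gamma,q,p)]$. Apply Lemma \ref{lem:extension:reverse} to the profile $(G_j'',G_{-j})\in\mathcal{G}^{\ast}$ with $G_j'\sim G_j''$. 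This gives a continuation equilibrium under $(G_j',G_{-j}^{\circ})$ that induces $z''$, so $j$'s payoff is unchanged and bounded above by the equilibrium value.

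Two points need checking here. First, Lemma \ref{lem:extension:reverse} gives the equilibrium relative to $\widehat{\mathcal{G}}^{\circ,j}(G_j')$, whereas we need it relative to $\widehat{\mathcal{G}}^{\circ}$. This should be harmless, because continuation-equilibrium conditions depend only on the fixed contract profile and not on the ambient space. Second, $\mathcal{M}^{\circ}$ must be $M=\times_k M_k$ of the original profile, while Lemma \ref{lem:extension:reverse} applied to $(G_j'',G_{-j})$ needs the auxiliary coordinate to encode message profiles from $M_j''\times M_{-j}$. I expect this mismatch to be the main obstacle. My first attempt would be to encode messages $m_j''=(x_j,y_j)$ into the $M_j$ component of $\mathcal{M}^{\circ}$. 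This requires an injection of $M_j''$ into $M_j$, or into $\mathcal{M}^{\circ}$ more broadly, and it uses that $|\mathcal{J}|\ge2$ so that some $k\ne j$ carries the coordinate. If that injection is unavailable, I would instead reuse the on-path components and define off-path continuation via $\xi$ exactly as in the proof of Lemma \ref{lem:extension:reverse}, only ensuring that realized action profiles and beliefs replicate $z''$.

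\emph{Deviation to $G_j^{\circ}$ itself.} This is the trivial case, since it is no deviation. For any $G_j'\in\mathcal{G}_j^{\sharp}$, the required continuation equilibrium exists by the previous step. Combining the cases shows that $[G^{\circ},(\gamma^{\circ},q^{\circ},p^{\circ})]$ is $\widehat{\mathcal{G}}^{\circ}$-robust and induces $z$.
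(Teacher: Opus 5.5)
Your outline matches the paper's proof step for step. It applies Lemma \ref{lem:extension_w} principal by principal to get an allocation-equivalent continuation equilibrium on $G^{\circ}$. Then, for $G_j'\in\mathcal{G}_j^{\sharp}$, it takes the same-image menu $G_j''\in\mathcal{G}_j^{\ast}$, uses $\mathcal{G}^{\ast}$-robustness against $G_j''$, and applies Lemma \ref{lem:extension:reverse} to $(G_j'',G_{-j})$. Your first checkpoint is harmless; the paper just notes $\widehat{\mathcal{G}}^{\circ,j}(G_j')\subseteq\widehat{\mathcal{G}}^{\circ}$.

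Your second checkpoint, however, is a genuine gap, and you leave it open. The map $\Psi$ in Lemma \ref{lem:extension:reverse} writes the message profile of whatever profile it is applied to into the auxiliary coordinate. Applying it to $(G_j'',G_{-j})$ therefore requires $M_j''\times M_{-j}\subseteq\mathcal{M}^{\circ}=M_j\times M_{-j}$, i.e.\ $G_j''(M_j'')\subseteq G_j(M_j)$. The paper invokes the lemma at exactly this point without addressing this, so the gap is shared. Neither of your fallbacks repairs it:
\begin{itemize}
\item an injection of $M_j''$ into $\mathcal{M}^{\circ}$ need not exist;
\item the $\xi$-device only assigns play to messages off the image of the encoding. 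It cannot create the on-path message profiles that $(G_j',G_{-j}^{\circ})$ would need to reproduce the separation used in $z''$.
\end{itemize}

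With $\mathcal{M}^{\circ}=M$ the statement is in fact false. Take the following environment:
\begin{itemize}
\item principals $j,k$; two equally likely types $\theta^1,\theta^2$; $U\equiv 0$;
\item $X_j=\{x^0,x'\}$ with $F_j(x^0)=\{y^0\}$ and $F_j(x')=\{y^1,y^2\}$; $X_k$ has two points with singleton $F_k$;
\item payoffs depend only on $j$'s pair: at $x^0$, $u=0$ and $v_j=2$; at $(x',y^i)$, $u=v_j=1$ at $\theta^i$ and $-1$ at the other type; $v_k\equiv 0$.
\end{itemize}
The profile in which $j$ offers the menu $\{x^0\}$ and $k$ a singleton menu is $\mathcal{G}^{\ast}$-robust. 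Every $\mathcal{G}_j^{\ast}$ contract containing $x'$ offers both recommendations and admits a separating continuation equilibrium giving $j$ payoff $1<2$. Deviations by $k$ change nothing.

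But $M$, and hence every message space in $G^{\circ}$, is a singleton. Suppose $j$ deviates to the $\mathcal{G}_j^{\sharp}$ contract with message space $\{(x',y^1)\}$. There is then a single message profile, so principal $j$ must choose $y^1$ or $y^2$ under the prior, and either choice leaves one type with $-1<U$. No continuation equilibrium follows this deviation, so $G^{\circ}$ is not $\widehat{\mathcal{G}}^{\circ}$-robust.

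The repair is to let the auxiliary coordinate record any $\mathcal{G}^{\ast}$ message profile, for instance $\mathcal{M}^{\circ}:=\times_{k\in\mathcal{J}}(X_k\times Y_k)$. This space is standard Borel and contains $M_j''\times M_{-j}$ for every $G_j''\in\mathcal{G}_j^{\ast}$. Your first fallback is then just this inclusion, Lemma \ref{lem:extension:reverse} applies verbatim to $(G_j'',G_{-j})$, and the rest of your argument goes through. The same change is needed in Proposition \ref{prop:robust:equilibrium}, which also uses $M_k\times M$.
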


\begin{proof}
Suppose $z$ is induced by a $\mathcal{G}^{\ast }$-robust equilibrium $\left[
G\text{, }\left( \gamma ,q,p\right) \right] $. Let $G^{\circ }=\left(
G_{k}^{\circ }\right) _{k\in \mathcal{J}}$ and $G_{-j}^{\circ }=\left(
G_{k}^{\circ }\right) _{k\in \mathcal{J}\diagdown \left\{ j\right\} }$.
Since $G_{j}^{\circ }\sqsupset _{j}G_{j}$ for every $j\in \mathcal{J}$, we
can apply Lemma \ref{lem:extension_w} inductively on $\mathcal{J}$ to show
the existence of a $\widehat{\mathcal{G}}^{\circ }$-continuation equilibrium
$\left[ G^{\circ }\text{, }\left( \gamma ^{\prime },q^{\prime },p^{\prime
}\right) \right] $ which induces $z$. As a result,%
\begin{equation}
V_{j}\left[ G\text{, }\left( \gamma ,q,p\right) \right] =V_{j}\left[
G^{\circ }\text{, }\left( \gamma ^{\prime },q^{\prime },p^{\prime }\right) %
\right] \text{, }\forall j\in \mathcal{J}\text{.}  \label{ttk1}
\end{equation}

To show that $\left[ G^{\circ }\text{, }\left( \gamma ^{\prime },q^{\prime
},p^{\prime }\right) \right] $ is a $\widehat{\mathcal{G}}^{\circ }$-robust
equilibrium, consider any $j\in \mathcal{J}$ and any deviating mechanism $%
\left\langle G_{j}^{\prime }:M_{j}^{\prime }\longrightarrow
X_{j}\right\rangle \in \mathcal{G}_{j}^{\sharp }$. We can find $%
G_{j}^{\prime \prime }\in \mathcal{G}_{j}^{\ast }$ such that $G_{j}^{\prime
\prime }\sqsupset _{j}G_{j}^{\prime }$, which further implies $G_{j}^{\prime
\prime }\sim G_{j}^{\prime }$: let $G_{j}^{\prime \prime }$ offer all
feasible recommendations for every contractible action in $G_{j}^{\prime
}(M_{j}^{\prime })$. Because $G_j^\prime\in\mathcal{G}_j^\sharp$, the
projection $M_j^\prime\to G_j^\prime(M_j^\prime)$ admits a measurable right
inverse $\rho_j^\prime$. Relabel $G_j^{\prime\prime}$ into $G_j^\prime$ by
mapping each message already present in $G_j^\prime$ to itself and mapping
each additional message $(x_j,y_j)$ to $\rho_j^\prime(x_j)$. This measurable
relabeling preserves the contractible action, so $G_j^{\prime\prime}\sim
G_j^\prime$. Since $\left[ G\text{, }\left( \gamma ,q,p\right) \right] $ is
a $\mathcal{G}^{\ast }$-robust equilibrium and $G_{j}^{\prime \prime }\in
\mathcal{G}_{j}^{\ast }$, there exists a $\mathcal{G}^{\ast }$-continuation
equilibrium $\left[ \left( G_{j}^{\prime \prime },G_{-j}\right) \text{, }%
\left( \gamma ^{\prime \prime },q^{\prime \prime },p^{\prime \prime }\right) %
\right] $ such that
\begin{equation}
V_{j}\left[ G\text{, }\left( \gamma ,q,p\right) \right] \geq V_{j}\left[
\left( G_{j}^{\prime \prime },G_{-j}\right) \text{, }\left( \gamma ^{\prime
\prime },q^{\prime \prime },p^{\prime \prime }\right) \right] \text{.}
\label{ttk2}
\end{equation}%
By Lemma \ref{lem:extension:reverse} and $G_{j}^{\prime \prime }\sim
G_{j}^{\prime }$, there is a $\widehat{\mathcal{G}}^{\circ ,j}\left(
G_{j}^{\prime }\right) $-continuation equilibrium with contract profile $%
\left( G_{j}^{\prime },G_{-j}^{\circ }\right) $. This contract space is
contained in the present $\widehat{\mathcal{G}}^{\circ }$, so the same
profile is also a $\widehat{\mathcal{G}}^{\circ }$-continuation equilibrium.
Denote it by $\left[ \left( G_{j}^{\prime },G_{-j}^{\circ }\right) \text{, }%
\left( \widehat{\gamma },\widehat{q},\widehat{p}\right) \right] $. It
induces the same allocation as the equilibrium in (\ref{ttk2}), so%
\begin{equation}
V_{j}\left[ \left( G_{j}^{\prime \prime },G_{-j}\right) \text{, }\left(
\gamma ^{\prime \prime },q^{\prime \prime },p^{\prime \prime }\right) \right]
=V_{j}\left[ \left( G_{j}^{\prime },G_{-j}^{\circ }\right) \text{, }\left(
\widehat{\gamma },\widehat{q},\widehat{p}\right) \right] \text{.}
\label{ttk3}
\end{equation}%
(\ref{ttk1}), (\ref{ttk2}) and (\ref{ttk3}) imply%
\begin{equation}
V_{j}\left[ G^{\circ }\text{, }\left( \gamma ^{\prime },q^{\prime
},p^{\prime }\right) \right] \geq V_{j}\left[ \left( G_{j}^{\prime
},G_{-j}^{\circ }\right) \text{, }\left( \widehat{\gamma },\widehat{q},%
\widehat{p}\right) \right] \text{.}  \label{ttk4}
\end{equation}%
Therefore, $\left[ G^{\circ }\text{, }\left( \gamma ^{\prime },q^{\prime
},p^{\prime }\right) \right] $ is a $\widehat{\mathcal{G}}^{\circ }$-robust
equilibrium.
\end{proof}

\bigskip

Lemmas \ref{lem:max:robust:off-equilibrium-pah} and \ref%
{lem:robust:off-equilibrium} provide a concise proof of Proposition \ref%
{prop:robust:equilibrium}

\bigskip

\noindent \textbf{Proof of Proposition \ref{prop:robust:equilibrium} }%
Consider any $\mathcal{G}^{\ast }$-robust equilibrium $\left[ G\text{, }%
\left( \gamma ,q,p\right) \right] $ that induces $z$. By Lemma \ref%
{lem:robust:off-equilibrium}, there is a $\widehat{\mathcal{G}}^{\circ }$%
-robust equilibrium $\left[ G^{\circ }\text{, }\left( \gamma ^{\prime
},q^{\prime },p^{\prime }\right) \right] $ that induces $z$, where
\begin{equation*}
G^{\circ }=\left( G_{k}^{\circ }\right) _{k\in \mathcal{J}}\quad \text{and}%
\quad \widehat{\mathcal{G}}^{\circ }=\times _{k\in \mathcal{J}}\left[
\left\{ G_{k}^{\circ }\right\} \cup \mathcal{G}_{k}^{\sharp }\right] .
\end{equation*}%
We now show that this equilibrium is robust. Consider any contract space $%
\mathcal{G}=\times _{k\in \mathcal{J}}\mathcal{G}_{k}$ with $G^{\circ }\in
\mathcal{G}$. Let $\overline{\mathcal{G}}=\times _{k\in \mathcal{J}}\left[
\mathcal{G}_{k}\cup \widehat{\mathcal{G}}_{k}^{\circ }\right] $. Lemma \ref%
{lem:max:robust:off-equilibrium-pah} implies that $\left[ G^{\circ }\text{, }%
\left( \gamma ^{\prime },q^{\prime },p^{\prime }\right) \right] $ is a $%
\overline{\mathcal{G}}$-robust equilibrium. Since $\mathcal{G}\subseteq
\overline{\mathcal{G}}$, it is also a $\mathcal{G}$-robust equilibrium. $%
\blacksquare $

\bigskip

We now prove Theorem \ref{thm:minimal:weak-robust} below by recapping the
logic presented above.

\bigskip

\noindent \textbf{Proof of Theorem \ref{thm:minimal:weak-robust}}. Because
of Proposition \ref{thm:lower:weak-robust}, we only need to show that (\ref%
{G**_minimal}) holds given any $\left\langle \mu \text{, }u\text{, }\left(
v_{k}\right) _{k\in \mathcal{J}}\right\rangle $ and any allocation $z$. For
the \textquotedblleft $\Longleftarrow $\textquotedblright\ direction,
suppose $z$ is induced by an unrestricted robust equilibrium. Since that
profile is robust against every admissible contract space containing its
contract profile, Proposition \ref{prop:canonical:upper-bound} gives a $%
\mathcal{G}^{\ast}$-robust equilibrium inducing the same allocation. For the
\textquotedblleft $\Longrightarrow $\textquotedblright\ direction,
Proposition \ref{prop:robust:equilibrium} embeds the given $\mathcal{G}%
^{\ast}$-robust equilibrium in enlarged message spaces and restores
robustness to arbitrary deviations. Hence (\ref{G**_minimal}) holds,
completing the proof of Theorem \ref{thm:minimal:weak-robust}. $\blacksquare
$

\subsection*{Proof of Proposition \protect\ref{proposition_single_optimal}}

By Proposition \ref{proposition_single}, the contracting value within the
atomic equilibrium class is attainable with a single contractible-action
offer. Fix such an offer $x$. After observing the principal's
non-contractible action $y$, type $\theta$ stays if and only if $%
u(x,y,\theta)\geq 0$. Hence the principal's payoff from the pair $(x,y)$ is
\begin{equation*}
\int_{\Theta}\mathbf{1}_{\{u(x,y,\theta)\geq 0\}}v(x,y,\theta)\,\mu[d\theta].
\end{equation*}
The pair is relevant only when some type is willing to stay, that is, when $%
(x,y)\in C$; for a fixed $x$, this means $y\in C(x)$, and the feasible
contractible offers are exactly $x\in B$. Therefore the best payoff
generated by a given single offer $x$ is the inner maximization in (\ref%
{optimal_contract}), and optimizing over all feasible offers gives the outer
maximization. Thus a single offer $x^\ast$ is optimal if and only if it
solves (\ref{optimal_contract}). $\blacksquare $

\subsection*{Calculation for the common-agency labor example}

The zero offer cannot yield positive profit, so focus on $x_{j}>0$. For each
firm $j$, the worker remains if and only if
\begin{equation*}
u_{j}\geq 0 \iff \theta\geq t_{j}:=\frac{y_{j}^{2}}{x_{j}}, \qquad j=1,2.
\end{equation*}
Let
\begin{equation*}
a_{j}:=\max\left\{3,\frac{y_{j}^{2}}{x_{j}}\right\}.
\end{equation*}
Then the worker accepts firm $j$ in the continuation stage if and only if $%
\theta\in[a_{j},4]$. Fix $x_{-j}$ and write $A_{j}=1+\beta x_{-j}$. For $%
t\in \lbrack 3,4)$, expected profit is
\begin{equation*}
\Pi _{j}(x_{j},t\mid x_{-j})=(4-t)\left( A_{j}\sqrt{x_{j}t}\cdot \frac{t+4}{2%
}-x_{j}^{2}\right) =(4-t)\left( A_{j}K(t)\sqrt{x_{j}}-x_{j}^{2}\right) ,
\end{equation*}%
where $K(t):=\sqrt{t}(t+4)/2$. For $t\leq 3$, all types stay and the best
no-exit cutoff is $t=3$. For each $t\in \lbrack 3,4)$, the optimal
contractible action solves
\begin{equation*}
x_{j}(t\mid x_{-j})=\left( \frac{A_{j}K(t)}{4}\right) ^{2/3}=\left( A_{j}%
\frac{\sqrt{t}(t+4)}{8}\right) ^{2/3}.
\end{equation*}%
Substitution shows that the $t$-dependent part of firm $j$'s value is
proportional to
\begin{equation*}
(4-t)t^{2/3}(t+4)^{4/3}.
\end{equation*}%
Its log derivative is
\begin{equation*}
\frac{32+4t-9t^{2}}{3t(t+4)(4-t)}<0\qquad \text{for all }t\in \lbrack 3,4),
\end{equation*}%
so the optimal cutoff is $t^{\ast }=3$. Hence firm $j$'s stage-1 best
response is
\begin{equation*}
BR_{j}^{x}(x_{-j})=\left[ \left( 1+\beta x_{-j}\right) \frac{7\sqrt{3}}{8}%
\right] ^{2/3}.
\end{equation*}

\subsection*{Formal common-agency assumptions for Proposition \protect\ref%
{prop_pooling_common}}

The following assumptions spell out the bundled common-agency reduction
condition used in Section \ref{sec:appII}. We impose a separability
condition on the agent's utility.

\begin{assum}
\label{Ass_multi_separability} $u(x,y,\theta)$ is separable across
principals:
\begin{equation*}
u(x,y,\theta)=u_{1}(x_{1},y_{1},\theta)+u_{2}(x_{2},y_{2},\theta).
\end{equation*}
\end{assum}

Separability makes a unilateral deviation comparable to the single-principal
problem: the agent's continuation utility from principal $j$ can be
evaluated holding the other relationship fixed.

We also restrict cross-principal payoff dependence to the contractible
actions of other principals.

\begin{assum}
\label{Ass_multi_payoff_dependence} For each principal $j$, $%
v_j(x,y,\theta)=\widetilde v_j(x_j,y_j,x_{-j},\theta)$. If the agent exits
from principal $j$, principal $j$'s payoff is zero.
\end{assum}

Thus a unilateral deviation by principal $j$ does not affect principal $j$'s
payoff through the other principal's non-contractible action; the other
relationship enters only through the fixed contractible offer $x_{-j}$.

We maintain the same monotonicity conditions as in the single-principal case.

\begin{assum}
\label{Ass_multi_theta} For each $j=1,2$, $u_{j}(x_{j},y_{j},\theta)$ is
strictly increasing in $\theta$, and $v_{j}(x,y,\theta)$ is non-decreasing
in $\theta$.
\end{assum}

Next, each principal prefers a higher non-contractible action, whereas the
agent prefers a lower one.

\begin{assum}
\label{Ass_multi_opp_monotone}\footnote{%
If instead $\frac{\partial v_{j}}{\partial y_{j}}<0$ and $\frac{\partial
u_{j}}{\partial y_{j}}>0$, one can redefine $y_{j}^{\prime }=-y_{j}$ and
apply the same analysis.} For each $j=1,2$,
\begin{equation*}
\frac{\partial v_{j}}{\partial y_{j}}>0\qquad \text{and}\qquad \frac{%
\partial u_{j}}{\partial y_{j}}<0.
\end{equation*}
\end{assum}

We also assume differentiability and impose single crossing separately for
each bilateral component.

\begin{assum}
\label{Ass_multi_single_crossing} For each $j=1,2$, $u_{j}(x_{j},y_{j},%
\theta)$ is differentiable in $(x_{j},y_{j})$ and satisfies the
single-crossing property in Assumption \ref{assu_single_crossing}, with $%
u_{j},x_{j},$ and $y_{j}$ replacing $u,x,$ and $y$.
\end{assum}

This is the bilateral version of the single-crossing condition used in
Section \ref{sec:appI}.

\begin{assum}[Bilateral local regularity]
\label{Ass_multi_local_rent_extraction} For each principal $j$ and each
fixed contractible action profile of the other principals, the bilateral
continuation problem with primitives $u_j(x_j,y_j,\theta)$ and $\widetilde
v_j(x_j,y_j,x_{-j},\theta)$ satisfies the local uniform-continuity and
integrability hypotheses used in Lemma \ref{lem:rent_extraction_common_slack}
whenever that lemma is applied to an on-path participating action-pair cell.
\end{assum}

\begin{assum}[Bilateral participating action-pair atomicity]
\label{Ass_multi_no_thin_randomization} For each principal $j$ and each
fixed contractible action profile of the other principals, the bilateral
continuation problem with primitives $u_j(x_j,y_j,\theta)$ and $\widetilde
v_j(x_j,y_j,x_{-j},\theta)$ is restricted to atomic
participating-action-pair equilibria for the aggregate measure of
participating recommendations.
\end{assum}

\begin{assum}[Principal-separable continuation]
\label{Ass_multi_principal_separable_continuation} Following any unilateral
deviation by principal $j$, the no-safe-deviation test admits a continuation
equilibrium in which the payoff contribution from the non-deviating
relationship is independent of principal $j$'s message. Equivalently,
holding the non-deviating contractible offer $x_{-j}^{\ast}$ fixed, the
agent's incentive comparisons across principal $j$'s messages are governed
by the bilateral utility $u_j(x_j,y_j,\theta)$, and principal $j$'s payoff
is $\widetilde v_j(x_j,y_j,x_{-j}^{\ast},\theta)$.
\end{assum}

Assumption \ref{Ass_multi_principal_separable_continuation} is automatic
when non-deviating principals' continuation actions are fixed at the
simple-offer continuation values, or more generally when those actions do
not vary with the deviating principal's message.

\subsection*{Proof of Proposition \protect\ref{prop_pooling_common}}

Fix an equilibrium $(x^{\ast },y^{\ast })$ of the single contractible-action
offer game and a principal $j$. Consider any deviation by $j$ to an
arbitrary mechanism. To rule out a safe deviation, it is enough to exhibit
one continuation equilibrium following the deviation that does not improve
principal $j$'s payoff. By the extended taxation principle, the deviation
can be represented, for continuation-payoff purposes, as a menu with
recommendations. Let $L_j$ be the induced set of contractible actions, and
take a principal-separable continuation equilibrium, as allowed by
Assumption \ref{Ass_multi_principal_separable_continuation}, with
continuation strategy $q:\Theta\rightarrow\Delta(L_j\times\mathbb{R}_{+})$
over chosen contractible actions and recommended non-contractible actions.

Holding the non-deviating offer $x_{-j}^{\ast}$ fixed, Assumption \ref%
{Ass_multi_separability} makes the agent's incentives and exit decision with
respect to principal $j$ depend only on $u_j(x_j,y_j,\theta)$ in that
continuation equilibrium, while Assumption \ref{Ass_multi_payoff_dependence}
makes principal $j$'s payoff depend on the other relationship only through $%
x_{-j}^{\ast}$. The deviating continuation problem is therefore the
single-principal problem with primitives $u_j(x_j,y_j,\theta)$ and $%
\widetilde v_j(x_j,y_j,x_{-j}^{\ast},\theta)$; Assumptions \ref%
{Ass_multi_opp_monotone} and \ref{Ass_multi_single_crossing} supply the
monotonicity and single-crossing structure, while Assumptions \ref%
{Ass_multi_local_rent_extraction} and \ref{Ass_multi_no_thin_randomization}
supply the local regularity and atomic participating-action-pair conditions
used in Proposition \ref{proposition_single}.

By Proposition \ref{proposition_single}, principal $j$'s payoff from this
deviating menu is no larger than the payoff from some single
contractible-action offer $\bar{x}_{j}\in L_{j}$ against the same $%
x_{-j}^{\ast }$. Thus a profitable complex safe deviation would imply a
profitable single-offer deviation, contradicting equilibrium of the
single-offer game. Since $j$ was arbitrary, the allocation induced by $%
(x^{\ast },y^{\ast })$ is robust. $\blacksquare $

\clearpage
\setcounter{page}{1} \renewcommand{\thepage}{PC-\arabic{page}}

\begin{center}
{\Large \textbf{Private Contracting with Imperfect Commitment}}\\[0.75em]
{\large \textbf{Contracting with Imperfect Commitment: Minimal Canonical
Contracts}}\\[0.5em]
Seungjin Han and Siyang Xiong
\end{center}

\noindent In private contracting, principals privately offer contracts to
the agent, and the agent sends messages privately, one to each principal.
Principal $j$ observes only his own mechanism and the message sent to him;
he does not observe the other principals' mechanisms or the messages sent to
them. A principal's off-path belief about the other principals' mechanisms,
messages, and continuation play therefore depends on the contract space
allowed in the model. This section gives the corresponding
private-contracting solution concepts and identifies the canonical contract
space.

The private-contracting analysis uses the current notation from Sections \ref%
{sec:model} and \ref{sec:canonical}. For each principal, write
\begin{equation*}
Z_j:=\{(x_j,y_j)\in X_j\times Y_j:y_j\in F_j(x_j)\},\qquad Z:=\times_{j\in%
\mathcal{J}}Z_j .
\end{equation*}
The finite multiple-principal minimality result below uses $|Z|$ for the
cardinality of the feasible action-pair profiles.

Throughout this section, message spaces, contract spaces, continuation
strategy spaces, and belief spaces are restricted to the admissible
standard-Borel objects described in Section \ref{sec:model}. In particular,
the measurable selectors and regular conditional probabilities used in the
constructions below are assumed to exist whenever the corresponding
construction is invoked. These requirements are automatic in the finite
environments used for the minimality results.

\section*{Solution Concepts}

Fix a contract space $\mathcal{G}=\times_{j\in\mathcal{J}}\mathcal{G}_j$. A
private-contracting profile is
\begin{equation*}
\left[ G=(G_j:M_j\to X_j)_{j\in\mathcal{J}},(\gamma,q,p)\right],
\end{equation*}
where $G\in\mathcal{G}$, $M=\times_{j\in\mathcal{J}}M_j$, the agent's
message strategy is $q:\Theta\to\Delta(M)$, principal $j$'s continuation
strategy is $\gamma_j:M_j\to Y_j$ with
\begin{equation*}
\gamma_j(m_j)\in F_j(G_j(m_j))\qquad\text{for every }m_j\in M_j,
\end{equation*}
and principal $j$'s on-path belief is
\begin{equation*}
p_j:M_j\to\Delta(\Theta\times M_{-j}).
\end{equation*}
Let $\mathcal{Y}_j(G_j)$ denote the set of feasible continuation strategies
for principal $j$ after contract $G_j$.

\begin{define}[Bayesian consistency]
Beliefs are Bayesian consistent if, for every principal $j$, every
measurable $E_1\subseteq\Theta$, $E_2\subseteq M_j$, and $E_3\subseteq
M_{-j} $,
\begin{equation*}
\int_{E_1}q(\theta)[E_2\times E_3]\,\mu(d\theta)
=\int_{\Theta}\int_{E_2}p_j(m_j)[E_1\times E_3]\,
q_{M_j}(\theta)(dm_j)\,\mu(d\theta),
\end{equation*}
where $q_{M_j}(\theta)$ is the marginal of $q(\theta)$ on $M_j$.
\end{define}

The distribution over messages to principal $j$ induced by $(\mu,q)$ is
denoted by $\mu_{M_j}^{q}$: for every measurable $E\subseteq M_j$,
\begin{equation*}
\mu_{M_j}^{q}(E):=\int_{\Theta}q_{M_j}(\theta)[E]\,\mu(d\theta).
\end{equation*}
The on-path messages of principal $j$ are the elements of $\func{supp}%
\mu_{M_j}^{q}$.

For off-path messages, principal $j$ may hold a belief over the type, the
other principals' messages, the other principals' contracts, and the other
principals' continuation strategies. Given $\mathcal{G}$, define
\begin{equation*}
\Delta_{-j}^{\mathcal{G}}:=
\Delta\left(\left\{(\theta,(m_k,G_k,\gamma_k)_{k\neq j}):
\begin{array}{l}
\theta\in\Theta,\ G_k:M_k\to X_k\text{ belongs to }\mathcal{G}_k, \\
m_k\in M_k,\ \gamma_k\in\mathcal{Y}_k(G_k),\ k\neq j%
\end{array}%
\right\}\right).
\end{equation*}

\begin{define}[principal $j$'s legitimate belief]
A legitimate belief for principal $j$ is a map $\eta_j:M_j\to\Delta_{-j}^{%
\mathcal{G}}$ such that, whenever $m_j\in\func{supp}\mu_{M_j}^{q}$, $%
\eta_j(m_j)$ puts probability one on the actual $(G_{-j},\gamma_{-j})$ and
has marginal $p_j(m_j)$ on $\Theta\times M_{-j}$.
\end{define}

\begin{define}[agent's incentive compatibility]
The agent's incentive and participation condition is
\begin{multline*}
\int_M u\left((G_k(m_k))_{k\in\mathcal{J}}, (\gamma_k(m_k))_{k\in\mathcal{J}%
},\theta\right)q(\theta)(dm) \\
\geq \max\left\{ U(\theta), \sup_{m\in M}u\left((G_k(m_k))_{k\in\mathcal{J}%
}, (\gamma_k(m_k))_{k\in\mathcal{J}},\theta\right) \right\}
\end{multline*}
for every $\theta\in\Theta$.
\end{define}

\begin{define}[principal $j$'s incentive compatibility]
Principal $j$'s continuation optimality requires that there exist a
legitimate belief $\eta_j$ such that, for every $m_j\in M_j$ and every $%
y_j\in F_j(G_j(m_j))$,
\begin{multline*}
\int v_j\left((G_j(m_j),\widetilde G_{-j}(m_{-j})),
(\gamma_j(m_j),\widetilde\gamma_{-j}(m_{-j})),\theta\right)
\eta_j(m_j)(d\theta,dm_{-j},d\widetilde G_{-j},d\widetilde\gamma_{-j}) \\
\geq \int v_j\left((G_j(m_j),\widetilde G_{-j}(m_{-j})),
(y_j,\widetilde\gamma_{-j}(m_{-j})),\theta\right)
\eta_j(m_j)(d\theta,dm_{-j},d\widetilde G_{-j},d\widetilde\gamma_{-j}).
\end{multline*}
\end{define}

\begin{define}[$\mathcal{G}$-continuation private equilibrium]
A profile $[G,(\gamma,q,p)]$ is a $\mathcal{G}$-continuation private
equilibrium if $G\in\mathcal{G}$, beliefs are Bayesian consistent, the
agent's incentive and participation condition holds, and every principal's
continuation optimality condition holds.
\end{define}

The ex ante payoff of principal $j$ is
\begin{equation*}
V_j[G,(\gamma,q,p)] :=\int_{\Theta}\int_M v_j\left((G_k(m_k))_{k\in\mathcal{J%
}}, (\gamma_k(m_k))_{k\in\mathcal{J}},\theta\right)
q(\theta)(dm)\,\mu(d\theta).
\end{equation*}

\begin{define}[$\mathcal{G}$-robust private equilibrium]
\label{def:G-robust-private} A profile $[G,(\gamma,q,p)]$ is a $\mathcal{G}$%
-robust private equilibrium if it is a $\mathcal{G}$-continuation private
equilibrium and, for every principal $j$ and every deviation $%
G_{j}^{\prime}\in\mathcal{G}_j$, there exists a post-deviation profile
\begin{equation*}
\left[ G^{\prime}=(G_k^{\prime})_{k\in\mathcal{J}},
(\gamma^{\prime},q^{\prime},p^{\prime})\right]
\end{equation*}
such that
\begin{equation*}
\left[(G_k^{\prime},\gamma_k^{\prime},p_k^{\prime})_{k\neq j}\right] =\left[%
(G_k,\gamma_k,p_k)_{k\neq j}\right],
\end{equation*}
$G_{j}^{\prime}$ is the deviating contract, $p^{\prime}$ is Bayesian
consistent with respect to $q^{\prime}$ under $[G^{\prime},(\gamma^{%
\prime},q^{\prime},p^{\prime})]$, the agent's incentive compatibility holds
under $[G^{\prime},(\gamma^{\prime},q^{\prime},p^{\prime})]$, principal $j$%
's incentive compatibility holds under $[G^{\prime},(\gamma^{\prime},q^{%
\prime},p^{\prime})]$, and
\begin{equation*}
V_j[G,(\gamma,q,p)]\geq
V_j[G^{\prime},(\gamma^{\prime},q^{\prime},p^{\prime})].
\end{equation*}
\end{define}

The equality for the non-deviating principals records the private
information structure: after principal $j$ deviates, every other principal
continues with the same mechanism, continuation strategy, and belief system
because he observes neither $G_{j}^{\prime}$ nor the message sent to
principal $j$.

A profile is a continuation private equilibrium, respectively robust private
equilibrium, if it is a $\mathcal{G}$-continuation private equilibrium,
respectively $\mathcal{G}$-robust private equilibrium, for every contract
space $\mathcal{G}$ containing its contract profile.

\section*{Single-Principal Case}

With a single principal, private and public contracting coincide: there are
no other principals' mechanisms or messages to be hidden. Hence a $\mathcal{G%
}$-continuation private equilibrium is exactly the $\mathcal{G}$%
-continuation equilibrium defined in the main text. The notions of canonical
and minimal canonical contract spaces are therefore the same as in the
public-contracting single-principal problem.

\begin{theo}
\label{thm:private_single_current} Suppose $|\Theta|\geq |X|$ and $|\mathcal{%
J}|=1$. Then $\mathcal{G}^{\ast}$ is a minimal canonical contract space for
continuation private equilibrium.
\end{theo}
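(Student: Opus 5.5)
The plan is to reduce Theorem \ref{thm:private_single_current} to Theorem \ref{thm:minimal:weak}. The reduction rests on one observation: with $|\mathcal{J}|=1$, the private-contracting solution concept coincides with the public one for every contract space $\mathcal{G}$.

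\textbf{Step 1: the two equilibrium notions coincide.} Fix any contract space $\mathcal{G}$ and a profile $[G,(\gamma,q,p)]$. With a single principal, $M_{-j}$ is a singleton and $\Delta_{-j}^{\mathcal{G}}$ collapses to $\Delta(\Theta)$. Hence:
\begin{itemize}
\item a legitimate belief $\eta_j$ is just a map $M_j\to\Delta(\Theta)$ that equals $p_j$ on on-path messages;
\item the private Bayesian-consistency condition reduces to the formula in Appendix \ref{app:formal_continuation_equilibrium};
\item the agent's incentive and participation condition is literally identical to the public one, since $\gamma_j(m_j)=\gamma_j(m)$.
\end{itemize}

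The only point requiring care is principal optimality. In the private definition, optimality is stated with respect to some legitimate $\eta_j$, which may differ from $p_j$ off path. In the public definition, it is stated with respect to $p_j(m)$ at every $m$.

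\begin{itemize}
\item \emph{Public implies private:} take $\eta_j:=p_j$. This is legitimate and satisfies the private optimality inequality.
\item \emph{Private implies public:} redefine $p_j(m_j):=\eta_j(m_j)$ for off-path $m_j$, that is, for $m_j\notin\operatorname{supp}\mu^q_{M_j}$. Bayesian consistency is unaffected, because the set of off-path messages outside the support is $\mu^q_{M_j}$-null. Principal optimality then holds at every message with respect to the modified beliefs.
\end{itemize}

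The modification leaves $G$, $q$ and $\gamma$ untouched, so the induced allocation $z$ is unchanged. Consequently, for every payoff environment $\mathcal{E}$ and every $\mathcal{G}$, the set of $\mathcal{G}$-continuation private equilibrium allocations equals $\mathcal{Z}^{\text{continuation-}\mathcal{G}\text{-}\mathcal{E}}$.

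\textbf{Step 2: transfer canonicality and minimality.} Since the allocation sets agree for every $\mathcal{G}$ and $\mathcal{E}$, the definitions of canonical and minimal canonical contract space for continuation private equilibrium coincide with Definitions \ref{define:canonical:weak} and minimality for continuation equilibrium. Theorem \ref{thm:minimal:weak} then applies verbatim under $|\Theta|\geq|X|$ and $|\mathcal{J}|=1$:
\begin{itemize}
\item \emph{Upper bound:} Proposition \ref{prop:canonical:weak}, via Lemma \ref{lem:extension_w2}, shows that $\mathcal{G}^*$ is canonical.
\item \emph{Lower bound:} Proposition \ref{thm:lower:weak} shows that any canonical space has cardinality at least $|\mathcal{G}^*|$.
\end{itemize}

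\textbf{Main obstacle.} The delicate step is the off-path belief issue in Step 1. One must check that replacing $p_j$ by $\eta_j$ off the support does not break the measurability of the belief map. My approach is to choose $\eta_j$ measurable; since the set of off-path messages is Borel, being the complement of a closed support in a standard Borel space, the spliced map remains measurable. If that route fails, I would instead note that public optimality only needs to hold up to the $p_j$ already specified off path. The private definition allows any such assignment, so taking $\eta_j=p_j$ directly settles the direction that matters for the lower bound.
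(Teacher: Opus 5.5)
Your proposal is correct and follows the same route as the paper. The paper proves the result by observing that, with a single principal, the private and public continuation-equilibrium restrictions coincide, and then invoking Theorem \ref{thm:minimal:weak}; your splicing of a measurable legitimate belief $\eta_j$ into $p_j$ on the null complement of $\operatorname{supp}\mu^q_{M_j}$ makes explicit the off-path detail the paper leaves implicit. One caveat: your fallback would cover only the lower bound, because canonicality of $\mathcal{G}^\ast$ also needs the private-to-public direction, but your main splicing argument supplies that direction.
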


\begin{proof}
This is Theorem \ref{thm:minimal:weak} applied to the private-contracting
notation. With one principal, the agent sends only one message, and no other
principal observes or fails to observe anything. Thus the private and public
continuation-equilibrium restrictions are identical.
\end{proof}

\section*{Multiple-Principal Case}

For the remainder of this section, suppose $|\mathcal{J}|\geq 2$. With
multiple principals, private contracting differs from public contracting
because principal $j$'s mechanism and the agent's message to principal $j$
are not observed by the other principals. For robust private equilibrium,
menus with recommendations alone are not minimal. One must also allow plain
menus of contractible actions. Define
\begin{equation*}
\widehat X_j:=\{x_j\in X_j: |F_j(x_j)|\geq 2\}.
\end{equation*}
For every $L_j\in\mathcal{L}_j^\ast$ with $L_j\cap\widehat
X_j\neq\varnothing $, let
\begin{equation*}
B_j^{L_j}:L_j\to X_j,\qquad B_j^{L_j}(x_j)=x_j,
\end{equation*}
be the plain-menu contract that lets the agent select a contractible action
but not recommend a non-contractible action. Let
\begin{equation*}
\mathcal{G}_j^\bullet :=\{B_j^{L_j}:L_j\in\mathcal{L}_j^\ast,\
L_j\cap\widehat X_j\neq\varnothing\}.
\end{equation*}
If $L_j\cap\widehat X_j=\varnothing$, every feasible $F_j(x_j)$ is a
singleton on $L_j$, so the corresponding plain menu is behaviorally
equivalent to the menu-with-recommendations contract in $\mathcal{G}_j^\ast$.

Define the private canonical contract space by
\begin{equation*}
\mathcal{G}_j^{\mathrm{p}}:=\mathcal{G}_j^\ast\sqcup \mathcal{G}%
_j^\bullet,\qquad \mathcal{G}^{\mathrm{p}}:=\times_{j\in\mathcal{J}}
\mathcal{G}_j^{\mathrm{p}}.
\end{equation*}
The disjoint-union notation means that a plain menu and a
menu-with-recommendations are treated as distinct contracts when their
message spaces differ.

\begin{define}
A contract space $\mathcal{G}$ is canonical for robust private equilibrium
if, for every payoff environment,
\begin{equation*}
\mathcal{Z}^{\text{robust-private-}\mathcal{G}\text{-}\mathcal{E}} =\mathcal{%
Z}^{\text{robust-private-}\mathcal{E}},
\end{equation*}
where the left side is the set of allocations induced by $\mathcal{G}$%
-robust private equilibria and the right side is the set of allocations
induced by unrestricted robust private equilibria.
\end{define}

The following relation is used repeatedly. For two contracts $%
G_{j}^{\prime}:M_{j}^{\prime}\to X_j$ and $G_j:M_j\to X_j$, write $%
G_{j}^{\prime}\sqsupset_j G_j$ if there is a measurable surjection $%
\iota_j:M_{j}^{\prime}\to M_j$, admitting a measurable right inverse, such
that
\begin{equation*}
G_{j}^{\prime}(m_{j}^{\prime})=G_j(\iota_j(m_{j}^{\prime})), \qquad \forall
m_{j}^{\prime}\in M_{j}^{\prime}.
\end{equation*}
Thus $G_{j}^{\prime}$ is a refinement or relabeling of $G_j$.

\begin{lemma}[private relabeling]
\label{lem:private_relabeling_current} Fix a contract space $\mathcal{G}$
and a $\mathcal{G}$-continuation private equilibrium inducing allocation $z$%
. If principal $j$'s contract is replaced by a contract $G_{j}^{\prime}$
with $G_{j}^{\prime}\sqsupset_j G_j$, then there is a continuation private
equilibrium of the enlarged contract space with contract profile $%
(G_{j}^{\prime},G_{-j})$ that induces the same allocation $z$. The same
construction also preserves the post-deviation requirements in Definition %
\ref{def:G-robust-private} when it is applied to the deviating principal's
contract.
\end{lemma}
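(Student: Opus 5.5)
The plan is to mirror the public-contracting relabeling argument of Lemma \ref{lem:extension_w}, adapting it to private beliefs. Since $G_j^{\prime}\sqsupset_j G_j$, fix the measurable surjection $\iota_j:M_j^{\prime}\to M_j$ and a measurable right inverse $\rho_j:M_j\to M_j^{\prime}$, so that $\iota_j\circ\rho_j=\mathrm{id}_{M_j}$ and $G_j^{\prime}=G_j\circ\iota_j$.

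\textbf{Step 1: the new profile.} The agent's strategy $q^{\prime}(\theta)$ is the push-forward of $q(\theta)$ under $(m_j,m_{-j})\mapsto(\rho_j(m_j),m_{-j})$. Principal $j$'s continuation strategy is $\gamma_j^{\prime}:=\gamma_j\circ\iota_j$. This is feasible because $F_j(G_j^{\prime}(m_j^{\prime}))=F_j(G_j(\iota_j(m_j^{\prime})))$. Principal $j$'s belief is $p_j^{\prime}:=p_j\circ\iota_j$, and its legitimate belief is $\eta_j^{\prime}:=\eta_j\circ\iota_j$. For every $k\neq j$, keep $(G_k,\gamma_k,p_k)$ unchanged, together with any legitimate belief $\eta_k$. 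This is exactly the invariance demanded by Definition \ref{def:G-robust-private}, which gives the final sentence of the lemma.

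\textbf{Step 2: verification.}

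\emph{Allocation.} Since $G_j^{\prime}(\rho_j(m_j))=G_j(m_j)$ and $\gamma_j^{\prime}(\rho_j(m_j))=\gamma_j(m_j)$, every realized action profile coincides with the original one, so $z$ is unchanged.

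\emph{Agent incentive compatibility.} Any message $m_j^{\prime}$ yields $(G_j(\iota_j(m_j^{\prime})),\gamma_j(\iota_j(m_j^{\prime})))$, which is a payoff already available through the original message $\iota_j(m_j^{\prime})$. The supremum in the agent's constraint therefore does not increase, while the equilibrium payoff is the same.

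\emph{Principal $j$.} At any $m_j^{\prime}$, its optimization problem, with belief $\eta_j(\iota_j(m_j^{\prime}))$, is identical to the original one at $\iota_j(m_j^{\prime})$, so optimality carries over.

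\emph{Principals $k\neq j$.} The joint distribution of $(\theta,m_k)$ is unchanged, so $\mu^{q^{\prime}}_{M_k}=\mu^{q}_{M_k}$ and the on-path messages of $k$ are the same. What changes is $k$'s conditional over $m_{-k}$: its $j$-coordinate is now $\rho_j(m_j)$ in place of $m_j$. One could update $p_k$ to this push-forward, but to respect the invariance requirement I would instead read $p_k$ through the identification of $M_j$ with $\rho_j(M_j)\subseteq M_j^{\prime}$. Principal $k$'s payoff depends on $m_j^{\prime}$ only through $(G_j^{\prime}(m_j^{\prime}),\gamma_j^{\prime}(m_j^{\prime}))$, and on $\rho_j(M_j)$ this equals $(G_j(m_j),\gamma_j(m_j))$. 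Hence $k$'s expected payoffs are unchanged and its optimality conditions still hold.

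\emph{Bayesian consistency.} For $j$ this holds on $\rho_j(M_j)$ because the conditional given $\rho_j(m_j)$ equals the original $p_j(m_j)$. Messages outside $\rho_j(M_j)$ are off-path, so Bayes' rule places no restriction there and the copied belief is admissible.

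\textbf{Main obstacle.} The delicate step is this last bookkeeping for $k\neq j$. Principal $k$ does not see $G_j^{\prime}$, and it must keep literally the same $p_k$ and $\gamma_k$. The point to establish is that $p_k$ remains Bayesian consistent when interpreted on the relabeled space, through $\rho_j$, and that legitimacy is unaffected because $k$'s off-path beliefs never referenced $j$'s message labels beyond the action pairs they induce. If the literal invariance is too strict, I would define consistency up to this canonical identification of $M_j$ with $\rho_j(M_j)$. Measurability of all the composed maps follows from the measurability of $\iota_j$ and $\rho_j$.
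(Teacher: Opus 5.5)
Your proposal is correct and follows essentially the same route as the paper: you use the same push-forward of $q$ through $\rho_j$, set $\gamma_j'=\gamma_j\circ\iota_j$ and $p_j'=p_j\circ\iota_j$, copy principal $j$'s beliefs from the corresponding old message, and argue that every payoff available after relabeling was already available before. For $k\neq j$, the paper pushes $p_k$ forward through $m_j\mapsto\rho_j(m_j)$, which is exactly your identification of $M_j$ with $\rho_j(M_j)$; the same push-forward must also be applied on path to $\eta_k$, since legitimacy there requires probability one on the actual $(G_j',\gamma_j')$, so $\eta_k$ cannot literally be kept unchanged.
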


\begin{proof}
Let $\iota_j:M_{j}^{\prime}\to M_j$ and $\rho_j:M_j\to M_{j}^{\prime}$ be
the surjection and a right inverse. Let $\widehat M_j^{\prime}:=\rho_j(M_j)$%
; then $\iota_j:\widehat M_j^{\prime}\to M_j$ is a bijection. The agent uses
only messages in $\widehat M_j^{\prime}$ and treats $\rho_j(m_j)$ as the old
message $m_j$. Formally, for measurable $E_j^{\prime}\subseteq M_j^{\prime}$
and $E_{-j}\subseteq M_{-j}$,
\begin{equation*}
q^{\prime}(\theta)[E_j^{\prime}\times E_{-j}]
=q(\theta)[\iota_j(E_j^{\prime}\cap\widehat M_j^{\prime})\times E_{-j}].
\end{equation*}
Set
\begin{equation*}
\gamma_j^{\prime}(m_{j}^{\prime})=\gamma_j(\iota_j(m_{j}^{\prime})),\qquad
\gamma_k^{\prime}=\gamma_k\quad (k\neq j).
\end{equation*}
For principal $j$, set $p_j^{\prime}(m_{j}^{\prime})=p_j(\iota_j(m_{j}^{%
\prime}))$. For each $k\neq j$, define $p_k^{\prime}$ by pushing $p_k$
forward through the map that replaces the old $j$-message $m_j$ by $%
\rho_j(m_j)$ and leaves all other coordinates unchanged. Legitimate beliefs
are transformed in the same way on path; off path, copy the old legitimate
belief after the corresponding old message. Because $G_{j}^{\prime}(m_{j}^{%
\prime})=G_j(\iota_j(m_{j}^{\prime}))$, every payoff available to the agent
or to a principal under the relabeled profile was already available under
the original profile. Incentive compatibility, participation, and
continuation optimality are therefore preserved, and the induced allocation
is unchanged.
\end{proof}

\begin{lemma}[one-principal private recommendation expansion]
\label{lem:private_recommendation_expansion_current} Fix a contract space $%
\mathcal{G}$ and a $\mathcal{G}$-continuation private equilibrium inducing
allocation $z$. For any principal $j$, replacing only $G_j$ by the
menu-with-recommendations contract $G_j^{G_j}\in \mathcal{G}_j^\ast$ yields
a continuation private equilibrium of the enlarged contract space that
induces the same allocation $z$.
\end{lemma}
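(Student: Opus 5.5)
We mimic the proof of Lemma \ref{lem:extension_w2}, now in the private-contracting notation, and change only principal $j$'s coordinate. Write the original profile as $[G,(\gamma,q,p)]$ with legitimate beliefs $(\eta_k)_{k\in\mathcal{J}}$. Let $L_j:=G_j(M_j)$, and let $s_j:L_j\to M_j$ be the measurable selector with $G_j(s_j(x_j))=x_j$. Since contracting is private, principal $j$'s continuation strategy $\gamma_j$ depends only on $m_j$. This lets us define the message map
\begin{equation*}
T_j:M_j\to M_j^{G_j},\qquad T_j(m_j)=(G_j(m_j),\gamma_j(m_j)),
\end{equation*}
which is well defined because $\gamma_j(m_j)\in F_j(G_j(m_j))$. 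The new agent strategy $q'(\theta)$ is the push-forward of $q(\theta)$ under $(m_j,m_{-j})\mapsto(T_j(m_j),m_{-j})$. Action profiles are unchanged message by message, so the induced allocation is $z$.

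Next we specify principal $j$'s continuation play and beliefs.
\begin{itemize}
\item On $T_j(M_j)$: set $\gamma_j'(x_j,y_j)=y_j$. Take $p_j'(x_j,y_j)$ to be a regular conditional distribution of $(\theta,m_{-j})$ given $T_j(m_j)=(x_j,y_j)$; this is a mixture of the old posteriors $p_j(m_j)$ over $m_j\in T_j^{-1}(x_j,y_j)$. Extend it to a legitimate belief by putting probability one on the actual $(G_{-j},\gamma_{-j})$.
\item Off $T_j(M_j)$: copy $\gamma_j$ and $\eta_j$ from the old message $s_j(x_j)$, ignoring the recommendation. This is feasible because $G_j(s_j(x_j))=x_j$.
\end{itemize}

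For the other principals $k\neq j$, keep $G_k$ and $\gamma_k$ unchanged. Their beliefs over $(\theta,m_{-k})$ are obtained by pushing $p_k$ forward through the map replacing $m_j$ with $T_j(m_j)$, and off-path legitimate beliefs are transformed the same way. Nothing principal $k$ observes has changed, so its continuation optimality is preserved verbatim. The only change is that its integrand now evaluates $j$'s action via $(x_j,\gamma'_j)$, which agrees pointwise with the old action along the push-forward.

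Bayesian consistency holds on path by construction of the conditional distributions, and off path it imposes no restriction. Principal $j$'s optimality on path is the step we expect to be the main obstacle. Every old message in $T_j^{-1}(x_j,y_j)$ prescribed the same action $y_j$, which was optimal against each old posterior. Since the new posterior is a mixture of those posteriors and expected payoff is linear in the belief, $y_j$ remains optimal. The delicate point is doing this measurably, pairing the disintegration with the legitimate-belief marginals; we would handle it exactly as in Lemma \ref{lem:extension_w2}, using regular conditional probabilities on standard Borel spaces. Off path, optimality is inherited from the copied pair $(\gamma_j(s_j(x_j)),\eta_j(s_j(x_j)))$.

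Finally, agent incentive compatibility and participation hold because every new message yields an action profile already available in the old game. Messages in $T_j(M_j)$ replicate old messages exactly, and off-path messages replicate $s_j(x_j)$ paired with any $m_{-j}$. So the supremum on the right side of the agent's constraint does not increase, while the left side is unchanged.
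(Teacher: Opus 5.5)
Your construction is essentially the paper's proof: push $q$ forward under $m_j\mapsto(G_j(m_j),\gamma_j(m_j))$, follow the recommendation with regular-conditional beliefs on path, copy the selector's continuation action and belief otherwise, and push the other principals' beliefs forward. The one adjustment is that the paper splits on the support of the induced message distribution rather than on the image $T_j(M_j)$. At a zero-probability image point the conditional belief is pinned down only up to null sets, and $y_j$ may be justified only by an off-path legitimate belief $\eta_j(m_j)$ that does not put probability one on the actual $(G_{-j},\gamma_{-j})$; there you should copy $\eta_j(m_j)$ from a preimage (or fall back on $s_j(x_j)$, as the paper does) instead of imposing the actual contracts.
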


\begin{proof}
Let $L_j=G_j(M_j)$ and let $G_j^{G_j}$ have message space
\begin{equation*}
M_j^{G_j}:=\{(x_j,y_j):x_j\in L_j,\ y_j\in F_j(x_j)\}, \qquad
G_j^{G_j}(x_j,y_j)=x_j .
\end{equation*}
The original equilibrium induces a probability measure $\pi_j$ on $%
\Theta\times M_j^{G_j}\times M_{-j}$ by
\begin{equation*}
\pi_j(E)=\int_{\Theta}q(\theta)\left[ \left\{m\in
M:\left(\theta,(G_j(m_j),\gamma_j(m_j)),m_{-j}\right)\in E \right\}\right]%
\mu(d\theta)
\end{equation*}
for every measurable $E\subseteq \Theta\times M_j^{G_j}\times M_{-j}$. Let $%
\pi_{j,M_j^{G_j}}$ be its marginal on $M_j^{G_j}$, and let $%
\pi_{j,\Theta\times M_{-j}}(x_j,y_j)$ denote a regular conditional
distribution on $\Theta\times M_{-j}$ given $(x_j,y_j)$ whenever $(x_j,y_j)$
is in the support of $\pi_{j,M_j^{G_j}}$. For each $x_j\in L_j$, fix a
selector $\sigma_j(x_j)\in M_j$ satisfying $G_j(\sigma_j(x_j))=x_j$.

Define the new message strategy as the push-forward of $q$ under
\begin{equation*}
T_j(m):=\left((G_j(m_j),\gamma_j(m_j)),m_{-j}\right).
\end{equation*}
Equivalently, for every measurable $E^{\prime}\subseteq M_j^{G_j}\times
M_{-j}$,
\begin{equation*}
q^{\prime}(\theta)[E^{\prime}] =q(\theta)\left[ \left\{m\in
M:\left((G_j(m_j),\gamma_j(m_j)),m_{-j}\right)\in E^{\prime}\right\} \right].
\end{equation*}
For $k\neq j$, keep $G_k$ and $\gamma_k$ fixed. Principal $j$'s continuation
strategy is
\begin{equation*}
\gamma_j^{\prime}(x_j,y_j)=
\begin{cases}
y_j, & \text{if }(x_j,y_j)\in\func{supp}\pi_{j,M_j^{G_j}}, \\
\gamma_j(\sigma_j(x_j)), & \text{otherwise.}%
\end{cases}%
\end{equation*}
For beliefs, set $p_k^{\prime}$ for $k\neq j$ equal to the push-forward of $%
p_k$ under the replacement of the old $j$-message $m_j$ by $%
(G_j(m_j),\gamma_j(m_j))$. Principal $j$'s belief is
\begin{equation*}
p_j^{\prime}(x_j,y_j)=
\begin{cases}
\pi_{j,\Theta\times M_{-j}}(x_j,y_j), & \text{if }(x_j,y_j)\in\func{supp}%
\pi_{j,M_j^{G_j}}, \\
p_j(\sigma_j(x_j)), & \text{otherwise.}%
\end{cases}%
\end{equation*}
Legitimate off-path beliefs are transformed by the same selector $\sigma_j$.

This is the old private-contracting construction with only principal $j$'s
private message recoded. The new profile replicates the distribution over
types, contractible actions, and continuation actions induced by the old
profile. Bayesian consistency is preserved by construction of the regular
conditional beliefs and the push-forward beliefs. The agent's incentive and
participation condition and all principals' continuation optimality
conditions are inherited from the original profile. Hence the resulting
profile is a continuation private equilibrium and induces the same
allocation $z$.
\end{proof}

\begin{prop}
\label{prop:private_canonical_upper_current} For every payoff environment
and every contract space $\mathcal{G}$ with $\mathcal{G}^{\mathrm{p}%
}\subseteq\mathcal{G}$,
\begin{equation*}
\mathcal{Z}^{\text{robust-private-}\mathcal{G}\text{-}\mathcal{E}} \subseteq
\mathcal{Z}^{\text{robust-private-}\mathcal{G}^{\mathrm{p}}\text{-}\mathcal{E%
}}.
\end{equation*}
\end{prop}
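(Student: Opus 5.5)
Fix a payoff environment $\mathcal{E}$, a contract space $\mathcal{G}\supseteq\mathcal{G}^{\mathrm{p}}$, and a $\mathcal{G}$-robust private equilibrium $[G,(\gamma,q,p)]$ inducing $z$. The plan mirrors the public upper bound, Lemma \ref{lem:max:robust:on-equilibrium-pah}, but with one change. The replacement contract for principal $j$ must be chosen in $\mathcal{G}_j^{\mathrm{p}}$ so that the private-information requirement of Definition \ref{def:G-robust-private} survives. Namely, after a deviation the non-deviating principals keep their contracts, continuation strategies and beliefs.

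\textbf{Step 1: on-path representation.} Apply Lemma \ref{lem:private_recommendation_expansion_current} to one principal at a time. Each $G_k$ is replaced by $G_k^{G_k}\in\mathcal{G}_k^\ast\subseteq\mathcal{G}_k^{\mathrm{p}}$, and the other principals' beliefs are pushed forward through the recoding of $k$'s message. After $|\mathcal{J}|$ steps this yields a $\mathcal{G}$-continuation private equilibrium $[(G_k^{G_k})_k,(\gamma',q',p')]$ that induces $z$. Each $V_k$ is a function of the allocation, so every principal's payoff is unchanged.

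\textbf{Step 2: deviations.} Fix $j$ and a deviation $G_j''\in\mathcal{G}_j^{\mathrm{p}}\subseteq\mathcal{G}$. Robustness of the original profile gives a post-deviation profile $[(G_j'',G_{-j}),(\gamma'',q'',p'')]$. In it the non-deviators keep $(G_k,\gamma_k,p_k)$, all incentive conditions hold, and $j$ obtains at most $V_j[G,(\gamma,q,p)]$. I then transport this profile to the new non-deviating contracts $G_k^{G_k}$. The agent's message to each $k\neq j$ is recoded by $m_k\mapsto(G_k(m_k),\gamma_k(m_k))$. This must be the \emph{same} recoding used in Step 1, so that the resulting $(G_k^{G_k},\gamma'_k,p'_k)$ coincide exactly with the new equilibrium's objects, as Definition \ref{def:G-robust-private} demands. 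Principal $j$'s belief over $(\theta,m_{-j})$ and its legitimate belief are pushed forward through the same map. This is legitimate because $j$'s payoff depends on $m_{-j}$ only through the induced actions $(G_k(m_k),\gamma_k(m_k))$, which are preserved. The contractible and continuation actions, and hence $V_j$, are unchanged. Chaining the two steps gives
\[
V_j[(G_k^{G_k})_k,(\gamma',q',p')]=V_j[G,(\gamma,q,p)]\geq V_j[(G_j'',G_{-j}),(\gamma'',q'',p'')]
\]
for the transported post-deviation profile.

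\textbf{Main obstacle.} The delicate point is the non-deviators' continuation after off-path messages in the post-deviation profile. There $q''$ may put mass on messages $m_k$ that are off path in the original equilibrium, or may recombine messages in new ways. Since $k$ does not observe the deviation, $k$ must respond with the \emph{original} $\gamma_k(m_k)$. The recoded message $(G_k(m_k),\gamma_k(m_k))$ must then trigger exactly that response under $\gamma_k'$.

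On the support of $\pi_{k,M_k^{G_k}}$, $\gamma'_k$ follows the recommendation, so this is automatic. Off that support, $\gamma'_k$ uses the selector $\sigma_k$, which may disagree with $\gamma_k(m_k)$. I would therefore first redefine the off-support part of the construction in Lemma \ref{lem:private_recommendation_expansion_current}. Off path, principal $k$ should follow the recommendation whenever $(x_k,y_k)=(G_k(m_k),\gamma_k(m_k))$ for some old message $m_k$. The off-path legitimate belief should be copied from that $m_k$, so that following the recommendation is optimal.

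Only messages with $y_k$ outside $\{\gamma_k(m_k):G_k(m_k)=x_k\}$ fall back to $\sigma_k$, and the agent never sends those in the transported profile. Agent incentive compatibility is then preserved, because every new message reproduces some old action profile. The plain menus $\mathcal{G}^\bullet_j$ enter only as admissible deviations here; their presence in $\mathcal{G}$ is what makes Step 2 applicable to them.
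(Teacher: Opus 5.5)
Your proposal is correct and follows essentially the same route as the paper: canonicalize every contract once via Lemma \ref{lem:private_recommendation_expansion_current} with a single fixed recoding $m_k\mapsto(G_k(m_k),\gamma_k(m_k))$, then push each post-deviation profile supplied by $\mathcal{G}$-robustness through that same recoding, so that the non-deviators' private objects coincide exactly with the canonical ones. Your off-path repair is exactly what the paper's proof does when it specifies that each non-deviator's canonical continuation strategy follows the recommendation on the whole image of the recoding map and uses the fixed selector only outside that image.
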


\begin{proof}
Take a $\mathcal{G}$-robust private equilibrium $[G,(\gamma,q,p)]$ inducing $%
z$. Apply Lemma \ref{lem:private_recommendation_expansion_current}
successively, one principal at a time, to obtain an allocation-equivalent
continuation private equilibrium $[\bar G,(\bar\gamma,\bar q,\bar p)]$,
where $\bar G_k=G_k^{G_k}\in\mathcal{G}_k^\ast$ for every $k$. Choose the
principalwise construction once and for all. For each principal $k$, let
\begin{equation*}
\tau_k(m_k):=(G_k(m_k),\gamma_k(m_k))
\end{equation*}
be the canonical message generated by the old message $m_k$, and fix the
selectors used to define continuation actions and off-path beliefs outside
the image of $\tau_k$. Thus $(\bar G_k,\bar\gamma_k,\bar p_k)$ is a fixed
recoding of the original profile: $\bar G_k$ and $\bar\gamma_k$ are
determined by $(G_k,\gamma_k)$, while $\bar p_k$ is the corresponding
push-forward of $p_k$ through the fixed recoding maps for the other
principals' private message coordinates.

Now fix a principal $j$ and a deviation $G_{j}^{\prime}\in\mathcal{G}_j^{%
\mathrm{p}}$. Since $\mathcal{G}^{\mathrm{p}}\subseteq\mathcal{G}$, the
original $\mathcal{G}$-robust equilibrium provides a post-deviation profile
after $G_{j}^{\prime}$ in which every non-deviating principal $k\neq j$
keeps the old mechanism, continuation strategy, and belief system $%
(G_k,\gamma_k,p_k)$, and principal $j$ obtains no more than his original
payoff. Write this post-deviation profile as
\begin{equation*}
\left[(G_{j}^{\prime},G_{-j}),(\gamma^{d},q^{d},p^{d})\right],
\end{equation*}
where $(G_k,\gamma_k,p_k)_{k\neq j}$ are the original non-deviating private
objects.

Replicate this post-deviation profile after replacing the non-deviating
contracts $G_k$ by $\bar G_k=G_k^{G_k}$. The replicated message strategy $%
\widehat q$ is the push-forward of $q^d$ by the map
\begin{equation*}
(m_j^{\prime},m_{-j})\mapsto (m_j^{\prime},(\tau_k(m_k))_{k\neq j}).
\end{equation*}
Equivalently, for every measurable $E\subseteq
M_j^{\prime}\times\prod_{k\neq j}M_k^{G_k}$,
\begin{equation*}
\widehat q(\theta)[E] =q^{d}(\theta)\left[ \left\{(m_j^{\prime},m_{-j}):
(m_j^{\prime},(\tau_k(m_k))_{k\neq j})\in E\right\}\right].
\end{equation*}
Principal $j$ keeps the continuation strategy from the original
post-deviation profile. Each non-deviating principal $k\neq j$ uses the
fixed continuation strategy $\bar\gamma_k$ from the initial
canonicalization: on messages in the image of $\tau_k$, it chooses the
recommended continuation action, and off this image it uses the fixed
selector chosen above. Principal $j$'s on-path beliefs are the push-forward
of $p_j^d$ through the same maps $\tau_k$, $k\neq j$. For every
non-deviating principal $k\neq j$, the resulting private object is exactly
the fixed $(\bar G_k,\bar\gamma_k,\bar p_k)$, because it is obtained by the
same principalwise recoding and the same belief push-forward as before the
deviation. Principal $j$'s legitimate belief is pushed forward through the
same recoding of the non-deviating principals' private message coordinates.

This is the old post-deviation replication argument, applied only to the
non-deviating principals' private message coordinates. It does not require
any non-deviating principal to observe $G_j^{\prime}$ or the message sent to
principal $j$. It preserves the joint distribution over types and realized
action pairs, so Bayesian consistency, the agent's incentive condition,
principal $j$'s continuation optimality, and principal $j$'s payoff are
preserved. Therefore every deviation in $\mathcal{G}_j^{\mathrm{p}}$ is met
by an admissible post-deviation profile that leaves the deviator no better
off, so $[\bar G,(\bar\gamma,\bar q,\bar p)]$ is a $\mathcal{G}^{\mathrm{p}}$%
-robust private equilibrium inducing $z$.
\end{proof}

\begin{prop}
\label{prop:private_canonical_lower_current} For every payoff environment,
\begin{equation*}
\mathcal{Z}^{\text{robust-private-}\mathcal{G}^{\mathrm{p}}\text{-}\mathcal{E%
}} \subseteq \mathcal{Z}^{\text{robust-private-}\mathcal{E}}.
\end{equation*}
\end{prop}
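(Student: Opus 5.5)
The plan is to start from a $\mathcal{G}^{\mathrm{p}}$-robust private equilibrium $[G,(\gamma,q,p)]$ inducing $z$ and show that the same profile, with $q$, $\gamma$ and the legitimate beliefs unchanged, is a $\mathcal{G}$-robust private equilibrium for every contract space $\mathcal{G}$ containing $G$. This is the private analogue of Lemma \ref{lem:max:robust:off-equilibrium-pah}. The continuation private equilibrium conditions involve only $G$ and the message spaces actually offered. They are therefore unaffected by enlarging the contract space, except through legitimate beliefs off path, which we keep unchanged because $\Delta_{-j}^{\mathcal{G}^{\mathrm{p}}}\subseteq\Delta_{-j}^{\mathcal{G}\cup\mathcal{G}^{\mathrm{p}}}$. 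The work is therefore entirely in the no-safe-deviation test. Fix $j$ and an arbitrary admissible deviation $G_j':M_j'\to X_j$ in $\mathcal{G}_j$, and let $L_j:=G_j'(M_j')\in\mathcal{L}_j^\ast$.

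The first step is to match $G_j'$ to a contract $\widetilde G_j\in\mathcal{G}_j^{\mathrm{p}}$ with $G_j'\sqsupset_j\widetilde G_j$. This is where the plain menus $\mathcal{G}_j^\bullet$ enter. If $L_j\cap\widehat X_j=\varnothing$, every $F_j(x_j)$ on $L_j$ is a singleton, and we take $\widetilde G_j=G_j^{G_j'}\in\mathcal{G}_j^\ast$, which is equivalent to the plain menu on $L_j$. Otherwise we take $\widetilde G_j=B_j^{L_j}\in\mathcal{G}_j^\bullet$. In both cases the surjection $\iota_j:=G_j'$ maps $M_j'$ onto $L_j$, with $G_j'(m_j')=\widetilde G_j(\iota_j(m_j'))$, and the admissibility selector of $G_j'$ provides the measurable right inverse. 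Using a plain menu rather than a menu with recommendations is essential here. As remarked after Proposition \ref{prop:canonical:lower-bound}, a coarse deviation need not be a refinement of any element of $\mathcal{G}_j^\ast$, whereas every deviation refines the plain menu on its own image.

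The second step is to invoke robustness in $\mathcal{G}^{\mathrm{p}}$. This gives a post-deviation profile after $\widetilde G_j$ that keeps $(G_k,\gamma_k,p_k)_{k\neq j}$ fixed, satisfies Bayesian consistency and agent and principal-$j$ incentive compatibility, and gives $j$ at most $V_j[G,(\gamma,q,p)]$. We then apply Lemma \ref{lem:private_relabeling_current} to principal $j$'s contract, replacing $\widetilde G_j$ by $G_j'$. By the last clause of that lemma, the construction preserves the post-deviation requirements of Definition \ref{def:G-robust-private}. The agent uses only messages in $\rho_j(L_j)$; principal $j$ sets $\gamma_j'=\gamma_j^{d}\circ\iota_j$ and uses the corresponding beliefs. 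The non-deviating principals' objects remain literally $(G_k,\gamma_k,p_k)$, and $j$'s payoff is unchanged. So $G_j'$ is not a profitable deviation.

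I expect the main obstacle to be verifying that the relabeling leaves the non-deviators' private objects exactly unchanged, rather than merely pushed forward. Lemma \ref{lem:private_relabeling_current} pushes $p_k$ through the recoding of $j$'s coordinate. However, $p_k$ is a belief over $\Theta\times M_{-k}$, and in the post-deviation profile the $j$-coordinate refers to $M_j'$. I would handle this by noting that Definition \ref{def:G-robust-private} requires only $(G_k,\gamma_k,p_k)$ to be unchanged on $M_k$. I would then reinterpret $p_k$ through the legitimate belief $\eta_k$, which ranges over the other principals' contracts and messages and can be taken to place mass on the original $G_j$. Because $k$ observes neither $G_j'$ nor $m_j'$, Bayesian consistency for $k$ has to be checked only against the agent's marginal on $M_k$. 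That marginal is preserved, since the relabeled strategy has the same distribution of $(\theta,m_{-j})$ as the post-deviation profile from the second step.
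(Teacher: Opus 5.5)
Your proposal is essentially the paper's proof. Both reduce the arbitrary deviation $G_j'$ to the plain menu $B_j^{L_j}$ on its image, or to the menu with recommendations on $L_j$ when $L_j\cap\widehat X_j=\varnothing$, observe that $G_j'$ refines it in the sense of $\sqsupset_j$, invoke $\mathcal{G}^{\mathrm{p}}$-robustness against it, and transport the resulting post-deviation profile back to $G_j'$ via Lemma \ref{lem:private_relabeling_current}. One caveat, a point the paper also leaves implicit: your inclusion $\Delta_{-j}^{\mathcal{G}^{\mathrm{p}}}\subseteq\Delta_{-j}^{\mathcal{G}\cup\mathcal{G}^{\mathrm{p}}}$ only covers contract spaces that contain $\mathcal{G}^{\mathrm{p}}$, so for a smaller $\mathcal{G}\ni G$ you would still need off-path legitimate beliefs supported on contracts in $\mathcal{G}_{-j}$.
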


\begin{proof}
Let $[G,(\gamma,q,p)]$ be a $\mathcal{G}^{\mathrm{p}}$-robust private
equilibrium. We show that it is robust against any admissible private
contract. Fix a principal $j$ and an arbitrary deviation $%
G_{j}^{\prime}:M_{j}^{\prime}\to X_j$. Let $L_j:=G_{j}^{\prime}(M_{j}^{%
\prime})$. By admissibility of $G_j^{\prime}$, $L_j\in\mathcal{L}_j^\ast$.

If $L_j\cap\widehat X_j\neq\varnothing$, let $G_{j}^{\prime%
\prime}:=B_j^{L_j}\in \mathcal{G}_j^\bullet$. If $L_j\cap\widehat
X_j=\varnothing$, let $G_{j}^{\prime\prime}$ be the unique
menu-with-recommendations contract in $\mathcal{G}_j^\ast$ with image $L_j$;
because each $F_j(x_j)$ is a singleton on $L_j$, this contract is
behaviorally a plain menu. In both cases $G_{j}^{\prime}\sqsupset_j
G_{j}^{\prime\prime}$.

Robustness against the private-canonical deviation $G_{j}^{\prime\prime}$
gives a post-deviation profile after $(G_{j}^{\prime\prime},G_{-j})$ that
keeps all non-deviating principals' mechanisms, continuation strategies, and
beliefs fixed, and that does not improve principal $j$'s payoff. The
relabeling construction in Lemma \ref{lem:private_relabeling_current},
applied only to the deviating principal's contract, gives a corresponding
post-deviation profile after $(G_{j}^{\prime},G_{-j})$. This application
does not require the post-deviation profile after $G_{j}^{\prime\prime}$ to
be a full continuation private equilibrium for all principals. It preserves
exactly the requirements imposed after a private deviation: the agent's
incentive condition, principal $j$'s continuation optimality, principal $j$%
's payoff, and the fixed private objects $(G_k,\gamma_k,p_k)_{k\neq j}$ of
all non-deviating principals. Hence the arbitrary deviation $G_{j}^{\prime}$
is not a safe profitable deviation. Since $j$ and $G_{j}^{\prime}$ were
arbitrary, the original profile is an unrestricted robust private
equilibrium.
\end{proof}

\begin{theo}
\label{thm:private_canonical_current} Suppose $|\mathcal{J}|\geq 2$. Then $%
\mathcal{G}^{\mathrm{p}}$ is canonical for robust private equilibrium.
\end{theo}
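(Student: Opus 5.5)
The proof will combine Propositions \ref{prop:private_canonical_upper_current} and \ref{prop:private_canonical_lower_current} into the two inclusions required by the definition of canonicality for robust private equilibrium. Fix a payoff environment $\mathcal{E}$.

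For the inclusion $\mathcal{Z}^{\text{robust-private-}\mathcal{G}^{\mathrm{p}}\text{-}\mathcal{E}}\subseteq\mathcal{Z}^{\text{robust-private-}\mathcal{E}}$, I would invoke Proposition \ref{prop:private_canonical_lower_current} directly. That proposition shows that every $\mathcal{G}^{\mathrm{p}}$-robust private equilibrium survives an arbitrary admissible deviation $G_j'$. The reason is that $G_j'$ refines, in the sense of $\sqsupset_j$, either the plain menu $B_j^{L_j}$ or the singleton-recommendation menu in $\mathcal{G}_j^\ast$ with the same image. Lemma \ref{lem:private_relabeling_current} then transports the non-improving post-deviation profile from that canonical contract to $G_j'$. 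Hence the same profile is a $\mathcal{G}$-robust private equilibrium for every contract space $\mathcal{G}$ containing its contract profile. This is exactly the definition of unrestricted robust private equilibrium, so its allocation lies in the right-hand set.

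For the reverse inclusion, let $z$ be induced by an unrestricted robust private equilibrium $[G,(\gamma,q,p)]$. By definition it is a $\mathcal{G}$-robust private equilibrium for every contract space containing $G$. I would choose the particular space $\mathcal{G}:=\times_k(\{G_k\}\cup\mathcal{G}_k^{\mathrm{p}})$, which contains both $G$ and $\mathcal{G}^{\mathrm{p}}$. Applying Proposition \ref{prop:private_canonical_upper_current} to this $\mathcal{G}$ yields a $\mathcal{G}^{\mathrm{p}}$-robust private equilibrium inducing $z$. This parallels the ``$\Longleftarrow$'' step in the public proof of Theorem \ref{thm:minimal:weak-robust}.

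Both directions reduce to the earlier propositions, so the main work is checking that the definitions line up. Two points need verification. First, the chosen space must be an admissible contract space containing $G$, so that unrestricted robustness specializes to it; this holds because $G_k$ is admissible and so are all elements of $\mathcal{G}_k^{\mathrm{p}}$. Second, Proposition \ref{prop:private_canonical_upper_current} produces profiles whose contracts $G_k^{G_k}$ lie in $\mathcal{G}^\ast\subseteq\mathcal{G}^{\mathrm{p}}$, so the resulting equilibrium really belongs to the canonical space. With both checks done, the two inclusions give equality, and $\mathcal{G}^{\mathrm{p}}$ is canonical for robust private equilibrium.
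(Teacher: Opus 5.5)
Your proposal is correct and follows the paper's proof, which simply combines Propositions \ref{prop:private_canonical_upper_current} and \ref{prop:private_canonical_lower_current} to get the two inclusions. Your explicit choice of the contract space $\times_{k}(\{G_k\}\cup\mathcal{G}_k^{\mathrm{p}})$, used to specialize an unrestricted robust private equilibrium so that Proposition \ref{prop:private_canonical_upper_current} applies, spells out a step the paper leaves implicit. It mirrors the corresponding step in the public-contracting proof.
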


\begin{proof}
This follows immediately from Propositions \ref%
{prop:private_canonical_upper_current} and \ref%
{prop:private_canonical_lower_current}.
\end{proof}

\section*{Minimality}

The minimality statement below is the finite-action version of the private
canonicality result. Assume in this subsection that each $X_j$ and each
feasible set $F_j(x_j)$ is finite, and that $|\Theta|\geq |Z|$.

\begin{define}
A canonical contract space $\mathcal{G}$ is minimal for robust private
equilibrium if every canonical contract space $\mathcal{G}^{\prime}$ for
robust private equilibrium satisfies $|\mathcal{G}^{\prime}|\geq|\mathcal{G}%
| $.
\end{define}

\begin{lemma}
\label{lem:private_minimal_rich_current} Suppose $|\mathcal{J}|\geq 2$. If $%
\mathcal{G}^{\prime }$ is canonical for robust private equilibrium, then for
every principal $j$ and every non-empty $L_{j}\subseteq X_{j}$, $\mathcal{G}%
_{j}^{\prime }$ contains a contract $\left\langle G_{j}:M_{j}\rightarrow
X_{j}\right\rangle $ with image $L_{j}$ and at least $|F_{j}(x_{j})|$
messages above each $x_{j}\in L_{j}$, i.e.,
\begin{equation*}
\left\vert \left\{ m_{j}\in M_{j}:G_{j}\left( m_{j}\right) =x_{j}\right\}
\right\vert \geq |F_{j}(x_{j})|,\quad \forall x_j\in L_{j}=G_{j}(M_{j}).
\end{equation*}
\end{lemma}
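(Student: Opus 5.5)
We will prove the lemma by constructing, for a fixed principal $j$ and a fixed non-empty $L_j\subseteq X_j$, a payoff environment in which some robust private equilibrium allocation can only be supported by a contract of principal $j$ that has image exactly $L_j$ and at least $|F_j(x_j)|$ distinct messages above each $x_j\in L_j$. The template is the lower-bound argument in Proposition \ref{thm:lower:weak-robust}, enriched so that the allocation also uses every feasible continuation action above each $x_j$.

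\textbf{Construction.} We use $|\Theta|\geq|Z|$ to choose distinct types $\theta_{(x_j,y_j)}$, one for each pair $(x_j,y_j)\in M_j(L_j)$. All remaining types are mapped onto these pairs by a surjection $\varphi_j:\Theta\to M_j(L_j)$. Set $U\equiv 0$. Payoffs depend only on $(x_j,y_j)$:
\begin{itemize}
\item If $x_j\notin L_j$, then $u=8$ and $v_k=-8$ for every principal $k$, exactly as in (\ref{kkt1}).
\item If $x_j\in L_j$, then $u(\cdot,\theta)=1$ when $(x_j,y_j)=\varphi_j(\theta)$ and $u(\cdot,\theta)=0$ otherwise.
\item Principal $j$'s continuation payoff is $v_j(x_j,y_j,\theta)=\mathbf 1\{(x_j,y_j)=\varphi_j(\theta)\}$, so after learning $\theta$ it wants to match $\varphi_j(\theta)$.
\end{itemize}
Other principals' actions are payoff-irrelevant.

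\textbf{Target allocation.} In the target allocation, type $\theta$ receives $\varphi_j(\theta)$. It is generated by the menu with recommendations $\widehat G|_{M_j(L_j)}$, with every type truthfully recommending and principal $j$ following the recommendation. Beliefs after the message $(x_j,y_j)$ concentrate on $\varphi_j^{-1}(x_j,y_j)$, so following the recommendation is optimal. Every type obtains payoff $1$, the maximum on $L_j$, so the agent has no profitable deviation. Robustness of this profile follows as in Proposition \ref{thm:lower:weak-robust}: any deviation whose image leaves $L_j$ has a continuation in which the agent takes that action and every principal gets $-8$, and any other deviation yields at most $1$. The other principals offer singletons. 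Under private contracting, the non-deviators' fixed objects are irrelevant because payoffs do not depend on them. So the allocation lies in $\mathcal{Z}^{\text{robust-private-}\mathcal{E}}$, and canonicity of $\mathcal{G}'$ yields a $\mathcal{G}'$-robust private equilibrium $[G,(\gamma,q,p)]$ inducing it.

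\textbf{Image and message count.} First, the image of $G_j$ equals $L_j$. It contains $L_j$ because each $x_j\in L_j$ is realized by some type. It contains nothing outside $L_j$, since otherwise the agent would deviate to obtain $8$. Second, we count messages. Fix $x_j$ and distinct $y_j,y_j'\in F_j(x_j)$. The allocation gives $\theta_{(x_j,y_j)}$ the pair $(x_j,y_j)$ and $\theta_{(x_j,y_j')}$ the pair $(x_j,y_j')$, each with probability one. Principal $j$'s continuation strategy $\gamma_j$ depends only on $m_j$. Hence the messages to $j$ that $\theta_{(x_j,y_j)}$ sends with positive probability all satisfy $\gamma_j(m_j)=y_j$, and those sent by the other type satisfy $\gamma_j(m_j)=y_j'$. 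These message sets are therefore disjoint and non-empty. Ranging over all of $F_j(x_j)$ gives at least $|F_j(x_j)|$ messages above $x_j$.

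\textbf{Main obstacle.} The delicate step is the measure-theoretic fact that a single type (a measure-zero event under an atomless prior) "receives" a specific pair. To handle it, we will read allocations type by type, which is how $z(\theta)$ is defined as a map on $\Theta$, rather than $\mu$-a.e. Alternatively, in the finite minimality setting we will put positive prior mass on each designated type, which makes both the realization and the disjointness of message sets immediate. We will also check that the agent's incentive condition (a supremum over all messages) is satisfied. This holds because no message can give more than $1$ once the image of $G_j$ is contained in $L_j$.
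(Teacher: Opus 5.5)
Your proposal is correct and follows essentially the same route as the paper: the same $8$ versus $-8$ payoffs for contractible actions outside $L_j$, the same reward of $1$ for matching $\varphi_j(\theta)$, the same menu-with-recommendations target equilibrium with its robustness check, and the same image and message-counting argument based on $\gamma_j$ depending only on $m_j$. Your treatment of the measure-zero issue (reading allocations type by type, or putting atoms on the designated types) makes explicit a point the paper leaves implicit. You should still specify $v_k$ for $k\neq j$ when $x_j\in L_j$; the paper sets $v_k=u$, and any choice depending only on $(x_j,y_j,\theta)$ works.
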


\begin{proof}
Fix $j$ and $L_j$. Since $|\Theta|\geq |Z|$, choose a surjection
\begin{equation*}
\varphi_j:\Theta\to \{(x_j,y_j):x_j\in L_j,\ y_j\in F_j(x_j)\}.
\end{equation*}
Use the payoff environment from the old lower-bound construction:
\begin{equation*}
x_j\notin L_j \quad\Longrightarrow\quad u(x,y,\theta)=-v_k(x,y,\theta)=8
\end{equation*}
for every $(x,y,\theta,k)\in X\times Y\times\Theta\times\mathcal{J}$, and,
when $x_j\in L_j$,
\begin{equation*}
u(x,y,\theta)=v_k(x,y,\theta)=
\begin{cases}
1, & \text{if }(x_j,y_j)=\varphi_j(\theta), \\
0, & \text{otherwise,}%
\end{cases}%
\end{equation*}
for every principal $k$. Thus, given $x_j\in L_j$, all players agree that
the pair $\varphi_j(\theta)$ is uniquely best; if $x_j\notin L_j$, the agent
strictly prefers that action while every principal strictly dislikes it.

The menu-with-recommendations contract with image $L_j$ supports a robust
private equilibrium in which the agent sends the message $\varphi_j(\theta)$
to principal $j$, and principal $j$ chooses the recommended continuation
action. Deviations by principal $j$ that include an action outside $L_j$ are
deterred because the agent would choose such an action and give principal $j$
payoff $-8$. Deviations whose image is contained in $L_j$ cannot improve on
payoff one at each state. Deviations by the other principals do not affect
this payoff construction.

Canonicality of $\mathcal{G}^{\prime}$ requires a $\mathcal{G}^{\prime}$%
-robust private equilibrium inducing the same allocation. Principal $j$'s
contract in such an equilibrium must have image exactly $L_j$: every $x_j\in
L_j$ is used by the allocation, while any available message inducing $%
x_j\notin L_j$ would be chosen by the agent and would change the allocation.
Moreover, for every $x_j\in L_j$ and every $y_j\in F_j(x_j)$, the surjection
uses the pair $(x_j,y_j)$ at some state. Since principal $j$'s continuation
action is a function of the message he receives, the contract must have at
least $|F_j(x_j)|$ distinct messages above each $x_j\in L_j$.
\end{proof}

\begin{lemma}
\label{lem:private_minimal_plain_current} Suppose $|\mathcal{J}|\geq 2$. If $%
\mathcal{G}^{\prime }$ is canonical for robust private equilibrium, then for
every principal $j$ and every non-empty $L_{j}\subseteq X_{j}$ with $%
L_{j}\cap \widehat{X}_{j}\neq \varnothing $, $\mathcal{G}_{j}^{\prime }$
contains a contract with image $L_{j}$ that has fewer than $|F_{j}(x_{j})|$
messages above some $x_{j}\in L_{j}$.
\end{lemma}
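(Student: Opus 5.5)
The approach is to mirror the lower-bound construction of Lemma \ref{lem:private_minimal_rich_current}, but reversed. Fix $j$ and a non-empty $L_j$ with some $\hat x_j\in L_j\cap\widehat X_j$, and pick two distinct $\hat y_j,\hat y_j'\in F_j(\hat x_j)$. I would build a payoff environment and a target allocation $z$ with three properties: (i) $z$ is supported by a $\mathcal{G}^{\mathrm{p}}$-robust private equilibrium in which principal $j$ offers the plain menu $B_j^{L_j}\in\mathcal{G}_j^\bullet$; (ii) in any robust private equilibrium inducing $z$, principal $j$'s contract has image exactly $L_j$; and (iii) no contract with image $L_j$ and at least $|F_j(x_j)|$ messages above every $x_j$ can support $z$ robustly. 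Canonicality of $\mathcal{G}'$ then forces $\mathcal{G}_j'$ to contain a contract with image $L_j$ that is coarse above some $x_j$, which is the claim.

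\textbf{Image step.} For property (ii) I would reuse the "$8$ versus $0/1$" device. Outside $L_j$ the agent gets $8$ and every principal gets $-8$. Inside $L_j$, types are mapped by a surjection $\varphi_j:\Theta\to L_j$, so that every $x_j\in L_j$ is used on path. The same argument as in Lemma \ref{lem:private_minimal_rich_current} then pins down the image: every $x_j\in L_j$ must appear, and any message inducing some $x_j\notin L_j$ would be chosen by the agent and change the allocation.

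\textbf{Conflict step.} Property (iii) must come from a second principal $k\neq j$, which is where $|\mathcal{J}|\geq 2$ and privacy enter. I would let principal $j$'s payoff after $\hat x_j$ make $\hat y_j$ his unique best reply under the equilibrium belief, while $\hat y_j'$ is optimal under some other belief. I would give principal $k$ a potential deviation $G_k'$ whose payoff to $k$ exceeds the equilibrium payoff exactly when principal $j$ plays $\hat y_j'$ after $\hat x_j$. The agent's preferences would be aligned with $k$'s on that deviation path. Because principal $j$ does not observe $G_k'$, his strategy $\gamma_j$ and beliefs $p_j$ stay frozen after $k$'s deviation (Definition \ref{def:G-robust-private}).

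Under the plain menu, $j$ has a single message above $\hat x_j$, so $\gamma_j$ gives $\hat y_j$ after $\hat x_j$ and the agent cannot obtain $\hat y_j'$. The post-deviation profile in which the agent keeps sending $\hat x_j$ therefore deters $G_k'$. Property (i) would then follow as in Lemma \ref{lem:private_minimal_rich_current}, checking deviations by $j$ via the $8$ device and by $k$ via this frozen-$\gamma_j$ argument.

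\textbf{Main obstacle.} The hardest step is showing that a contract for $j$ with at least $|F_j(\hat x_j)|$ messages above $\hat x_j$ necessarily fails. Principal $j$ could in principle assign $\hat y_j$ to all of those messages and so mimic the plain menu. I would try to rule this out by making the target allocation itself use the recommendation dimension, as in Lemma \ref{lem:private_minimal_rich_current}. Specifically, at $\hat x_j$ I would split types so that one positive-mass group makes $\hat y_j'$ uniquely optimal for $j$ under its posterior, but $z$ prescribes $\hat y_j$ for all of them. With only one message above $\hat x_j$, pooling these types with the $\hat y_j$-types keeps $\hat y_j$ optimal. With distinct on-path messages, Bayesian consistency would reveal the group and force $\hat y_j'$. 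Off-path extra messages would then be exactly the options the agent exploits after $k$'s deviation, since $j$'s frozen continuation must choose $\hat y_j'$ on any message whose legitimate belief isolates those types. If this route fails, I would fall back on arguing directly that every contract in a canonical family supporting $z$ can be relabeled, via Lemma \ref{lem:private_relabeling_current}, only onto $B_j^{L_j}$, so some contract in $\mathcal{G}_j'$ must be coarse above $\hat x_j$.
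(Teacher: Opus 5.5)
\textbf{There is a genuine gap.} Your step (iii) is false in every environment, so the argument cannot work as designed.

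Every admissible contract $G_j$ with image $L_j$ satisfies $G_j\sqsupset_j B_j^{L_j}$. To see this, take $\iota_j=G_j$ itself, viewed as a surjection onto $L_j$, and use the admissibility selector as the right inverse. Lemma \ref{lem:private_relabeling_current} then replicates any private equilibrium in which $j$ offers the plain menu with the richer contract instead:
\begin{itemize}
\item The extra messages above $\hat x_j$ are off path.
\item Legitimate beliefs there are unrestricted, so $j$ simply copies the belief and the continuation action $\hat y_j$ of the single plain-menu message.
\item The agent therefore faces exactly the same options from $j$ as under $B_j^{L_j}$. This holds both on path and after any deviation by $k$, because $j$'s frozen $\gamma_j$ is the copied one.
\end{itemize}
So $k$'s deviation is deterred exactly as before, and robustness transfers.

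Your proposed repair does not escape this. Nothing forces the agent to use distinct on-path messages above $\hat x_j$; pooling on one message reproduces the plain menu's posterior. Nothing forces $j$'s off-path legitimate beliefs to ``isolate'' the group for which $\hat y_j'$ is optimal. Your fallback runs in the wrong direction: relabeling maps coarse contracts into fine ones. That is precisely why a coarse contract is never needed to support an allocation on path.

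The missing idea is that the plain menu is indispensable as a \emph{deviation}, not as an equilibrium contract. Its coarseness lets a deviating principal $j$ commit not to learn the state. Under private contracting, the state cannot be routed to him through the messages sent to other principals. Accordingly, the paper argues by contradiction and uses the inclusion $\mathcal{Z}^{\text{robust-private-}\mathcal{G}'\text{-}\mathcal{E}}\subseteq\mathcal{Z}^{\text{robust-private-}\mathcal{E}}$ rather than the reverse one.

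\begin{enumerate}
\item It assumes every contract in $\mathcal{G}_j'$ with image $L_j$ has at least two messages above $\widetilde x_j$.
\item It builds an environment with $u=v_k$ and two special states. In a separating allocation, $j$ gets payoff $2$ at $\theta^1$ and $1$ at $\theta^2$.
\item This allocation is $\mathcal{G}'$-robust. Any $j$-deviation with image $L_j$ still lets the agent separate the two states. Deviations with other images are deterred by the $\pm 100/\kappa$ payoffs and the auxiliary states.
\item Against the plain menu $B_j^{L_j}$, the pooled message makes $\widetilde y_j^1$ uniquely optimal for $j$. The agent and $k$ then coordinate on $a_k^1$, giving $j$ payoff $2$ in both states.
\item So $B_j^{L_j}$ is a safe profitable deviation. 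The allocation is not an unrestricted robust private equilibrium allocation, which contradicts canonicality of $\mathcal{G}'$.
\end{enumerate}

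In a corrected version of your argument, the second principal should play the role of the party whose coordinated response to $j$'s pooled continuation makes $j$'s coarse deviation profitable. It should not be the deviator.
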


\begin{proof}
Suppose not. Then there are a principal $j$, a non-empty $L_j\subseteq X_j$
with $L_j\cap\widehat X_j\neq\varnothing$, and $\widetilde x_j\in
L_j\cap\widehat X_j$ such that every $\mathcal{G}_{j}^{\prime}$ contract
with image $L_j$ has at least $|F_j(x_j)|$ messages above every $x_j\in L_j$%
. Choose two distinct $\widetilde y_j^1,\widetilde y_j^2\in F_j(\widetilde
x_j)$.

Pick $k\neq j$. By Assumption \ref{assm:non-trivial} and Lemma \ref%
{lem:private_minimal_rich_current}, principal $k$ has a contract $\widetilde
G_k:\widetilde M_k\to X_k$ in $\mathcal{G}_{k}^{\prime}$ with two distinct
messages $\widetilde m_k^1,\widetilde m_k^2$ that can induce two distinct
feasible action pairs
\begin{equation*}
a_k^1=(\widetilde x_k^1,\widetilde y_k^1),\qquad a_k^2=(\widetilde
x_k^2,\widetilde y_k^2).
\end{equation*}
Principals other than $j$ and $k$ receive payoff zero at every outcome. Set $%
u=v_k$, so the agent and principal $k$ have the same preferences.

Choose two states $\theta^1,\theta^2$ and, for each $x\in
L_j\setminus\{\widetilde x_j\}$, an auxiliary state $\theta^x$. Put the
uniform prior on
\begin{equation*}
S:=\{\theta^1,\theta^2\}\cup \{\theta^x:x\in L_j\setminus\{\widetilde
x_j\}\}, \qquad \mu(\{\theta\})=\kappa>0\quad(\theta\in S),
\end{equation*}
and make all states outside $S$ payoff-irrelevant. Choose payoffs bounded in
absolute value by $100/\kappa$. At every state in $S$, if $x_j\notin L_j$,
set
\begin{equation*}
u(x,y,\theta)=-v_j(x,y,\theta)=\frac{100}{\kappa}.
\end{equation*}
At each auxiliary state $\theta^x$, for $x_j\in L_j$, set
\begin{equation*}
u(x,y,\theta^x)=v_k(x,y,\theta^x)=v_j(x,y,\theta^x)=
\begin{cases}
\frac{100}{\kappa}, & \text{if }x_j=x, \\
-\frac{100}{\kappa}, & \text{if }x_j\neq x.%
\end{cases}%
\end{equation*}
At $\theta^1$ and $\theta^2$, any outcome outside
\begin{equation*}
\{(\widetilde x_j,\widetilde y_j^1), (\widetilde x_j,\widetilde y_j^2)\}
\times\{a_k^1,a_k^2\}
\end{equation*}
gives the agent, principal $k$, and principal $j$ payoff $-100/\kappa$. On
the four relevant outcomes, payoffs are:
\begin{equation*}
\begin{array}{c|cc}
\multicolumn{3}{c}{\theta=\theta^1} \\
& a_k^1 & a_k^2 \\ \hline
(\widetilde x_j,\widetilde y_j^1) & (2,2) & (0,0) \\
(\widetilde x_j,\widetilde y_j^2) & (0,-100/\kappa) & (0,-100/\kappa)%
\end{array}%
\end{equation*}
and
\begin{equation*}
\begin{array}{c|cc}
\multicolumn{3}{c}{\theta=\theta^2} \\
& a_k^1 & a_k^2 \\ \hline
(\widetilde x_j,\widetilde y_j^1) & (2,2) & (0,0) \\
(\widetilde x_j,\widetilde y_j^2) & (0,0) & (2,1).%
\end{array}%
\end{equation*}
In each entry, the first coordinate is the common payoff of the agent and
principal $k$, and the second coordinate is principal $j$'s payoff.

By Lemma \ref{lem:private_minimal_rich_current}, there is a contract $%
\widetilde G_j:\widetilde M_j\to X_j$ in $\mathcal{G}_{j}^{\prime}$ with
image $L_j$. By the maintained contradiction hypothesis, choose distinct
messages $\widetilde m_j^1,\widetilde m_j^2\in\widetilde M_j$ with
\begin{equation*}
\widetilde G_j(\widetilde m_j^1)= \widetilde G_j(\widetilde
m_j^2)=\widetilde x_j.
\end{equation*}
Consider the profile in which principal $j$ offers $\widetilde G_j$ and
principal $k$ offers $\widetilde G_k$. At $\theta^1$, the agent sends $%
(\widetilde m_j^1,\widetilde m_k^1)$ and the induced action pairs are $%
(\widetilde x_j,\widetilde y_j^1)$ and $a_k^1$. At $\theta^2$, the agent
sends $(\widetilde m_j^2,\widetilde m_k^2)$ and the induced action pairs are
$(\widetilde x_j,\widetilde y_j^2)$ and $a_k^2$. At each auxiliary state $%
\theta^x$, the agent sends a message to principal $j$ inducing contractible
action $x$.

This is a $\mathcal{G}^{\prime}$-robust private equilibrium. The agent and
principal $k$ attain their maximal feasible payoff at the special states.
Principal $j$ does not profit from a deviation whose image contains an
action outside $L_j$, because the agent can select such an action and give
principal $j$ payoff $-100/\kappa$. He does not profit from a deviation
whose image omits an action in $L_j$, because the corresponding auxiliary
state gives him the large negative payoff. If his deviation has image
exactly $L_j$, the maintained contradiction hypothesis supplies two distinct
messages above $\widetilde x_j$, so the same state-separating continuation
can be replicated. The continuation responses keep all non-deviating
principals' private objects fixed, as required under private contracting.
Deviations by other principals are deterred by the same payoff construction
or are payoff-irrelevant.

Now consider the unrestricted private-contracting problem. Principal $j$ can
deviate to the plain menu $B_j^{L_j}:L_j\to X_j$, $B_j^{L_j}(x_j)=x_j$. At
the auxiliary states, the large payoffs force the agent to choose the
corresponding $x\in L_j$. At the two special states, the agent must send the
single message $\widetilde x_j$ to principal $j$. Thus principal $j$ cannot
condition his non-contractible action on whether the state is $\theta^1$ or $%
\theta^2$. This is exactly the private-contracting restriction: the agent's
message to a non-deviating principal cannot be used to communicate the state
to principal $j$ after his private deviation. Given the pooled message $%
\widetilde x_j$, $\widetilde y_j^1$ is
the unique optimal continuation action for principal $j$: choosing $%
\widetilde y_j^2$ gives him the large negative payoff at $\theta^1$, while
the gain at $\theta^2$ is bounded. Given $\widetilde y_j^1$, the agent and
principal $k$ uniquely select the message inducing $a_k^1$ at both special
states. Hence every admissible post-deviation continuation after the plain
menu gives principal $j$ payoff $2$ at both $\theta^1$ and $\theta^2$, with
the same auxiliary-state payoffs as before. In the separating allocation
above, principal $j$ receives payoff $2$ at $\theta^1$ and payoff $1$ at $%
\theta^2$. Thus the plain menu is a safe profitable deviation from any
profile inducing the separating allocation, contradicting canonicality of $%
\mathcal{G}^{\prime}$.
\end{proof}

\begin{theo}
\label{thm:private_minimal_current} Suppose $|\Theta|\geq |Z|$, each $X_j$
and each $F_j(x_j)$ is finite, and $|\mathcal{J}|\geq 2$. Then $\mathcal{G}^{%
\mathrm{p}}$ is a minimal canonical contract space for robust private
equilibrium.
\end{theo}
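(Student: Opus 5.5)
Theorem \ref{thm:private_minimal_current} follows by combining canonicality with a cardinality lower bound. Theorem \ref{thm:private_canonical_current} already shows that $\mathcal{G}^{\mathrm{p}}$ is canonical for robust private equilibrium. So it suffices to show that every canonical contract space $\mathcal{G}^{\prime}$ for robust private equilibrium satisfies $|\mathcal{G}^{\prime}|\geq|\mathcal{G}^{\mathrm{p}}|$. Since $\mathcal{G}^{\prime}=\times_{j}\mathcal{G}_j^{\prime}$ and $\mathcal{G}^{\mathrm{p}}=\times_j\mathcal{G}_j^{\mathrm{p}}$, it is enough to prove $|\mathcal{G}_j^{\prime}|\geq|\mathcal{G}_j^{\mathrm{p}}|$ for each $j$. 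In the finite setting, $|\mathcal{G}_j^{\mathrm{p}}|=|\mathcal{G}_j^{\ast}|+|\mathcal{G}_j^{\bullet}|$. Here $\mathcal{G}_j^\ast$ is in bijection with the non-empty subsets $L_j\subseteq X_j$, and $\mathcal{G}_j^\bullet$ is in bijection with those non-empty $L_j$ that meet $\widehat X_j$. Finiteness makes $\mathcal{L}_j^\ast=2^{X_j}\setminus\{\varnothing\}$, and all these counts are finite, so the additivity of cardinalities under disjoint union is meaningful.

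I would construct an injection from $\mathcal{G}_j^{\mathrm{p}}$ into $\mathcal{G}_j^{\prime}$, built in two parts.
\begin{itemize}
\item For each non-empty $L_j$, Lemma \ref{lem:private_minimal_rich_current} gives a contract $R(L_j)\in\mathcal{G}_j^{\prime}$ with image $L_j$ and at least $|F_j(x_j)|$ messages above every $x_j\in L_j$. Map $\widehat G|_{M_j(L_j)}\mapsto R(L_j)$.
\item For each $L_j$ with $L_j\cap\widehat X_j\neq\varnothing$, Lemma \ref{lem:private_minimal_plain_current} gives a contract $P(L_j)\in\mathcal{G}_j^{\prime}$ with image $L_j$ and fewer than $|F_j(x_j)|$ messages above some $x_j\in L_j$. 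Map $B_j^{L_j}\mapsto P(L_j)$.
\end{itemize}

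Next I would check injectivity. Contracts with different images are distinct, so $R$ is injective and $P$ is injective. For a fixed $L_j$, $R(L_j)\neq P(L_j)$: $R(L_j)$ has at least $|F_j(x_j)|$ messages above every $x_j\in L_j$, while $P(L_j)$ fails this count at some $x_j$. Hence the images of $R$ and $P$ are disjoint. If $R(L_j)=P(L_j^{\prime})$, the two contracts have the same image, so $L_j=L_j^{\prime}$, which was just excluded. The combined map is therefore injective, which gives $|\mathcal{G}_j^{\prime}|\geq|\mathcal{G}_j^{\mathrm{p}}|$. Taking products over $j$ yields $|\mathcal{G}^{\prime}|\geq|\mathcal{G}^{\mathrm{p}}|$ and hence minimality.

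The only delicate point is the hypothesis of the two lemmas: they need $|\Theta|\geq|Z|$ and $|\mathcal{J}|\geq 2$, both assumed here. Lemma \ref{lem:private_minimal_plain_current} also invokes Assumption \ref{assm:non-trivial}, which is maintained throughout the paper. I would also note that $\mathcal{G}_j^{\mathrm{p}}$ is a genuine disjoint union, so no double counting arises. Beyond this the argument is bookkeeping. The substantive work sits in the two lemmas, which are taken as given.
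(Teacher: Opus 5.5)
Your proposal is correct and follows the paper's proof. Canonicality comes from Theorem \ref{thm:private_canonical_current}, and the lower bound comes from pairing each non-empty $L_j$ with the recommendation-rich contract of Lemma \ref{lem:private_minimal_rich_current} and each $L_j$ meeting $\widehat X_j$ with the recommendation-poor contract of Lemma \ref{lem:private_minimal_plain_current}. Your explicit injection, including the check that the rich and poor contracts for the same $L_j$ are distinct and the reduction to principal-by-principal counts, simply spells out bookkeeping the paper leaves implicit.
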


\begin{proof}
For each non-empty $L_j\subseteq X_j$, Lemma \ref%
{lem:private_minimal_rich_current} requires one recommendation-rich
contract, corresponding to the contract in $\mathcal{G}_j^\ast$ with image $%
L_j$. For every $L_j$ that intersects $\widehat X_j$, Lemma \ref%
{lem:private_minimal_plain_current} requires an additional
recommendation-poor contract, corresponding to the plain menu in $\mathcal{G}%
_j^\bullet$. If $L_j$ does not intersect $\widehat X_j$, the plain menu is
behaviorally identical to the unique $\mathcal{G}_j^\ast$ contract with
image $L_j$, so no additional contract is needed. Therefore every canonical
private contract space has cardinality at least $|\mathcal{G}^{\mathrm{p}}|$%
. Theorem \ref{thm:private_canonical_current} establishes that $\mathcal{G}^{%
\mathrm{p}}$ is canonical, so it is minimal.
\end{proof}

\section*{Implications}

The single-principal results and applications in Sections \ref{sec:appI} and %
\ref{sec:revisable} are unchanged under private contracting, because with
one principal there is no distinction between public and private
contracting. The common-agency application in Section \ref{sec:appII} also
remains valid under private contracting when deviations are evaluated with
the private canonical contract space $\mathcal{G}^{\mathrm{p}}$.

The multiple-principal result is where private contracting differs from
public contracting. Because other principals do not observe principal $j$'s
mechanism or the message sent to him, deviations to plain menus cannot be
evaluated by using other principals' public messages to restore missing
recommendations. The canonical private contract space therefore contains
both menus-with-recommendations and plain menus. In intrinsic common agency,
this private observability structure can also weaken the separability
assumptions needed in an application: if non-deviating principals' contracts
do not allow the agent to communicate principal $j$'s deviation, then
principal $j$ takes their contractible and non-contractible actions as fixed.

\end{document}